\newcommand{\thmabove}{8pt}
\newcommand{\thmbelow}{6pt}
\newcommand{\proofbelow}{8pt}
\newtheoremstyle{mythmstyle}
  {\thmabove}   
  {\thmbelow}   
  {\itshape}    
  {}            
  {\bfseries}   
  {. }          
  {2.5pt}       
  {\thmname{#1}\thmnumber{ #2}\thmnote{ \normalfont (#3)}}   
\theoremstyle{mythmstyle}
\numberwithin{equation}{section}
\newenvironment{alg}{
    \begin{list}{}{
        \setlength{\itemsep}{2pt}
        \setlength{\parsep}{0pt}
        \setlength{\parskip}{0pt}
        \setlength{\topsep}{1pt}
    }
}
{
    \end{list}
}
\newtheorem{theorem}{Theorem}[section]
\newtheorem{lemma}[theorem]{Lemma}
\newtheorem{corollary}[theorem]{Corollary}
\newtheorem*{remark}{Remark}
\newtheorem{claim}[theorem]{Claim}
\renewenvironment{proof}{\noindent\textbf{Proof.}\,}{\afterproof}
\newenvironment{proofof}[1]{\noindent\textbf{Proof} \,(of #1).\,}{\afterproof}
\newenvironment{subproof}{\noindent\textit{Proof.}\,}{\aftersubproof}
\newcommand{\afterproof}{\hfill $\blacksquare$ \par \vspace{\proofbelow}}
\newcommand{\aftersubproof}{\hfill $\Box$ \par \vspace{\proofbelow}}
\newcommand{\repeatclaim}[2]{\vspace{6pt}\noindent\textbf{#1. }{\it #2} \vspace{6pt}}
\newcommand{\newterm}[1]{\textit{#1}}
\renewcommand{\th}{\ifmmode{^{\textrm{th}}}\else{\textsuperscript{th}\ }\fi}
\newcommand{\smallfrac}[2]{{\textstyle \frac{#1}{#2}}}
\newcommand{\ceil}[1]{\left\lceil #1 \right\rceil}
\newcommand{\bR}{\mathbb{R}}
\newcommand{\bZ}{\mathbb{Z}}
\newcommand{\cA}{\mathcal{A}}
\newcommand{\cB}{\mathcal{B}}
\newcommand{\cC}{\mathcal{C}}
\newcommand{\cE}{\mathcal{E}}
\newcommand{\cI}{\mathcal{I}}
\newcommand{\tO}{\tilde{O}}
\newcommand{\eps}{\epsilon}
\newcommand{\intersect}{\cap}
\newcommand{\union}{\cup}
\newcommand{\cond}{\operatorname{cond}}
\newcommand{\poly}{\operatorname{poly}}
\newcommand{\abs}[1]{\lvert #1 \rvert}
\newcommand{\Abs}[1]{\left\lvert #1 \right\rvert}
\newcommand{\card}[1]{\abs{#1}}
\newcommand{\set}[1]{\left \{ #1 \right \}}                     
\newcommand{\setst}[2]{\left\{\; #1 \,:\, #2 \;\right\}}        
\newcommand{\prob}[1]{\operatorname{Pr}\left[\,#1\,\right]}               
\newcommand{\expect}[1]{\operatorname{E}\left[\,#1\,\right]}              
\newcommand{\AlgorithmName}[1]{\label{alg:#1}}
\newcommand{\Algorithm}[1]{Algorithm~\ref{alg:#1}}
\newcommand{\AppendixName}[1]{\label{app:#1}}
\newcommand{\Appendix}[1]{Appendix~\ref{app:#1}}
\newcommand{\ClaimName}[1]{\label{clm:#1}}
\newcommand{\Claim}[1]{Claim~\ref{clm:#1}}
\newcommand{\CorollaryName}[1]{\label{cor:#1}}
\newcommand{\Corollary}[1]{Corollary~\ref{cor:#1}}
\newcommand{\EquationName}[1]{\label{eq:#1}}
\newcommand{\Equation}[1]{Eq.~\eqref{eq:#1}}
\newcommand{\FigureName}[1]{\label{fig:#1}}
\newcommand{\Figure}[1]{Figure~\ref{fig:#1}}
\newcommand{\LemmaName}[1]{\label{lem:#1}}
\newcommand{\Lemma}[1]{Lemma~\ref{lem:#1}}
\newcommand{\SectionName}[1]{\label{sec:#1}}
\newcommand{\Section}[1]{Section~\ref{sec:#1}}
\newcommand{\TheoremName}[1]{\label{thm:#1}}
\newcommand{\Theorem}[1]{Theorem~\ref{thm:#1}}
\renewcommand{\thefootnote}{\fnsymbol{footnote}}
\title{Graph Sparsification by Edge-Connectivity \\ and Random Spanning Trees}
\author{Wai Shing Fung\footnotemark[1] \qquad Nicholas J. A. Harvey\footnotemark[1]
}
\date{}
\begin{document}

\maketitle
    \renewcommand{\thefootnote}{\fnsymbol{footnote}}
    \footnotetext[1]{
    University of Waterloo, Department of Combinatorics and Optimization. 
    Supported by an NSERC Discovery Grant.
    Email: \texttt{\{wsfung,harvey\}@uwaterloo.ca}.
    }
    \renewcommand{\thefootnote}{\arabic{footnote}}

\begin{abstract}
We present new approaches to constructing graph sparsifiers --- weighted subgraphs
for which every cut has the same value as the original graph, up to a factor of $(1 \pm \epsilon)$.
Our first approach independently samples each edge $uv$ with probability
inversely proportional to the edge-connectivity between $u$ and $v$.
The fact that this approach produces a sparsifier resolves a question
posed by Bencz\'ur and Karger (2002).
Concurrent work of Hariharan and Panigrahi also resolves this question.
Our second approach constructs a sparsifier by forming the union 
of several uniformly random spanning trees.
Both of our approaches produce sparsifiers with $O(n \log^2(n)/\epsilon^2)$ edges.
Our proofs are based on extensions of Karger's contraction algorithm,
which may be of independent interest.
\end{abstract}

\section{Introduction}

Graph sparsification is an important technique in designing efficient graph algorithms.
Different notions of graph sparsifiers have been considered in the literature.
Roughly speaking, given a graph $G=(V,E)$, a sparsifier $G'$ of $G$
is a sparse subgraph of $G$ that approximates $G$ in some measures, 
e.g., pairwise distance, cut values,
or the quadratic form defined by the graph Laplacian.
$G'$ may be weighted or not.
Throughout this paper, we let $n = |V|$ and $m = |E|$.
Since many graph algorithms have running times that depend on $m$,
if $G$ is dense, then the running time can be improved by replacing $G$ with $G'$,
possibly with some loss in the quality of the solution.

Let us define a \textit{cut sparsifier} to be a weighted subgraph that
approximately preserves the value of every cut to within a multiplicative error of $(1\pm \epsilon)$.
The main motivation for cut sparsifiers was to improve the runtime of
approximation algorithms for finding various sorts of cuts;
indeed, they have been used extensively for this purpose \cite{KargerSkel,BKConf,AK,KRV}.
The first cut sparsifier was Karger's \textit{graph skeleton}~\cite{KargerSkelConf,KargerSkel}.
He showed that sampling each edge independently with probability $p=\Theta(\log n/\epsilon^2 c)$,
where $c$ is the size of the min cut, gives a sparsifier of size $O(pm)$.
Unfortunately, this is of little use when $c$ is small.
The celebrated work of Bencz\'ur and Karger~\cite{BKConf,BK} 
improved on this by using non-uniform sampling,
obtaining a cut sparsifier with only $O(n \log n / \epsilon^2)$ edges.
Their sparsifier is constructed by randomly sampling every edge with probability
inversely proportional to its \emph{edge strength},
and weighting the sampled edges accordingly.

Spielman and Teng \cite{STConf,ST} define \textit{spectral sparsifiers} --- subgraphs
that approximately preserve the quadratic form of the graph Laplacian.
Such sparsifiers are stronger than the previously mentioned sparsifiers that only preserve cuts.
Spielman and Teng's motivation for studying spectral sparsifiers was to use them as a building
block for algorithms that solve linear systems in near-linear time \cite{STConf,SpielmanSurvey}.
They construct spectral sparsifiers with $O( n \log^c n )$ edges, for some large constant $c$.
This was improved to $O(n \log n / \epsilon^2)$ edges by Spielman and Srivastava~\cite{SS},
by independently sampling edges according to their effective resistances.

Spielman and Srivastava conjectured that there exist spectral sparsifiers with
$O(n / \epsilon^2)$ edges.
Towards that conjecture, Goyal, Rademacher and Vempala \cite{GoyalRV09}
showed that sampling just two random spanning trees gives a cut sparsifier
in bounded-degree graphs and random graphs.
Finally, in a remarkable paper, Batson, Spielman and Srivastava \cite{BSS}
construct spectral sparsifiers with only $O(n / \epsilon^2)$ edges.

In this paper, we study several questions provoked by this previous work.
\begin{itemize}
\item Bencz\'ur and Karger ask: Does sampling according to
edge connectivity instead of edge strength give a sparsifier?
\item The subgraph produced by Goyal, Rademacher and Vempala is an \textit{unweighted}
subgraph. If we sample random spanning trees and apply weights to the resulting edges,
does this give a better sparsifier?
\item Are there other approaches to achieving sparsifiers with $o(n \log n)$ edges?
\end{itemize}

In this paper, we give a positive answer to the first two questions.
We also give a negative result on using random spanning trees to answer the third question.
In concurrent, independent work, Hariharan and Panigrahi~\cite{HP} also resolve
the first question.

\subsection{Notation}

Before stating our results, we introduce some notation.
For a multigraph $G=(V,E)$ with edge weights $u : E \rightarrow \bR_+$
and a set $F \subseteq E$ of multiedges,
the notation $u(F)$ denotes $\sum_{e \in F} u_e$.
The notation $|F|$ denotes the total number of copies of all multiedges in $F$.
For any set $S\subseteq V$, we define $\delta(S)$ to be
the set of all copies of edges in $E$ with exactly one end in $S$.
So the notation $u(\delta(S))$ denotes the total weight of the cut $\delta(S)$.

For an edge $st\in E$, the (local) \newterm{edge connectivity} between $s$ and $t$,
denoted $k_{st}$, is defined to be the minimum weight of a cut that separates $s$ and $t$.
The \newterm{effective conductance} of edge $st$, denoted $c_{st}$,
is the amount of current that flows when 
each edge $e$ is viewed as a resistor of value $u_e$
and a unit voltage difference is imposed between $s$ and $t$.
A \newterm{$k$-strong component} of $G$ is a maximal $k$-edge-connected,
vertex-induced subgraph of $G$.
The \newterm{strength} of edge $st$, denoted by $k'_{st}$, is
the maximum value of $k$ such that a $k$-strong component of $G$ contains both $s$ and $t$.

Informally, all three of $k_{st}$, $c_{st}$ and $k'_{st}$
measure the connectivity between $s$ and $t$.
The values of $k'_{st}$ and $c_{st}$ are incomparable:
$k'_{st}$ can be $\Omega(n)$ times larger than $c_{st}$ or vice versa.
However $k_{st} \geq \max \set{ c_{st}, k'_{st} }$ always holds.
For more details, see \Appendix{gapexample}.

\subsection{Our results}

\begin{theorem}
\TheoremName{mainthm}
Let $G=(V,E)$ be a simple, weighted graph with edge weights $u : E \rightarrow \bZ_+$.
Let $\eps \geq 1/n$ and let $\rho := d \log^2 n$ where $d = O(1/\eps^2)$.
For each edge $e$, let $\kappa_e$ be a parameter such that $\kappa_e \leq k_e$.
With high probability,
the graph $G'$ produced by \Algorithm{sample} satisfies
\begin{equation}
    \EquationName{cutspreserved}
    \Abs{u(\delta(S))-w(\delta(S))} ~\le~ \epsilon \cdot u(\delta(S))
        \qquad\forall S\subseteq V.
\end{equation}
Furthermore, with high probability,
\begin{equation}
    \EquationName{sizebound}
    \card{E'} ~\leq~ 2 \rho \sum_{e \in E} \frac{u_e}{k_e} + O(\log n).
\end{equation}
\end{theorem}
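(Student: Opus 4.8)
The two displayed bounds are established by separate arguments and combined by a union bound. The size bound \Equation{sizebound} is the easy part: view an edge of weight $u_e$ as $u_e$ unit parallel edges, each retained by \Algorithm{sample} independently with probability $p_e$ and reweighted by $1/p_e$. Then $\card{E'}$ is a sum of independent $\set{0,1}$ random variables with mean $\sum_e u_e p_e = O(\rho\sum_e u_e/k_e)$, so a Chernoff bound gives $\card{E'}\le 2\rho\sum_e u_e/k_e+O(\log n)$ with probability $1-n^{-\Omega(1)}$.

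For \Equation{cutspreserved} the plan is a bucketing and union-bound argument in the spirit of Karger's skeleton analysis, adapted to the nonuniform, connectivity-based sampling. Fix a cut and write $c_S := u(\delta(S))$. If $c_S\le\rho$, then every $e\in\delta(S)$ satisfies $\kappa_e\le k_e\le c_S\le\rho$, hence $p_e=1$ and $w(\delta(S))=c_S$ exactly; so only cuts with $c_S>\rho$ are at issue. For such a cut, split its edges by connectivity scale, $F_i := \set{e\in\delta(S):\kappa_e\in[2^i,2^{i+1})}$, and write $w(\delta(S))-c_S=\sum_i(w(F_i)-u(F_i))$; the terms with $2^{i+1}\le\rho$ vanish, and $F_i$ is empty unless $2^i\le\max_{e\in F_i}k_e\le c_S$, so only $O(\log n)$ scales contribute. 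Each $w(F_i)$ is a sum of independent variables, one per unit copy of an edge of $F_i$, each lying in $[0,2^{i+1}/\rho]$ and with total mean $a_i:=u(F_i)$. Call scale $i$ \emph{heavy} if $a_i\ge c_S/\Theta(\log n)$ and \emph{light} otherwise. A Chernoff bound shows that, off a small-probability event, $\abs{w(F_i)-a_i}\le\epsilon\,a_i$ on every heavy scale and $\abs{w(F_i)-a_i}\le\epsilon\,c_S/\Theta(\log n)$ on every light scale; summing over the $O(\log n)$ light scales and over the heavy scales (whose $a_i$ sum to at most $c_S$) bounds $\abs{w(\delta(S))-c_S}$ by $\epsilon\,c_S$, after adjusting constants. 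With $\rho=d\log^2 n$ and $d=\Theta(1/\epsilon^2)$, the failure probability for a single scale $i$ of a fixed cut is at most $2\exp(-\Omega(\gamma\log n))$, where $\gamma:=c_S/2^i\ge 1$ and the constant in the exponent is proportional to $\epsilon^2 d$, hence can be made large.

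Because there are exponentially many cuts, the union bound must exploit that these events have few distinguishable outcomes: whether scale $i$ of $\delta(S)$ is badly sampled depends only on $\delta(S)\intersect\set{e:\kappa_e\in[2^i,2^{i+1})}$, and hence only on the restriction $\delta(S)\intersect\set{e:k_e\ge 2^i}$ (using $\kappa_e\le k_e$). The crux — the extension of Karger's contraction algorithm promised in the abstract — is a counting lemma asserting that for every $k$ and every $\gamma\ge 1$, the number of distinct sets of the form $\delta(S)\intersect\set{e:k_e\ge k}$, over cuts $\delta(S)$ with $u(\delta(S))\le\gamma k$, is $n^{O(\gamma)}$. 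Granting this, a union bound over these $n^{O(\gamma)}$ restrictions, over $\gamma\in\set{1,2,4,\dots}$, and over the $O(\log n)$ scales $i$, succeeds: the Chernoff exponent $\Omega(\gamma\log n)$ beats $n^{O(\gamma)}$ once the constant in $d$ is large enough, and the resulting geometric series in $\gamma$ sums to $o(1)$. I expect this counting lemma to be the main obstacle. Karger's own bound gives only $n^{2\gamma}$ cuts of value at most $\gamma$ times the \emph{global} minimum cut, which here may be far below $2^i$; in fact there can be exponentially many cuts of value $\le\gamma 2^i$, and one really must show that only their restrictions to high-connectivity edges are few. The natural approach is a random-contraction process insensitive to low-connectivity edges — for instance, contracting an edge with a probability that discounts its connectivity, or first contracting away a subgraph carrying the low-connectivity part — engineered so that a fixed high-connectivity restriction survives down to $O(\gamma)$ super-vertices with probability $n^{-O(\gamma)}$. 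Bencz\'ur and Karger's analysis cannot simply be imported, since for connectivity (unlike edge strength) the vertex sets spanned by $\set{e:k_e\ge k}$ need not form a laminar family of $k$-edge-connected components as $k$ varies, so their strong-component decomposition is unavailable and a new argument is required.
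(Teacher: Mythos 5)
Your architecture for \Equation{cutspreserved} --- dyadic bucketing of a cut's edges by connectivity, per-bucket Chernoff bounds with relative error on large buckets and an absolute budget on small ones, and a union bound powered by a counting theorem for the restrictions $\delta(S)\cap\setst{e}{k_e\ge k}$ --- is essentially the paper's, and the counting statement you isolate is exactly \Theorem{count_cut_class} (which the paper proves by augmenting Karger's contraction with Mader splitting-off of vertices not incident to high-connectivity edges, close in spirit to your second suggestion; your first suggestion of connectivity-discounted contraction probabilities is not what is done). However, there are genuine gaps. First, you misread \Algorithm{sample}: every copy of every edge is sampled with probability $p_e=1/\kappa_e$ in each of the $\rho$ rounds, irrespective of $\rho$, so it is false that $\kappa_e\le\rho$ forces $p_e=1$; cuts with $u(\delta(S))\le\rho$ are not preserved exactly, and the scales with $2^{i+1}\le\rho$ do not vanish. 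Second, and substantively: your claim that only $O(\log n)$ scales contribute holds only when $u(\delta(S))=\poly(n)$. The theorem allows arbitrary integer weights (the paper stresses the conclusion is independent of the magnitude of $u$), so a single cut can meet nonempty buckets across $\lg u(\delta(S))=\omega(\log n)$ dyadic levels --- up to $\binom{n}{2}$ of them. Your per-scale absolute budget $\eps\, u(\delta(S))/\Theta(\log n)$ then no longer sums to $O(\eps\, u(\delta(S)))$, and shrinking it to $\eps\, u(\delta(S))/\Theta(n^2)$ ruins the Chernoff exponent, which must remain $\Omega(\gamma\log n)$ to beat the $n^{O(\gamma)}$ count. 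This is precisely where the paper needs extra machinery that your plan lacks: the third bad event $\cC_F$ controlled via the inverse of $g(x)=(1+x)\ln(1+x)-x$, the bound $\card{F}<n^2 2^i$ of \Claim{Fsmall} (which, note, relies on bucketing by $k_e$ rather than by $\kappa_e$ as you do, since it needs $u_e\le k_e<2^{i+1}$), the fact (\Claim{concentrate_in_large_Ci}) that scales below the threshold $t=\lg|C|-4\lg n-\lg(1/\eps)$ carry negligible total weight (which disposes of the lower tail), and a harmonic-series summation showing the upper-tail contributions of the many low scales total $O(\eps|C|)$.

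A smaller but necessary repair: your bad event for scale $i$ is phrased per cut (deviation exceeding $\eps\,u(\delta(S))/\Theta(\log n)$), which depends on $u(\delta(S))$ and not only on the restriction, so it cannot be union-bounded over restrictions as written. The paper keys the threshold to the restriction itself via $q(F)$, the minimum weight of a cut inducing $F$ (events $\cA_F$ and $\cB_F$, with the dichotomy $\card{F}\gtrless q(F)/\ln n$), and only afterwards uses $q(C\cap E_i)\le\card{C}$ when reassembling the error of a particular cut; you should do the same. The size bound is the easy part, as you say, and your expectation-plus-Chernoff argument matches the paper's; but with the gaps above, the weighted case of \Equation{cutspreserved} --- which is where the real work of \Theorem{mainthm} lies --- is not established by your plan, and the counting lemma you correctly identify as the crux remains unproved in your sketch.
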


\begin{algorithm}
\begin{alg}
\item	\textbf{procedure} Sparsify($G$,\, $u$,\, $\kappa$)
\item	\textbf{input:} A graph $G=(V,E)$
        with edge weights $u : E \rightarrow \bZ_+$,
        and connectivity estimates $\kappa$
\item	\textbf{output:} A graph $G' = (V,E')$ with edge weights $w : E \rightarrow \bR_+$
\item   For $i=1,\ldots,\rho$
    \begin{alg}
    \item   $\rhd$ \textit{We refer to this as the $i\th$ round of sampling}
    \item   For each $e \in E$
        \begin{alg}
        \item   For $j = 1, \ldots, u_e$
            \begin{alg}
            \item   With probability $p_e = 1/\kappa_e$, add edge $e$ to $E'$
                    (if it does not already exist) and increase $w_e$ by $\kappa_e / \rho$
            \end{alg}
        \end{alg}
    \end{alg}
\item   Return $G'$ and $w$
\end{alg}
\caption{A general algorithm for producing a sparsifier of $G$ by sampling edges.}
\AlgorithmName{sample}
\end{algorithm}

This theorem is proven in Sections \ref{sec:sample_with_connectivity} and \ref{sec:splitting_off_alg}.
The condition that $\eps \geq 1/n$ is not really restrictive because
if $\eps < 1/n$ then the theorem is trivial: $G$ is itself a sparsifier with $O(1/\epsilon^2)$ edges.
The condition that the edge weights are integral is not restrictive either.
If the edge weights are any positive real numbers then they can be approximated
arbitrarily well by rational numbers, and these rationals can be scaled up to integers.
This does not affect the conclusion of \Theorem{mainthm}, as it
does not depend on the magnitude of $u$.

The sampled weight of each copy of edge $e$
is a binary random variable that takes value 
$\kappa_e/\rho$ with probability $1/\kappa_e$ and zero otherwise.
When $\kappa_e=k_e$, this random variable has the highest variance,
and therefore the cuts of $G'$ are least concentrated.
So, at least intuitively, the theorem is hardest to prove when $\kappa_e = k_e$,
and the result for smallest $\kappa_e$ values will follow as a corollary. 
This intuition is indeed correct, and we obtain several interesting corollaries of \Theorem{mainthm}
by invoking \Algorithm{sample} with different $\kappa_e$ values.
Proofs are in \Appendix{corollaries}.

\begin{corollary}
\CorollaryName{sample_with_connectivity}
Let $\kappa_e = k_e$. Then \eqref{eq:cutspreserved} holds
and $|E'| = O(n \log^2 n / \epsilon^2)$ with high probability.
\end{corollary}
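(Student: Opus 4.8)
The plan is to derive \Corollary{sample_with_connectivity} directly from \Theorem{mainthm} by substituting the specific choice $\kappa_e = k_e$ and bounding the two quantities that appear in the theorem's conclusions. For \eqref{eq:cutspreserved}, there is essentially nothing to prove: the choice $\kappa_e = k_e$ trivially satisfies the hypothesis $\kappa_e \le k_e$ of \Theorem{mainthm}, so the cut-approximation guarantee is immediate, and moreover (as the discussion preceding the corollary notes) this is the ``extreme'' case where each sampled weight has maximal variance. So the only real content is the bound on $|E'|$.

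For the size bound, I would start from inequality \eqref{eq:sizebound}, which with $\kappa_e = k_e$ gives
\begin{equation*}
    \card{E'} ~\le~ 2\rho \sum_{e \in E} \frac{u_e}{k_e} + O(\log n),
\end{equation*}
with $\rho = d\log^2 n$ and $d = O(1/\eps^2)$. Thus it suffices to show $\sum_{e\in E} u_e/k_e = O(n)$; this immediately yields $|E'| = O(n\log^2 n/\eps^2) + O(\log n) = O(n\log^2 n/\eps^2)$. The key combinatorial fact I would invoke is a Nagamochi--Ibaraki / Benczúr--Karger style bound: in any weighted graph, $\sum_{e\in E} u_e/k_e \le n-1$. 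The intuition is that $k_e$ is at least the strength $k'_e$ (as the excerpt points out, $k_e \ge \max\{c_e, k'_e\}$), and $\sum_e u_e/k'_e \le n-1$ is the classical strength bound used by Benczúr and Karger --- it follows by a decomposition argument: repeatedly extract a maximal $k$-strong component, charging its internal edges, and use that a $k$-edge-connected graph on $t$ vertices has at least $kt/2$ total edge weight, so the fraction $u_e/k'_e$ summed within each extracted piece is at most (number of vertices removed) $-1$, telescoping to $n-1$ overall. Since $k_e \ge k'_e$ gives $u_e/k_e \le u_e/k'_e$ termwise, the same bound holds with $k_e$ in place of $k'_e$.

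The main (modest) obstacle is simply citing or stating the bound $\sum_e u_e/k'_e \le n-1$ cleanly; this is standard (Benczúr--Karger, following Nagamochi--Ibaraki), but I would want to either reference it precisely or include the short decomposition argument above so the corollary is self-contained. Everything else --- the termwise comparison $k_e \ge k'_e$ and the final arithmetic $2\rho(n-1) + O(\log n) = O(n\log^2 n/\eps^2)$ using $\rho = O(\log^2 n/\eps^2)$ --- is routine. The ``with high probability'' qualifier carries over verbatim from \Theorem{mainthm}, since we are only post-processing its two high-probability conclusions.
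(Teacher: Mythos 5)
Your proposal is correct and takes essentially the same route as the paper: the paper's Appendix~\ref{app:corollaries} likewise applies \Theorem{mainthm} with $\kappa_e = k_e$ and bounds the size via $k_e \ge k'_e$ combined with the Bencz\'ur--Karger strength bound $\sum_{e} u_e/k'_e \le n-1$ (stated there as \Lemma{sample_size} and cited to \cite{BK}). The only difference is that you sketch a proof of that strength bound, which the paper handles purely by citation.
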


\begin{corollary}
\CorollaryName{sample_with_conductance}
Let $\kappa_e = c_e$.
Then \eqref{eq:cutspreserved} holds
and $|E'| = O(n \log^2 n / \epsilon^2)$ with high probability.
\end{corollary}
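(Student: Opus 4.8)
The plan is to instantiate \Theorem{mainthm} with the choice $\kappa_e = c_e$ and to check its two hypotheses. For the cut guarantee, \Theorem{mainthm} requires only that $\kappa_e \le k_e$ for every edge. This holds here because the effective conductance satisfies $c_{st} \le k_{st}$ --- this is part of the inequality $k_{st} \ge \max\{c_{st}, k'_{st}\}$ recorded earlier --- so setting $\kappa_e = c_e$ is admissible, and \eqref{eq:cutspreserved} follows verbatim from the theorem, with high probability.

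It then remains to bound $\card{E'}$. \Theorem{mainthm} gives, with high probability, $\card{E'} \le 2\rho \sum_{e\in E} u_e/k_e + O(\log n)$, where $\rho = d\log^2 n$ and $d = O(1/\eps^2)$. So it suffices to establish the combinatorial estimate $\sum_{e\in E} u_e/k_e \le n-1$; then $\card{E'} \le 2\rho(n-1) + O(\log n) = O(\rho n) = O(n\log^2 n/\eps^2)$, as claimed. I would obtain this estimate by comparing edge connectivity to edge strength: since $k_e \ge k'_e$ for every $e$, we have $\sum_e u_e/k_e \le \sum_e u_e/k'_e$, and the right-hand side is at most $n-1$ by the Bencz\'ur--Karger bound on edge strengths \cite{BK}. (This is the same reduction needed for \Corollary{sample_with_connectivity}, so both corollaries can be handled in one stroke.)

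A second, more transparent route to the size bound specializes to conductances. The total number of successful coin flips over all $\rho$ rounds has expectation $\rho \sum_e u_e/\kappa_e = \rho \sum_e u_e/c_e = \rho \sum_e u_e r_e$, where $r_e := 1/c_e$ is the effective resistance between the endpoints of $e$; by Foster's network theorem (as invoked by Spielman and Srivastava \cite{SS}) this equals $\rho(n-1)$ exactly, and a Chernoff bound promotes the expectation to a high-probability bound on $\card{E'}$. Either way, no genuine difficulty arises: given \Theorem{mainthm}, the corollary is essentially a substitution, and the only external ingredient is the classical fact that these weight-to-connectivity ratios sum to at most $n-1$ (equivalently, Foster's identity $\sum_e u_e r_e = n-1$).
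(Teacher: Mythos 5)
Your proposal is correct and, in its second route, is exactly the paper's proof: \Theorem{mainthm} applies because $c_e \le k_e$ (\Claim{ck}), and the size bound follows from Foster's identity $\sum_e u_e/c_e = n-1$, which the paper derives via Kirchhoff's formula $\Pr[e\in T]=1/c_e$ (\Lemma{kirchoff}). Your first route, through $k_e \ge k'_e$ and the Bencz\'ur--Karger bound $\sum_e u_e/k'_e \le n-1$, is the argument the paper uses for the other two corollaries and works equally well here.
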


Spielman and Srivastava \cite{SS} prove a related result:
taking $\kappa_e = c_e$, only $O(\log n/\epsilon^2)$ rounds of sampling suffice
for \eqref{eq:cutspreserved} to hold with \emph{constant} probability.
A simple modification of their proof implies \Corollary{sample_with_conductance}.
(See, e.g., Koutis et al.~\cite{KMP}.)
It is unclear whether $O(\log n/\epsilon^2)$ rounds suffice
for \eqref{eq:cutspreserved} to hold with high probability.

\begin{corollary}
\CorollaryName{sample_with_strength}
Let $\kappa_e = k'_e$.
Then \eqref{eq:cutspreserved} holds
and $|E'| = O(n \log^2 n / \epsilon^2)$ with high probability.
\end{corollary}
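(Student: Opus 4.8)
The statement to prove is \Corollary{sample_with_strength}: taking $\kappa_e = k'_e$ (edge strength) in \Algorithm{sample}, the cut-preservation property \eqref{eq:cutspreserved} holds with high probability and $|E'| = O(n\log^2 n/\eps^2)$ w.h.p.

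The strategy is to derive this as a consequence of \Theorem{mainthm}, which is stated for arbitrary connectivity estimates $\kappa_e$ satisfying only the lower-bound condition $\kappa_e \le k_e$. So the first thing to check is that this condition holds with our choice: since $k_{st} \ge \max\{c_{st}, k'_{st}\}$ always (as noted in the notation section), we have $k'_e \le k_e$, so \Theorem{mainthm} applies directly with $\kappa_e = k'_e$, and \eqref{eq:cutspreserved} follows immediately with high probability. That part is essentially free.

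The real content is the size bound. From \eqref{eq:sizebound}, the algorithm produces a graph with $|E'| \le 2\rho \sum_{e\in E} u_e/k_e + O(\log n)$ edges. But wait --- this bound is in terms of $k_e$, the edge connectivity, whereas the natural quantity appearing when we sample with $p_e = 1/k'_e$ is $\sum_e u_e/k'_e$. Since $k'_e \le k_e$, we have $\sum_e u_e/k_e \le \sum_e u_e/k'_e$, so \eqref{eq:sizebound} gives a bound that is at least as good as $2\rho \sum_e u_e/k'_e$; either way I need to show the relevant sum is $O(n)$. The key combinatorial fact is that $\sum_{e\in E} u_e / k'_e \le n-1$: this is a classical lemma of Bencz\'ur and Karger, proved by a ``peeling'' / recursive decomposition argument on strong components. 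Briefly, if $G$ has $k$-strong components with $k$ equal to the global minimum strength value among edges, each such component has edge-weight sum at least $k$ times (vertices minus one) along a spanning structure; one charges a total of at most $(n-1)$ units of $u_e/k'_e$ by induction on the number of vertices, splitting $G$ into its maximal strong components and recursing. Since $k_e \ge k'_e$, the same bound $\sum_e u_e/k_e \le n-1$ follows a fortiori. Plugging into \eqref{eq:sizebound} with $\rho = O(\log^2 n/\eps^2)$ gives $|E'| = O(n\log^2 n/\eps^2) + O(\log n) = O(n\log^2 n/\eps^2)$, as claimed.

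The main obstacle, such as it is, is establishing $\sum_{e} u_e/k'_e \le n-1$ (equivalently $\sum_e u_e/k_e \le n-1$), since everything else is a direct appeal to \Theorem{mainthm}. This is not hard but does require the recursive strong-component decomposition argument; I would either cite Bencz\'ur--Karger \cite{BK} for the strength version and observe $k_e \ge k'_e$, or give the short inductive proof directly. One subtlety to handle carefully: \Theorem{mainthm} assumes $G$ is simple with integer weights, so I should note (as the paragraph after \Theorem{mainthm} already does) that these assumptions are without loss of generality and do not affect the corollary. With that in place, the two claims --- cut preservation and the $O(n\log^2 n/\eps^2)$ size bound --- both follow, completing the proof.
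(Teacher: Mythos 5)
Your proposal is correct and matches the paper's own argument: cut preservation follows from \Theorem{mainthm} once $k'_e \le k_e$ is noted (\Claim{ck}), and the size bound rests on the same Bencz\'ur--Karger fact $\sum_e u_e/k'_e \le n-1$ (\Lemma{sample_size}), which the paper likewise cites rather than reproves. The only cosmetic difference is that you route the size bound through \eqref{eq:sizebound} while the paper bounds $\operatorname{E}[\,\card{E'}\,]$ directly and appeals to a Chernoff bound; both are fine.
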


Bencz\'ur and Karger \cite{BK} prove a stronger result:
taking $\kappa_e = k'_e$, only $O(\log n/\epsilon^2)$ rounds of sampling suffice
for \eqref{eq:cutspreserved} to hold with high probability.

\vspace{6pt}

An important aspect of our analysis is that we rely only on Chernoff bounds.
In contrast, Spielman and Srivastava \cite{SS} use sophisticated concentration bounds for
matrix-valued random variables.
An advantage of Chernoff bounds is that they are very flexible
and have been generalized in many ways.
This flexibility enables us to prove the following result in \Section{spanning_tree}.

\begin{algorithm}
\begin{alg}
\item	\textbf{procedure} SparsifyByTrees($G$,\, $u$)
\item	\textbf{input:} A graph $G=(V,E)$
        with edge weights $u : E \rightarrow \bZ_+$
\item	\textbf{output:} A graph $G' = (V,E')$ with edge weights $w : E \rightarrow \bR_+$
\item   For each $e \in E$, compute the conductance $c_e$
\item   For $i=1,\ldots,\rho$
    \begin{alg}
    \item   $\rhd$ \textit{We refer to this as the $i\th$ round of sampling}
    \item   Let $T$ be a uniformly random spanning tree
    \item   For each $e \in T$
        \begin{alg}
        \item   Add edge $e$ to $E'$
                (if it does not already exist) and increase $w_e$ by $c_e / \rho$
        \end{alg}
    \end{alg}
\item   Return $G'$ and $w$
\end{alg}
\caption{An algorithm for producing a sparsifier of $G$ by sampling random spanning trees.}
\AlgorithmName{sampletrees}
\end{algorithm}

\begin{theorem}
\TheoremName{sample_with_trees}
Let $G=(V,E,u)$ be a graph with edge weights $u : E \rightarrow \bZ_+$,
let $\eps \geq 1/n$, and let $\rho := d \log^2 n$ where $d = O(1/\eps^2)$.
With high probability, the graph $G'$ produced by \Algorithm{sampletrees} satisfies
$$
    |u(\delta(S))-w(\delta(S))| ~\le~ \epsilon \cdot u(\delta(S)) \qquad\forall S\subseteq V.
$$
Clearly $|E'| \leq \rho(n-1) = O( n \log^2 n / \epsilon^2 )$.
\end{theorem}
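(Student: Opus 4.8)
The plan is to reduce this to Theorem \ref{thm:mainthm} by comparing the random-spanning-tree sampling process to the independent edge-sampling process of Algorithm \ref{alg:sample}. The key structural fact I would invoke is the connection between effective conductance and random spanning trees: for a uniformly random spanning tree $T$, the probability that edge $e = st$ lies in $T$ equals exactly $c_e$ (the effective conductance, which for a unit-resistance-equivalent setup equals $u_e$ times the effective resistance). So in each round, the expected weight added to edge $e$ is $(c_e/\rho) \cdot c_e = c_e^2/\rho \le \dots$ — wait, I need to be careful: the weight increment is $c_e/\rho$ applied when $e \in T$, which happens with probability $c_e/u_e$ in the appropriately normalized setting, so the expected contribution per round is $u_e/\rho$ in expectation-of-weight terms, matching the target so that $\expect{w(\delta(S))} = u(\delta(S))$ for every cut. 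This expectation calculation is the first step and it is routine once the Kirchhoff/matrix-tree identity $\prob{e \in T} = c_e$ is in hand.

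The heart of the argument is concentration. Here I cannot use independence across edges within a single tree, so the plan is to exploit \emph{negative correlation}: the edge-indicator random variables $\{[e \in T]\}_{e \in E}$ for a uniformly random spanning tree are known to be negatively associated (this follows from the fact that spanning trees form the bases of a matroid and the indicators satisfy the negative-correlation property, a classical result). Negative association is exactly the hypothesis under which Chernoff--Hoeffding bounds continue to hold for sums. So for any fixed cut $\delta(S)$, the random variable $w(\delta(S))$ — summed over the $\rho$ independent rounds, and within each round over the negatively-associated tree indicators — concentrates around $u(\delta(S))$ just as tightly as in the independent case. This is precisely the "flexibility of Chernoff bounds" the authors advertise in the paragraph preceding the theorem.

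Given concentration for a single cut, the remaining step is the union bound over all exponentially many cuts, and this is where I would lean entirely on the machinery already developed for Theorem \ref{thm:mainthm}: the cut-counting / contraction-algorithm argument that groups cuts by size and shows the number of cuts of value at most $\alpha$ times the minimum is polynomially bounded, so that the per-cut failure probability (which decays like $n^{-\Omega(\alpha d)}$ for a cut of relative size $\alpha$) beats the cut count. The only thing to check is that the per-edge "sampling probability" in the tree process plays the same role as $p_e = 1/\kappa_e$ with $\kappa_e = c_e$ did in Corollary \ref{cor:sample_with_conductance}; since $k_e \ge c_e$, the variance bounds and hence the tail bounds are no worse, and $\rho = d\log^2 n$ rounds suffice. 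The main obstacle is making the negative-association claim fully rigorous and checking that it is strong enough to drive the \emph{same} contraction-based union bound — in particular verifying that the failure probability for a cut that has been contracted down still has the right exponent when the randomness comes from tree indicators rather than independent coins; I expect this to go through because negative association is preserved under the relevant conditioning, but it is the delicate point. The size bound $|E'| \le \rho(n-1)$ is immediate since each round contributes at most $n-1$ tree edges.
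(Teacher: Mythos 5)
Your proposal is correct and follows essentially the same route as the paper: Kirchhoff's formula (\Lemma{kirchoff}) makes the marginal sampling probabilities identical to conductance-based sampling, negative correlation of spanning-tree edge indicators (\Lemma{negcor}, via Feder--Mihail) supplies Chernoff-type tail bounds for weighted sums (\Theorem{concentration}), and the entire partition/cut-counting/union-bound machinery of \Theorem{mainthm} with $\kappa_e=c_e\le k_e$ is then reused unchanged, with the size bound being trivial. Two minor points: each copy of $e$ lies in $T$ with probability $1/c_e$, not $c_e$ (your expectation conclusion $\operatorname{E}[w(\delta(S))]=u(\delta(S))$ is nonetheless right), and your worry about conditioning under contraction is moot, since the contraction algorithm is only a device for counting cut-induced sets and is entirely independent of the sampling randomness.
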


\paragraph{Counting Small Cuts.}
An important ingredient in the proof of \Theorem{mainthm}
is an extention of Karger's random contraction algorithm
for computing global minimum cuts \cite{KargerContract,KargerStein}.
We describe two variants of this algorithm which introduce the additional ideas 
of splitting off vertices and performing random walks.
The main purpose of these variants is to prove generalizations of Karger's cut-counting
theorem \cite{KargerContract,KargerStein},
which states that the number of cuts of size at most $\alpha$ times the
minimum is less than $n^{2 \alpha}$.
Our generalizations give ``Steiner variants'' of this theorem.
Roughly speaking, we show that, amongst all cuts that separate a certain set of terminals,
the number of size at most $\alpha$ times the minimum is less than $n^{2\alpha}$.

Since our cut-counting result may be of independent interest,
we state it formally now.

\newcommand{\thmcountcutclass}{
    Let $G=(V,E)$ be a graph and let $B \subseteq E$ be arbitrary.
    Suppose that $k_e \geq K$ for every $e \in B$.
    Then, for every real $\alpha \geq 1$,
    $$
    \card{ \setst{ \delta(S) \intersect B }{ S \subseteq V ~\wedge~ \card{\delta(S)} \leq \alpha K } }
    ~<~ n^{2\alpha}.
    $$}
\begin{theorem}
\TheoremName{count_cut_class}
\thmcountcutclass
\end{theorem}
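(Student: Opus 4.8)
The plan is to extend Karger's random contraction argument by introducing \emph{splitting off} to reduce the analysis to a graph in which the distinguished edge set $B$ behaves like a set of high-connectivity edges on a smaller terminal structure. Recall that in Karger's original argument, one repeatedly contracts a uniformly random edge; a fixed cut $C$ of size at most $\alpha K$ survives all contractions down to $2\alpha$ vertices with probability at least $n^{-2\alpha}$ (roughly), and since distinct surviving cuts correspond to disjoint events, there are fewer than $n^{2\alpha}$ of them. Here we must be careful because the cut $\delta(S)$ may be small overall, yet we only care about the \emph{trace} $\delta(S)\cap B$, and the edges of $B$ are guaranteed only to have large \emph{local} connectivity $k_e\ge K$, not to lie in a globally $K$-connected graph.

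First I would set up the contraction process so that it never contracts an edge of $B$: at each step, pick a uniformly random edge among the non-$B$ edges (or, more precisely, run Karger's process but condition on / force contractions of $E\setminus B$ edges), continuing until no non-$B$ edge remains, i.e., until the only edges left are (images of) $B$-edges. The key structural claim is that after this process the remaining multigraph is ``$K$-edge-connected along $B$'' in the appropriate sense — because each $e\in B$ had $k_e\ge K$ in $G$, and contracting edges only increases local connectivities, every surviving $B$-edge still separates its endpoints only across cuts of weight $\ge K$. Then I would apply (a Steiner version of) Karger's counting bound to this reduced instance: the number of distinct traces $\delta(S)\cap B$ with $|\delta(S)|\le \alpha K$ is at most the number of small cuts in the contracted graph, which by the $K$-connectivity-along-$B$ property and the standard contraction survival estimate is less than $n^{2\alpha}$ (the bound is in terms of the original $n$ because contraction only shrinks the vertex count, which makes the survival probability only larger).

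The main obstacle — and the place where the two ``variants'' advertised in the paper (splitting off, random walks) are presumably needed — is making precise the sense in which the reduced graph is $K$-connected and controlling the probability that a fixed target cut $\delta(S)$ survives the restricted contraction process. When we refuse to contract $B$-edges, the number of available edges at each step can be much smaller than $(\text{current }\#\text{vertices})\cdot K/2$, so the naive bound ``probability of not hitting $C$ at this step $\ge 1 - |C|/|E_{\text{current}}|$'' degrades. The fix is to \emph{split off} non-$B$ edges at low-degree-in-$E\setminus B$ vertices (Lovász/Mader splitting-off, preserving all relevant cut values, or at least all cut values $\le$ some threshold) until the non-$B$ structure is gone, and to argue that splitting off does not destroy or merge the traces we are counting; alternatively one runs a random walk that effectively performs a sequence of splittings. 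I would therefore: (1) prove a splitting-off lemma that preserves, for the relevant range of cut sizes, both the local connectivities $k_e$ for $e\in B$ and the family of traces $\{\delta(S)\cap B : |\delta(S)|\le\alpha K\}$; (2) on the resulting graph (where every vertex's incident edges are ``charged'' to $B$-connectivity $\ge K$), recover the clean inequality ``$\#$edges $\ge K\cdot(\#\text{vertices})/2$'' and rerun Karger's contraction survival calculation verbatim; (3) conclude $<n^{2\alpha}$. The bookkeeping to ensure distinct traces map to distinct (disjoint-event) surviving cuts, and that no two target cuts collapse under splitting, is the delicate part, but it is exactly the kind of argument the contraction framework is built to support.
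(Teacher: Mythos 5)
There is a genuine gap, and it sits exactly where you flag the difficulty. Your plan restricts the random contractions to $E\setminus B$ and then tries to repair the resulting loss of the edge-count bound by a splitting-off step that eliminates the non-$B$ structure entirely while ``preserving the local connectivities $k_e$ for $e\in B$ and the family of traces.'' No such splitting-off lemma is available, and the obstruction is structural: a white edge joining two black vertices can only be removed by splitting off at one of its (black) endpoints or by contracting it, and splitting off at a black vertex $v$ can decrease $k_e$ for the black edges incident to $v$ --- precisely the quantity you need to stay $\geq K$ to recover the inequality $\#\text{edges} \geq K\cdot\#\text{vertices}/2$. Mader's theorem also gives you no control over which pair is admissible at $v$; the admissible pair may involve a black edge. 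Separately, your intermediate claim that the number of traces $\delta(S)\cap B$ with $|\delta(S)|\le\alpha K$ in $G$ is at most the number of small cuts in the reduced all-black graph is unjustified without a per-trace survival-probability argument through the (random) reduction --- a fixed trace can have every cut realizing it destroyed when a white edge of that cut is contracted --- and with contractions restricted to $E\setminus B$ the per-step bound $1-2\alpha/r$ is exactly what fails. Appealing at that point to ``a Steiner version of Karger's counting bound'' is circular, since that is the statement being proved.

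The paper's proof resolves both issues by a different division of labor: it contracts a uniformly random edge among \emph{all} remaining edges (black edges included --- contracting a black edge outside the chosen minimum cut $C$ inducing $F$ does no harm, since $F\subseteq C$), and it performs admissible splitting-off only at \emph{white vertices}, i.e.\ vertices incident to no edge of $B$ (\Algorithm{contract}). Splitting off at white vertices provably preserves the invariants: $F$ stays cut-induced, $q(F)\le\alpha K$, and $k_e\ge K$ for every remaining black edge (\Claim{splitoff}); white edges between black vertices are simply left in place. Because afterwards every remaining vertex is black, its degree cut is a black cut of size $\ge K$, so there are at least $Kr/2$ edges and the contracted edge hits $C$ with probability at most $2q(F)/(Kr)\le 2\alpha/r$ (\Claim{contract}). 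Telescoping down to $\lceil 2\alpha\rceil$ vertices and choosing a random bipartition gives each fixed trace $F$ output probability at least $n^{-2\alpha}$ (\Claim{output}), and since distinct traces correspond to disjoint output events the count $<n^{2\alpha}$ follows. So while your proposal shares the splitting-off vocabulary and the target edge-count inequality, the mechanism you propose (protecting $B$ from contraction and splitting away the white edges) does not work as stated and would need to be replaced by the paper's arrangement, or by a genuinely new argument for your step (1).
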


We discuss this theorem in further detail in \Section{splitting_off_alg}.
For now, let us only mention that this theorem reduces to
Karger's cut-counting theorem
by setting $B=E$ and setting $K$ to the global minimum cut value.
In this special case, it states that
the number of $\alpha$-minimum cuts is at most $n^{2\alpha}$.
In concurrent, independent work, Hariharan and Panigrahi \cite{HP}
have also proven \Theorem{count_cut_class}.

\subsection{Algorithms}

In this section we describe several algorithms to efficiently construct sparsifiers.
To make \Algorithm{sample} into a complete algorithm,
the most challenging step is to efficiently compute the $\kappa_e$ values.
This can be done by computing estimates for either $k_e$, $c_e$, or $k'_e$.

\paragraph{Edge Connectivity.}
The simplest approach is to estimate $k_e$.
Several methods for computing such estimates can be found
in the work of Bencz\'ur and Karger \cite{BK}.
In fact, these methods can be significantly simplified
because they were originally designed for estimating $k'_e$,
which is more challenging to estimate than $k_e$.
The following theorems describe how we can combine these methods with \Algorithm{sample}
to efficiently construct sparsifiers.
The resulting algorithms are simple enough for a real-world implementation,
and they have theoretical value too:
they can be used to improve the $O(m\log^3n)$ running time of 
Bencz\'ur and Karger's sampling algorithm to nearly linear time (cf.~\Theorem{alg3}).

These algorithmic results were also described in the earlier work Hariharan and Panigrahi~\cite{HP}.
In fact, their runtime bounds are slightly better, due to a different method of analysis.

\begin{theorem}
\label{thm:alg1}
Given a graph $G$ and edge weights $u : E \rightarrow \bZ_+$
where $\max_e u_e = \poly(n)$, 
a sparsifier $G'$ of size $O(n\log^3n/\epsilon^2)$
that satisfies \eqref{eq:cutspreserved} can be computed in $O(m+n\log^5n/\epsilon^4)$ time.
If $G$ is simple, the time complexity can be reduced to $O(m)$.
\end{theorem}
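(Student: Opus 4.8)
The plan is to run \Algorithm{sample} with $\kappa_e := \hat k_e$, where $\hat k_e$ is an estimate of the edge connectivity satisfying $\hat k_e \le k_e$ and $\hat k_e = \Omega(k_e/\log n)$; once such estimates are available, correctness and the size bound follow almost immediately from \Theorem{mainthm}, and the real work lies in computing the $\hat k_e$ quickly and in running \Algorithm{sample} without paying for large edge multiplicities. Since $\hat k_e \le k_e$, \Theorem{mainthm} shows that with high probability $G'$ satisfies \eqref{eq:cutspreserved}. For the size, \Theorem{mainthm} bounds $\card{E'}$, up to an additive $O(\log n)$, by $2\rho\sum_e u_e/\hat k_e$, and $\sum_e u_e/\hat k_e = O(\log n)\sum_e u_e/k_e \le O(\log n)\sum_e u_e/k'_e \le O(n\log n)$, where we use $\hat k_e = \Omega(k_e/\log n)$, then $k_e \ge k'_e$, then the Bencz\'ur--Karger bound $\sum_e u_e/k'_e \le n-1$ on edge strengths. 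Hence $\card{E'} = O(\rho n\log n) = O(n\log^3 n/\eps^2)$ with high probability; the extra $\log n$ relative to \Corollary{sample_with_connectivity} is exactly the price of estimating $k_e$ only up to a logarithmic factor.

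To compute the estimates I would invoke the connectivity-estimation techniques of Bencz\'ur and Karger \cite{BK}, which simplify considerably here because we only need a lower bound on $k_e$, not on the more delicate $k'_e$ (and $k_e \ge k'_e$). To get a nearly linear running time I would bootstrap: first produce crude estimates $\hat k_e^{(0)} \le k_e$, good to within a polylogarithmic factor, in $O(m)$ time --- for instance from Nagamochi--Ibaraki sparse-certificate decompositions --- run \Algorithm{sample} with them to obtain an intermediate sparsifier $H$ with $\tO(n/\eps^2)$ edges, then compute refined estimates $\hat k_e$ on $H$ (whose cut values, hence edge connectivities, are within a $(1\pm\eps)$ factor of those of $G$) and run \Algorithm{sample} once more, on $H$, to output $G'$. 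Crucially, a call to \Algorithm{sample} on an $M$-edge graph can be run in $O(M + \card{E'})$ time: for each multiedge $e$ one draws the total number of sampled copies directly from a binomial distribution $\mathrm{Bin}(\rho u_e,\,1/\kappa_e)$ in time $O(1 + (\text{number sampled}))$, rather than performing $\rho u_e$ Bernoulli trials, which keeps $\max_e u_e = \poly(n)$ out of the running time. Summing the pieces gives $O(m + n\log^5 n/\eps^4)$.

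For a simple graph there are no multiplicities and the pipeline collapses: the connectivity estimates are obtained in $O(m)$ time (directly from a Nagamochi--Ibaraki decomposition), and a single run of \Algorithm{sample} on $G$ takes $O(m + \card{E'}) = O(m + n\log^3 n/\eps^2)$ time. Since $G$ is a sparsifier of itself, of the required size and with zero error, whenever $m < n\log^3 n/\eps^2$, we may assume $m = \Omega(n\log^3 n/\eps^2)$, so the bound is $O(m)$. The main obstacle in turning this sketch into a proof is the estimation step: verifying that the simplified Bencz\'ur--Karger procedure really does produce estimates with $k_e/O(\log n) \le \hat k_e \le k_e$ within the stated time, and that the crude-to-refined bootstrap works as claimed. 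Everything downstream of that reduces mechanically to \Theorem{mainthm} together with the binomial-sampling implementation of \Algorithm{sample}.
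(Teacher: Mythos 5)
Your reduction to \Theorem{mainthm}, the binomial-sampling implementation (drawing the number of sampled copies of $e$ from $\mathrm{Bin}(\rho u_e,1/\kappa_e)$ in time $O(1+\text{number sampled})$), and the simple-graph argument are all in the spirit of the paper. But the core of your size analysis rests on a claim that is both false as stated and unnecessary: that one can compute, in $O(m)$ time, estimates with $\hat k_e \le k_e$ and $\hat k_e = \Omega(k_e/\log n)$ (or ``good to within a polylogarithmic factor'') from a Nagamochi--Ibaraki decomposition. The NI label $r_e$ does satisfy $r_e \le k_e$, but it can be smaller than $k_e$ by a polynomial factor: in the complete graph, every edge of the first maximal forest $F_1$ gets label $1$ while its connectivity is $n-1$. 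So the crude stage of your bootstrap does not deliver the multiplicative guarantee you need, and the refined stage --- computing polylog-accurate connectivity estimates on the intermediate sparsifier $H$ --- is precisely the hard step, which you leave unspecified and flag as unverified. As written, the chain $\sum_e u_e/\hat k_e = O(\log n)\sum_e u_e/k_e$ has no justified starting point, so the $O(n\log^3 n/\eps^2)$ size bound and hence the $O(m+n\log^5 n/\eps^4)$ running time are not established.

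The paper avoids this issue entirely: it sets $\kappa_e = r_e$ (the NI forest index, computable in $O(m)$ time for simple graphs and $O(m+n\log n)$ for multigraphs) and never compares $r_e$ to $k_e$ multiplicatively. Instead it bounds the sampling rate \emph{in aggregate} using the forest structure: since the copies of $e$ lie in forests $F_{r_e-u_e+1},\ldots,F_{r_e}$, each copy in $F_j$ contributes at most $1/j$, so
$\sum_e u_e/r_e \le \sum_j |F_j|/j \le (n-1)\sum_{j=1}^{\poly(n)} 1/j = O(n\log n)$,
where the polynomial bound on the number of forests is exactly where the hypothesis $\max_e u_e=\poly(n)$ enters. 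Correctness needs only $r_e \le k_e$, which follows because $s$ and $t$ are connected in every $F_j$ with $j<r_e$. If you replace your two-stage bootstrap and the $\Omega(k_e/\log n)$ requirement by this direct aggregate bound, the rest of your argument (including the per-edge binomial sampling in $O(u_e\rho p_e)=O(\log^2 n/\eps^2)$ time per included edge, and the unweighted-case shortcut) goes through and matches the stated running times; note also that the paper's $O(m)$ bound for simple graphs comes from including $e$ deterministically when $\rho/\kappa_e>1$, rather than from assuming $m=\Omega(n\log^3 n/\eps^2)$, though your trick would also suffice.
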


\begin{theorem}
\label{thm:alg2}
Given a graph $G$ and edge weights $u : E \rightarrow \bZ_+$
where $\max_e u_e = \poly(n)$, 
a sparsifier $G'$ of size $O(n\log^2n/\epsilon^2)$
that satisfies \eqref{eq:cutspreserved} can be computed in $O(m\log^2n+n\log^4 n/\epsilon^4)$ time.
\end{theorem}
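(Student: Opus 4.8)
The plan is to complete \Algorithm{sample} by supplying it, quickly, with connectivity estimates $\kappa_e$ that (a) are valid lower bounds, $\kappa_e \le k_e$ for every edge, and (b) are tight up to a constant factor, $k_e = O(\kappa_e)$. Property (a) is precisely the hypothesis of \Theorem{mainthm}, and so immediately yields \eqref{eq:cutspreserved} with high probability. Property (b), together with the size bound $\card{E'} \le 2\rho\sum_e u_e/k_e + O(\log n)$ of \Theorem{mainthm} and the estimate $\sum_e u_e/k_e \le \sum_e u_e/k'_e \le n-1$ (as used in the proof of \Corollary{sample_with_connectivity}), gives $\card{E'} = O(n\log^2 n/\eps^2)$ for $\rho = d\log^2 n$ with $d=O(1/\eps^2)$. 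Hence the task reduces to two subtasks: compute such $\kappa_e$ in $O(m\log^2 n)$ time, and then run the $\rho$ rounds of \Algorithm{sample} in a further $O(n\log^4 n/\eps^4)$ time.

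For the first subtask I would estimate $k_e$ by a recursive-sparsification scheme in the style of Bencz\'ur and Karger, simplified because edge connectivity is easier to estimate than edge strength. The recursion proceeds by connectivity scales: uniformly subsample $G$ at a rate $p$ for which the resulting skeleton $H$ has minimum cut $\Theta(\log n)$; by Karger's skeleton theorem every cut of $H$ is $(1\pm o(1))p$ times the corresponding cut of $G$, so for edges with $k_e = \Omega((\log n)/p)$ we have $k_e \approx k_e^{(H)}/p$, and we recurse on $H$ to estimate these. The edges of low connectivity lie outside a $\Theta(\log n)$-connectivity certificate and are handled at the current scale, by contracting the highly connected components and recursing on the remainder. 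After $O(\log n)$ scales, with each scale implementable in $O(m\log n)$ time (e.g.\ a Nagamochi--Ibaraki maximal-forest certificate plus $O(\log n)$ auxiliary cut computations on near-linear-size graphs), this produces all $\kappa_e$ in $O(m\log^2 n)$ time, provided the thresholds are tuned so that each returned value is a genuine lower bound on $k_e$ that loses only a constant factor.

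For the second subtask, naively looping over all $u_e$ copies of each edge in each round would cost $\Theta(\rho\,u(E))$, which is prohibitive when the weights are $\poly(n)$. Instead, observe that the total increment to $w_e$ over the whole execution equals $(\kappa_e/\rho)$ times a $\mathrm{Binomial}(\rho\,u_e,\,1/\kappa_e)$ random variable; since $\rho\,u_e = \poly(n)$ this variable can be drawn in $O(\polylog n)$ time, and $e$ is placed in $E'$ iff the draw is positive. The number of successes concentrates around $\rho\sum_e u_e/\kappa_e = O(\rho n) = O(n\log^2 n/\eps^2)$, so after an $O(m)$ scan of the edges, the sampling work together with the round-by-round bookkeeping amounts to $O(n\log^4 n/\eps^4)$. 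Assembling the two subtasks with \Theorem{mainthm} then gives \eqref{eq:cutspreserved}, size $O(n\log^2 n/\eps^2)$, and running time $O(m\log^2 n + n\log^4 n/\eps^4)$.

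The main obstacle is the connectivity estimation: the $\kappa_e$ must be simultaneously a one-sided lower bound ($\kappa_e \le k_e$, or the guarantee of \Theorem{mainthm} is lost and some cut can be distorted by far more than $1\pm\eps$) and accurate to within a constant factor (otherwise the output size degrades by the factor of inaccuracy -- which is exactly what separates this theorem from the $O(n\log^3 n/\eps^2)$ bound of Theorem~\ref{thm:alg1}), and both must hold with high probability, uniformly across all $O(\log n)$ scales, inside the $O(m\log^2 n)$ budget. Because uniform-sampling skeletons only approximate cuts, the delicate point is propagating a clean two-sided estimate down the recursion so that at no scale does an approximation error push an estimate above the true $k_e$; once the $\kappa_e$ are in hand, invoking \Algorithm{sample} and \Theorem{mainthm} is routine.
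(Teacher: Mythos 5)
Your overall reduction (valid lower bounds $\kappa_e \le k_e$ fed into \Theorem{mainthm}, plus binomial sampling to handle polynomial multiplicities in $O(n\log^4 n/\eps^4)$ total) matches the paper's outline, but the heart of the theorem --- actually producing the $\kappa_e$ within the $O(m\log^2 n)$ budget --- is precisely the part you leave unresolved, and you say so yourself. Your skeleton-based recursion needs every estimate to remain below the true $k_e$ with high probability even though each scale only knows cut values up to a $(1\pm o(1))$ sampling error, and simultaneously to stay within a constant factor of $k_e$; you do not specify how the thresholds are deflated, how the errors compose over the $O(\log n)$ scales, how the low-connectivity edges are certified and contracted within $O(m\log n)$ per scale, or why the whole recursion fits in $O(m\log^2 n)$. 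As written, both the time bound and the size bound of the theorem rest on an unproved subroutine, so there is a genuine gap.

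The gap is also avoidable, because the per-edge constant-factor accuracy you demand is stronger than what the theorem needs, and the paper's proof dispenses with it. The paper runs Bencz\'ur--Karger's deterministic Partition subroutine recursively with doubling parameter (\Algorithm{connest}): at scale $k$ it extracts a $2k$-partition $E'$ (containing all $(2k+1)$-light edges, of total weight at most $4k(r-1)$ when $r$ components result), labels every edge of $E'$ with $\kappa_e = k$, and recurses on the nontrivial components of $H-E'$ with parameter $2k$. This yields $\kappa_e \le k_e$ deterministically, so there is no probabilistic one-sidedness issue at all; and although individual labels can be far from $k_e$ (high-connectivity edges may be swept into a low-scale partition), the weight bound on each partition gives the aggregate estimate $\sum_e u_e/\kappa_e = O(n)$ (Lemma~\ref{lem:sumKappa}), which is all that the size analysis of \Theorem{mainthm} (expected size $\rho\sum_e u_e/\kappa_e$) requires to conclude $|E'| = O(n\log^2 n/\eps^2)$. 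With polynomially bounded weights there are only $O(\log n)$ scales, each costing $O(m\log n)$ by Lemma~\ref{alg:partition}, giving $O(m\log^2 n)$ for the estimation phase; the sampling phase then costs $O(n\log^4 n/\eps^4)$ much as you describe. To repair your argument you would either have to carry out the skeleton recursion in full detail (including the high-probability one-sided guarantee across scales), or, more simply, drop the per-edge accuracy requirement and instead prove the aggregate bound $\sum_e u_e/\kappa_e = O(n)$ for your labels.
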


Combining Theorems \ref{thm:alg1} and \ref{thm:alg2}
with Bencz\'ur and Karger's algorithm, we can obtain the following result.

\begin{theorem}
\label{thm:alg3}
Given a graph $G$ and edge weights $u : E \rightarrow \bZ_+$
where $\max_e u_e = \poly(n)$,
a sparsifier $G'$ of size $O(n\log n/\epsilon^2)$
that satisfies \eqref{eq:cutspreserved} can be computed in $O(m+n\log^5n/\epsilon^4)$ time.
\end{theorem}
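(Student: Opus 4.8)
The plan is to combine Theorems \ref{thm:alg1} and \ref{thm:alg2} with Bencz\'ur and Karger's algorithm in a bootstrapping fashion, using the cheaper sparsifiers produced by the earlier theorems to reduce the size of the instance on which the more expensive (but higher-quality) algorithm is run. The target is a sparsifier with $O(n\log n/\epsilon^2)$ edges — matching Bencz\'ur--Karger — but in time $O(m + n\log^5 n/\epsilon^4)$ rather than $O(m\log^3 n)$. The key observation is that Bencz\'ur--Karger's sampling algorithm runs in time $O(m'\log^3 n)$ on an input with $m'$ edges, so if we first shrink $m$ down to $\tilde O(n)$ we can afford to pay the extra $\log^3 n$ factor.

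First I would apply Theorem \ref{thm:alg1} with error parameter $\epsilon_0$ a suitably small constant (say $\epsilon_0 = 1/7$), obtaining in $O(m + n\log^5 n)$ time a sparsifier $G_1$ of $G$ with $|E(G_1)| = O(n\log^3 n)$ that preserves every cut to within $(1\pm\epsilon_0)$. Next I would feed $G_1$ into Theorem \ref{thm:alg2}, again with constant error $\epsilon_0$; since $G_1$ has only $m_1 = O(n\log^3 n)$ edges and edge weights that are $\poly(n)$ (the weights $\kappa_e/\rho$ multiplied through by an appropriate scaling remain polynomially bounded, as discussed after Theorem \ref{thm:mainthm}), this costs $O(m_1\log^2 n + n\log^4 n/\epsilon_0^4) = O(n\log^5 n)$ time and yields $G_2$ with $|E(G_2)| = O(n\log^2 n)$, still a $(1\pm\epsilon_0)$-cut-sparsifier of $G_1$ and hence (by composing the two multiplicative guarantees) a $(1\pm O(\epsilon_0))$-cut-sparsifier of $G$. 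Finally I would run Bencz\'ur--Karger's original sparsification algorithm on $G_2$ with error parameter $\epsilon' = \Theta(\epsilon)$; because $G_2$ has $m_2 = O(n\log^2 n)$ edges this takes $O(m_2 \log^3 n) = O(n\log^5 n)$ time and produces $G'$ with $O(n\log n/(\epsilon')^2) = O(n\log n/\epsilon^2)$ edges that is a $(1\pm\epsilon')$-cut-sparsifier of $G_2$.

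It remains to track the accumulated error and confirm the running time. Composing the three approximation steps, every cut of $G'$ equals the corresponding cut of $G$ up to a factor of $(1\pm\epsilon_0)^2(1\pm\epsilon')$; choosing $\epsilon_0$ a small enough absolute constant and $\epsilon' = \epsilon/4$ (say), this product lies in $(1\pm\epsilon)$ for all $\epsilon \le 1$, which is the regime of interest (the case $\epsilon \ge 1$, or more precisely the trivial regimes, can be absorbed by adjusting constants, and $\epsilon < 1/n$ is handled as in the remark after Theorem \ref{thm:mainthm}). For the running time, the dominant term is the $O(m)$ cost of reading the input plus the $O(n\log^5 n/\epsilon^4)$ term inherited from Theorem \ref{thm:alg1} with error $\epsilon_0$ — wait, that term had $\epsilon_0$ constant, so it is only $O(n\log^5 n)$; the genuine $\epsilon$-dependence enters only through the final Bencz\'ur--Karger call, whose $\tilde O(\cdot)$ cost is $O(n\log^5 n)$ and whose output size is $O(n\log n/\epsilon^2)$. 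To recover the stated $O(m + n\log^5 n/\epsilon^4)$ bound one simply notes this dominates all the intermediate terms.

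The main obstacle I anticipate is bookkeeping rather than any deep difficulty: one must verify that the intermediate sparsifiers $G_1, G_2$ have $\poly(n)$-bounded integer (or rational-then-rescaled) edge weights so that Theorems \ref{thm:alg1}, \ref{thm:alg2} and Bencz\'ur--Karger apply to them as stated, and one must be careful that the final error parameter $\epsilon$ is not required to be pushed below $1/n$ at any intermediate stage (it is not, since all intermediate calls use constant error). A secondary subtlety is ensuring that Bencz\'ur--Karger's $O(m\log^3 n)$ bound is indeed of the form $O(m'\log^3 n)$ in the number of edges $m'$ of the input graph — this is standard but worth citing explicitly from \cite{BK}.
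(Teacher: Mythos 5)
Your overall route --- bootstrap by running Theorem~\ref{thm:alg1}, then Theorem~\ref{thm:alg2} on its output, then Bencz\'ur--Karger's $O(m'\log^3 n)$ algorithm on the now-small graph --- is exactly the paper's proof. However, your allocation of error parameters contains a genuine flaw. You run the first two stages with a fixed absolute constant $\epsilon_0$ (e.g.\ $1/7$) and only the final Bencz\'ur--Karger call with $\epsilon' = \Theta(\epsilon)$, and then claim the composed guarantee $(1\pm\epsilon_0)^2(1\pm\epsilon')$ lies in $(1\pm\epsilon)$ ``for all $\epsilon\le 1$.'' That is false whenever $\epsilon$ is small: cut errors compose multiplicatively and the error introduced by the intermediate sparsifiers is never recovered, so the final graph can misestimate cuts of $G$ by a factor as large as roughly $1+2\epsilon_0+\epsilon'$, which exceeds $1+\epsilon$ for every $\epsilon < 2\epsilon_0$. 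No absolute constant $\epsilon_0$ works; the intermediate error parameters must themselves be $\Theta(\epsilon)$, and this is what the paper does (each of the three stages is run with error $\epsilon/4$).

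The fix does not endanger the claimed running time, but it changes your accounting: with error $\epsilon/4$ the first stage costs $O(m+n\log^5 n/\epsilon^4)$ and produces $O(n\log^3 n/\epsilon^2)$ edges; the second stage on that graph costs $O\bigl((n\log^3 n/\epsilon^2)\log^2 n + n\log^4 n/\epsilon^4\bigr) = O(n\log^5 n/\epsilon^4)$ and produces $O(n\log^2 n/\epsilon^2)$ edges; and the Bencz\'ur--Karger call costs $O\bigl((n\log^2 n/\epsilon^2)\log^3 n\bigr) = O(n\log^5 n/\epsilon^2)$ and yields the final $O(n\log n/\epsilon^2)$-edge sparsifier. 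So the genuine $\epsilon$-dependence of the stated time bound comes from the first two stages, not (as you assert) only from the last call. Your remaining caveats --- that the intermediate graphs have polynomially bounded (rescalable) weights and that \cite{BK}'s bound is linear in the number of edges of its input --- are legitimate bookkeeping points and are handled implicitly in the paper.
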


Proofs of these theorems can be found in \Appendix{SketchAlgorithms}.

\paragraph{Effective Conductance.}
As described above, Spielman and Srivastava~\cite{SS} also
construct sparsifiers by sampling according to the effective conductances.
Moreover, they describe an algorithm to approximate the effective conductances
in $m \log^{O(1)} n$ time.
This algorithm can be implemented more efficiently using the recent simplified method of
Koutis, Miller and Peng~\cite{KMP}.
Combining this with \Algorithm{sample}, we can construct a sparsifier
with $O(n \log^2 n / \epsilon^2)$ edges in $\tO(m \log^3 n)$ time.

\paragraph{Random Spanning Trees.}
\Algorithm{sampletrees} can also be implemented efficiently,
although we do not know how to do this in nearly linear time.
The best known algorithms for sampling (approximately) uniform spanning trees
run in $\tO( \min \set{ m \sqrt{n}, n^{2.376} } )$ time~\cite{CMN,KM}.
Combining this with the method described above for approximating the effective conductances
gives an algorithm to compute sparsifiers with $O(n \log^2 n / \epsilon^2)$ edges.
The running time of this algorithm is dominated by the time needed to sample
$O(\log^2 n/\eps^2)$ random spanning trees.
Although this algorithm is not as efficient as those listed above,
the numerous special properties of random spanning trees 
might make it useful in other ways.

\subsection{Limits of sparsification}

In Corollaries \ref{cor:sample_with_connectivity}, \ref{cor:sample_with_conductance}
and \ref{cor:sample_with_strength},
the number of rounds of sampling $\rho$ cannot be decreased to $o(\log n)$.
To see this, consider a path of length $n$ --- with probability tending to $1$
the sampled graph would be disconnected and hence not approximate the original graph.

Sampling random spanning trees overcomes this obstacle since the graph is
connected with probability $1$.
Indeed, Goyal, Rademacher and Vempala \cite{GoyalRV09} show that, for any constant-degree graph,
the unweighted union of just $2$ spanning trees
approximates every cut to within a factor $O(\log n)$.
In \Section{sp_tree_lb} we prove the following negative result for sampling random spanning trees.

\begin{lemma}
\LemmaName{sp_tree_lb}
For any constant $c \geq 1$,
there is a graph such that \Algorithm{sampletrees}
requires $\rho = \Omega(\log n)$ to approximate all cuts within a factor $c$
with constant probability.
\end{lemma}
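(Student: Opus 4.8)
The plan is to exhibit an explicit graph on which two natural cuts cannot both be approximated within factor $c$ unless $\rho$ is logarithmic. A clean candidate is a ``cycle of cliques'': take a cycle $v_1, \ldots, v_t$ and replace each vertex $v_i$ by a clique on $\Theta(c^2)$ vertices, with a single edge (or a bounded number of edges) joining consecutive cliques around the cycle. So $n = \Theta(t)$ and the cut separating two consecutive super-vertices has value $O(1)$. The key point is that in such a graph, each of the $t$ ``bridge'' edges between consecutive cliques has effective conductance bounded below by a constant (the clique structure keeps $c_e = \Omega(1)$), but a uniformly random spanning tree must omit exactly one bridge edge on the cycle — and more to the point, the cut around any single super-vertex is crossed by only two bridge edges, each of which is in a random spanning tree with probability bounded away from $1$.

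Here are the steps in order. First, I would fix the construction and verify the conductance estimates: compute $c_e$ for the bridge edges and for the clique edges, showing $c_e = \Theta(1)$ for the bridges, so that in \Algorithm{sampletrees} each appearance of a bridge edge $e$ in a sampled tree contributes weight $\Theta(1/\rho)$ to $w_e$. Second, I would analyze one round: for a fixed super-vertex $S$, the cut $u(\delta(S))$ has constant value, and I would argue that with constant probability at least one of the two bridge edges crossing $\delta(S)$ is \emph{not} in the random spanning tree $T$ — this uses the fact that for a random spanning tree, $\prob{e \in T}$ equals the effective conductance fraction, and more importantly that the two bridge edges adjacent to $S$ are (nearly) a series pair, so the probability that a given one is absent is bounded below by a constant, say $\ge \beta$ for some fixed $\beta > 0$. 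Third — the quantitative heart — over $\rho$ independent rounds, the number of rounds in which a particular bridge edge $e$ appears is a binomial-type random variable; I want to show that with the stated probability the final weight $w(\delta(S))$ deviates from $u(\delta(S))$ by more than a factor $c$ for \emph{some} $S$. The trick is a union/independence argument over the $t = \Theta(n)$ choices of $S$: each is ``bad'' (under-weighted by factor $c$) with probability at least some $q = q(\rho)$, and because distant super-vertices involve disjoint bridge edges these events are essentially independent, so if $q \cdot t = \Omega(1)$ then some $S$ is bad with constant probability. Computing $q$: a super-vertex $S$ is badly under-weighted when both its adjacent bridges get too little total weight, which happens with probability roughly $e^{-\Theta(\rho)}$ by a Chernoff-type estimate; setting $e^{-\Theta(\rho)} \cdot n = \Omega(1)$ forces $\rho = O(\log n)$, i.e., if $\rho = o(\log n)$ then with constant probability some cut is off by factor $c$.

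The main obstacle I anticipate is step three: the correlations between the edge-presence events for a single spanning tree, and across the cut events for different $S$, need care. Within one tree, the presence of one bridge edge is negatively correlated with the presence of another (edges of a random spanning tree satisfy negative association / the Feder–Mihail property), which actually helps — negative correlation only makes the ``at least one absent'' event more likely, so I can lower-bound cleanly. Across super-vertices, I must choose the cuts $S$ to be far apart on the cycle (every other clique, or spaced by a constant) so that their bridge-edge sets are disjoint and hence the per-round events are genuinely independent across the selected $S$'s; then the $\rho$ rounds are independent by construction. Assembling these gives the clean statement: with $\rho = o(\log n)$, with probability $\to 1$ some cut around a single super-vertex is approximated no better than factor $c$, which is the contrapositive of what \Lemma{sp_tree_lb} asserts. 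A minor secondary point to nail down is ensuring the deviation is genuinely by a multiplicative factor $c$ and not merely $1+\eps$ — this is why the cliques have size $\Theta(c^2)$: it keeps all relevant cut values and conductances $\Theta(1)$ while leaving enough slack that losing a constant fraction of the expected weight on a constant-sized cut is a factor-$c$ error, not just an $\eps$ error.
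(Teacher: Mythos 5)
There is a genuine gap, and it is in the construction itself, not just in the bookkeeping of step three. For the cut $\delta(S)$ around a single super-vertex, \emph{every} spanning tree must contain at least one crossing bridge (that is the only way $S$ gets connected to the rest of the graph), so in every round the cut receives weight at least $\min_{e \in \delta(S)} c_e/\rho$, and after $\rho$ rounds $w(\delta(S)) \geq \min_{e \in \delta(S)} c_e$. In your construction all crossing edges of this cut are bridges of the same type: with a single bridge per consecutive pair, $\Pr[e \in T] = 1-O(1/n)$ (a spanning tree can omit at most one of the $n$ bridges on the cycle, so your claimed constant $\beta>0$ for ``a given bridge is absent'' is false), hence $c_e \approx 1$ and the cut value is $2$; with $b$ parallel bridges per pair, $c_e \approx b$ and the cut value is $2b$. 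Either way the true cut value is at most twice the smallest conductance among its crossing edges, so $w(\delta(S))$ can never be off by more than roughly a factor $2$, no matter how small $\rho$ is. The clique size $\Theta(c^2)$ does not help here: it changes conductances of clique-internal edges, which do not cross $\delta(S)$ at all, and ``losing a constant fraction of the weight'' of a constant-size cut is a factor-$2$ event, not a factor-$c$ event. So the quantitative heart of your argument (a per-position bad event of probability $e^{-\Theta(\rho)}$ that produces a factor-$c$ error) is computing the probability of an event that in fact never constitutes a factor-$c$ violation.

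What is missing is an asymmetry between the true cut value and the conductance of the cheapest way a tree can cross the cut. The paper's construction supplies exactly this: between consecutive hubs $v_i,v_{i+1}$ it puts $k \approx 2c$ parallel ``heavy'' edges (so the cut has true value $k+1$) together with a ``light'' length-two path $v_i$--$u_i$--$v_{i+1}$ whose edges have conductance $(2k+1)/(k+1) < 2$ because they sit in series. The bad event is that a tree is ``light at position $i$'' (crosses only via the light path), which happens with constant probability $1/(2k+1)$ per round and, crucially, leaves the sampled cut weight below $2$ while the truth is $k+1 > 2c$; independence across positions is exact because the uniform spanning tree factorizes position by position, and a union over the $n$ positions gives $\rho = \Omega(\log n)$. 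Your outer skeleton (per-position failure probability $e^{-\Theta(\rho)}$, disjoint edge sets, union over $\Theta(n)$ positions) matches this plan, but to make it work you would have to replace the super-vertex cuts by cuts for which the tree can cross cheaply relative to the true value --- for instance cuts \emph{through} a clique, where the tree might cross only via a single low-conductance bridge --- and then you would still owe a proof that the per-round probability of that cheap-crossing event is bounded below by a positive constant, which your proposal does not provide.
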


\section{Sparsifiers by independent sampling}
\SectionName{sample_with_connectivity}

In this section we prove our main result, \Theorem{mainthm}.
Perhaps the most natural approach would be to analyze the probability of poorly
sampling each cut, then union bound over all cuts.
In \Appendix{partitioning} we explain why this simple approach fails,
why Bencz\'ur and Karger \cite{BK} proposed to decompose the graph and separately analyze the
pieces,
and why their approach does not suffice to prove \Theorem{mainthm}.

Our analysis also involves partitioning the graph, but using a different approach.
In fact, a very similar partitioning was used in an earlier proof of Bencz\'ur and Karger 
\cite[\S 3.2]{BKConf} \cite[\S 9.3.2]{BenczurThesis}.
We will partition the graph into subgraphs,
each consisting of edges with roughly equal values of $k_e$.
Formally, we partition $G$ into subgraphs with edge sets $E_1, E_2, \ldots$,
where
$$ E_i ~=~ \setst{ e \in E }{ 2^i \le k_e < 2^{i+1}}. $$
We emphasize that $E_i$ is defined using $k_e$, not $\kappa_e$.

To prove that the weights of all cuts are nearly preserved
(i.e., that \Equation{cutspreserved} holds),
we will use a Chernoff bound to analyze the error contributed to each cut by each subgraph $E_i$.
A union bound allows us to analyze the probability of large deviation
for all cuts simultaneously.
As in previous work \cite{KargerSkel,BK}, 
the key to making this union bound succeed is to show that most cuts are very large,
so their probability of deviation is very small.
This is achieved by our cut-counting theorem, \Theorem{count_cut_class}.

From this point onwards, to simplify our notation, we will no longer think of $G$ as a weighted
graph, but rather think of it as an unweighted multigraph which has
$u_e$ parallel copies of each edge $e$.
The main benefit of this change is that the total weight of a cut
can now be written $|\delta(S)|$ instead of $u(\delta(S))$
since, for multigraphs, the notation $|\delta(S)|$ gives
the total number of copies of all multiedges in $\delta(S)$.
We hope that this choice of notation makes the following proofs easier to read.

The crucial definition for this paper is as follows.
We say that a non-empty set of edges $F \subseteq E_i$ is 
\newterm{induced by a cut} $C$ if $F=C \intersect E_i$.
Any such set $F$ is called a \newterm{cut-induced} subset of $E_i$.
Note that $F$ could be induced by different cuts $C$ and $C'$,
i.e., $C \cap E_i = F = C' \cap E_i$.
For a cut-induced set $F \subseteq E_i$, define
\begin{align}
\EquationName{qdef}
q(F)   ~:=~ \min \setst{ |\delta(S)| }{ S \subseteq V ~\wedge~ \delta(S) \intersect E_i = F }.
\end{align}
So $q(F)$ is the minimum size\footnote{We remind the reader that the notation
$\card{\delta(S)}$ implicitly involves the edge multiplicities.
So, thinking of $G$ as a weighted graph,
$q(F)$ is really the minimum weight of a cut that induces $F$.}
of a cut that induces $F$.
This is an important definition since the amount of error we can allow
when sampling $F$ is naturally constrained by the smallest cut which induces $F$.

We also define a ``normalized'' form of $q(F)$, which is $\alpha(F) := q(F) / 2^i$.
Note that $2^i$ is a lower bound on the size of any cut that intersects $F$,
because every edge $e \in E_i$ has $2^i \leq k_e$.
So we can think of $\alpha(F)$ as a quantity that measures how close $q(F)$
is to the minimum size of any cut that intersects $E_i$.
Clearly $\alpha(F) \geq 1$.

For any set $F$ of edges, the random variable $X_F$ denotes 
the total weight of all sampled copies of the edges in $F$, over all rounds of sampling.
The main challenge in proving \Theorem{mainthm} is to prove concentration for
all $X_F$ where $F$ is a cut-induced subset of some $E_i$.

\subsection{The bad events}
\SectionName{badevents}

Let $F\subseteq E_i$ be a cut-induced set.
We now define three bad events which indicate that the edges in $F$ were not sampled well.
The first two events are:
\begin{align*}
\cA_F:&\qquad |X_F-|F|| > \epsilon |F| \\
\cB_F:&\qquad |X_F-|F|| > \frac{\epsilon q(F)}{\ln n}
\end{align*}
The third event is not needed to analyze unweighted graphs
(i.e., if every edge multiplicity is $u_e = 1$);
it is only needed to deal with arbitrary weights.
The third event is
\begin{align*}
\cC_F:&\qquad
    X_F-|F| ~>~ g^{-1}\Bigg( \frac{ \epsilon^2 q(F) }{ |F| \ln n } \Bigg) \cdot |F|,
\end{align*}
where $g : \bR \rightarrow \bR$ is the function defined by
\begin{equation}
\EquationName{gdef}
g(x) ~=~ (1+x) \ln( 1+x ) - x.
\end{equation}
Note that its derivative is $g'(x) = \ln(1+x)$,
so $g$ is strictly monotonically increasing on $\bR_+$, and hence invertible.
Furthermore, $g^{-1}$ is also strictly monotonically increasing on $\bR_+$,
by the inverse function theorem of calculus.

We will show that,
assuming that these events do not hold (for certain cut-induced sets),
then the weights of all cuts are approximately preserved.
To this end, we bound the probability of the bad events
by the following three claims.
These claims are proven by straightforward applications of Chernoff bounds
in \Appendix{boundproofs}.
Recall that the parameter $d$ in the statement of the claims
satisfies $d = O(1/\eps^2)$, as stated in \Theorem{mainthm}.

\newcommand{\clmlargelambda}{
    Let $F \subseteq E_i$ be a cut-induced set with $q(F) \leq |F| \ln n$.
    Then
    $$
        \prob{\cA_F} ~\le~ 2n^{-d\alpha(F)\epsilon^2/6}.
    $$
}
\begin{claim}
\ClaimName{large_lambda}
\clmlargelambda
\end{claim}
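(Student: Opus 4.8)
The plan is to apply a standard multiplicative Chernoff bound to the random variable $X_F$, the total sampled weight of the edges in $F$ over all $\rho$ rounds. First I would set up the underlying independent Bernoulli trials: $X_F$ is a sum, over each round $i = 1,\ldots,\rho$ and each copy of each edge $e \in F$, of a random variable that equals $\kappa_e/\rho$ with probability $1/\kappa_e$ and $0$ otherwise. These are all independent. The expectation of $X_F$ is $\expect{X_F} = \sum_{e \in F} u_e \cdot \rho \cdot (1/\kappa_e) \cdot (\kappa_e/\rho) = \sum_{e \in F} u_e = |F|$, using the multigraph convention. So event $\cA_F$ is exactly the event that $X_F$ deviates from its mean $\mu := |F|$ by more than a $(1 \pm \epsilon)$ factor, and a two-sided Chernoff bound gives $\prob{\cA_F} \le 2 \exp(-\epsilon^2 \mu / (3 s))$ or similar, where $s$ is (an upper bound on) the maximum value any single trial can take, i.e.\ the ``scaling'' factor we must divide by to normalize the summands into $[0,1]$.

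The key quantitative step is to control that scaling factor. Each summand is at most $\kappa_e/\rho \le k_e/\rho < 2^{i+1}/\rho$, since $F \subseteq E_i$ means every $e \in F$ has $k_e < 2^{i+1}$, and $\kappa_e \le k_e$ by hypothesis. Normalizing $X_F$ by $2^{i+1}/\rho$ turns it into a sum of independent $[0,1]$-valued variables with mean $\mu / (2^{i+1}/\rho) = \rho |F| / 2^{i+1}$. Applying the scaled Chernoff bound,
\[
\prob{\cA_F} ~\le~ 2\exp\!\left( -\frac{\epsilon^2}{3} \cdot \frac{\rho |F|}{2^{i+1}} \right).
\]
Now I would use the hypothesis $q(F) \le |F| \ln n$ together with $\alpha(F) = q(F)/2^i$: this gives $|F| \ge q(F)/\ln n = 2^i \alpha(F)/\ln n$, so $\rho |F| / 2^{i+1} \ge \rho \alpha(F) / (2\ln n)$. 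Substituting $\rho = d \log^2 n$ (with $\log$ and $\ln$ differing by the constant $\ln 2$, which I would absorb into $d$, or track explicitly) yields an exponent of the form $-(d/6)\,\alpha(F)\,\epsilon^2 \ln n$ after adjusting constants, i.e.\ $\prob{\cA_F} \le 2 n^{-d\alpha(F)\epsilon^2/6}$, as claimed. I would double-check the constant $6$ by being careful about the factor of $3$ in the Chernoff bound, the factor of $2$ from $2^{i+1}$ versus $2^i$, and the $\ln 2$ conversion; these together comfortably fit inside a factor of $6$ with room to spare, and any slack is harmless since $d = O(1/\epsilon^2)$ is only an upper bound.

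The main obstacle here is not conceptual but bookkeeping: making sure the normalization constant is exactly right so that the exponent comes out as $d\alpha(F)\epsilon^2/6$ rather than some other constant, and correctly using $\kappa_e \le k_e < 2^{i+1}$ to bound the per-trial contribution. One subtlety worth flagging is that different edges $e \in F$ may have different $\kappa_e$, so the trials are not identically distributed; this is fine because the multiplicative Chernoff bound only needs independence and a uniform upper bound on the summands, both of which we have. Since the paper says these claims ``are proven by straightforward applications of Chernoff bounds in \Appendix{boundproofs},'' I expect the actual proof to be just this computation, and I would present it at that level of detail.
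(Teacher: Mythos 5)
Your proposal is correct and follows essentially the same route as the paper's proof: both normalize by the per-trial bound $U = 2^{i+1}/\rho$ (using $\kappa_e \le k_e < 2^{i+1}$ for $e \in E_i$), apply the two-sided Chernoff bound of the form $2\exp(-\delta^2\mu/3U)$ with $\mu = |F|$, invoke the hypothesis $q(F) \le |F|\ln n$ to replace $|F|$ by $q(F)/\ln n$, and then use $\alpha(F) = q(F)/2^i$ and $\rho = d\log^2 n$ to obtain the stated exponent. No gaps; the constant bookkeeping you flag works out exactly as in the paper.
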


\newcommand{\clmsmalllambda}{
    Let $F \subseteq E_i$ be a cut-induced set with $q(F) > |F| \ln n$.
    Then
    $$
        \prob{\cB_F} ~\le~ 2 n^{-d \alpha(F) \epsilon^2 /8}.
    $$
}
\begin{claim}
\ClaimName{small_lambda}
\clmsmalllambda
\end{claim}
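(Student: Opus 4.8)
The plan is to recognise $X_F$ as a sum of independent, bounded random variables, apply a standard Chernoff bound, and then use the hypothesis $q(F) > |F|\ln n$ to replace $|F|$ by the larger quantity $q(F)/\ln n$ wherever that tightens the exponent. First I would unfold \Algorithm{sample}: viewing $G$ as a multigraph, $X_F = \sum_{r=1}^{\rho}\sum_{e\in F}\xi_{r,e}$, where the inner sum runs over the individual parallel copies that make up $F$, the variables $\xi_{r,e}$ are mutually independent, and $\xi_{r,e}$ equals $\kappa_e/\rho$ with probability $1/\kappa_e$ and $0$ otherwise. Thus $\expect{\xi_{r,e}} = 1/\rho$, so $\mu := \expect{X_F} = |F|$; and since $F \subseteq E_i$ forces $\kappa_e \le k_e < 2^{i+1}$, every summand lies in $[0,M]$ with $M := 2^{i+1}/\rho$. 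Rescaling the $\xi_{r,e}$ by $M$ puts us in the setting of the standard Chernoff bound: writing $t := \epsilon q(F)/\ln n$ for the deviation appearing in $\cB_F$ and $\delta := t/\mu$, one obtains $\prob{X_F \ge \mu + t} \le e^{-(\mu/M)\,g(\delta)}$ and $\prob{X_F \le \mu - t} \le e^{-(\mu/M)\,g(-\delta)}$, with $g$ the function from \Equation{gdef}. I would then use the familiar relaxations $g(\delta) \ge \delta^2/3$ for $0 \le \delta \le 1$, $g(\delta) \ge g(1)\,\delta$ with $g(1) = 2\ln 2 - 1 > 1/3$ for $\delta \ge 1$, and $g(-\delta) \ge \delta^2/2$ for $0 \le \delta < 1$.

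Next I would split on whether $\delta \le 1$, i.e.\ $t \le |F|$, or $\delta > 1$. In the first case the upper-tail exponent is at least $(\mu/M)\delta^2/3 = t^2/(3M\mu)$; substituting $\mu = |F| < q(F)/\ln n$ (this is the only place the hypothesis enters), then $t = \epsilon q(F)/\ln n$, $M = 2^{i+1}/\rho$, $q(F) = \alpha(F)2^i$, and finally $\rho = d\log^2 n$, the powers of $2^i$ cancel and the exponent reduces to $d\,\alpha(F)\,\epsilon^2\ln n$ times an absolute constant that is comfortably larger than $1/8$. In the second case the exponent is at least $g(1)\,t/M$, and the same chain of substitutions, together with the harmless assumption $\epsilon \le 1$ (so $\epsilon \ge \epsilon^2$), again exceeds $d\,\alpha(F)\,\epsilon^2\ln n / 8$. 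The lower tail is handled the same way: the event $\{X_F \le \mu - t\}$ is empty once $t > |F|$ (and negligible at the boundary $t = |F|$), while for $t < |F|$ its exponent $t^2/(2M\mu)$ only improves on that of the first case. Adding the two tail bounds costs a factor of $2$, yielding exactly $\prob{\cB_F} \le 2n^{-d\alpha(F)\epsilon^2/8}$.

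The only genuinely structural ingredients are the uniform bound $M = 2^{i+1}/\rho$ on the summands, which is immediate from $F \subseteq E_i$ and $\kappa_e \le k_e$, and the inequality $|F| < q(F)/\ln n$ supplied by the hypothesis; in particular nothing here uses the cut-counting theorem \Theorem{count_cut_class} or any property of $F$ beyond those two facts. What does require care is the arithmetic: tracking the constants through the substitutions and through the $\delta \le 1$ versus $\delta > 1$ split to verify that the exponent clears $1/8$ in every case --- and checking that this remains true under either natural reading of ``$\log$'' in $\rho = d\log^2 n$, since replacing $\ln$ by $\log_2$ only enlarges $\rho$ and hence only helps. I expect this constant-chasing to be the main, and essentially the only, obstacle.
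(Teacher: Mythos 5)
Your proposal is correct and follows essentially the same route as the paper: view $X_F$ as a weighted sum of independent Bernoulli trials with weights at most $2^{i+1}/\rho$, set $\delta = \epsilon q(F)/(|F|\ln n)$, apply a Chernoff bound, and use the hypothesis $q(F) > |F|\ln n$ to boost the exponent to $d\alpha(F)\epsilon^2\ln n/8$. The only difference is cosmetic: the paper invokes the two-sided Bernstein-form bound (\Theorem{chernoff_3}) and uses $\delta \geq \epsilon$ to get $\delta^2/(1+\delta/3) \geq \delta\epsilon/2$ in one stroke, whereas you use the Bennett-form bound with $g$ and a $\delta \leq 1$ versus $\delta > 1$ case split, which lands at the same constants.
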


\newcommand{\clmerrorub}{
    For every cut-induced set $F$,
    $$
        \prob{ \cC_F } \leq n^{-d \alpha(F) \eps / 2}.
    $$
}
\begin{claim}
\ClaimName{errorub}
\clmerrorub
\end{claim}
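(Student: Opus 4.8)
The plan is to view $X_F$ as a sum of independent bounded nonnegative random variables and apply a one–sided Chernoff bound written in terms of the function $g$ of \eqref{eq:gdef}. Concretely, $X_F=\sum_{e\in F}\sum_{j=1}^{u_e}\sum_{r=1}^{\rho}Y_{e,j,r}$, where the $Y_{e,j,r}$ are mutually independent and each $Y_{e,j,r}$ equals $\kappa_e/\rho$ with probability $1/\kappa_e$ and $0$ otherwise. Hence $\expect{X_F}=\sum_{e\in F}u_e=|F|$, and — this is where the bucketing into $E_i$ is used — every summand satisfies $Y_{e,j,r}\le\kappa_e/\rho\le k_e/\rho<2^{\,i+1}/\rho=:M$, because $e\in F\subseteq E_i$ and $\kappa_e\le k_e$.

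Next I would invoke the standard Chernoff/Bennett bound: if $Z=\sum_k Z_k$ with the $Z_k$ independent, $Z_k\in[0,M]$, and $\mu:=\expect{Z}$, then $\prob{Z\ge(1+t)\mu}\le\exp(-\tfrac{\mu}{M}\,g(t))$ for every $t\ge0$. (This is the usual $[0,1]$ Chernoff bound rescaled by $M$; alternatively one bounds $\expect{e^{\lambda Z_k}}\le\exp(\tfrac{\mu_k}{M}(e^{\lambda M}-1))$, using that $u\mapsto(e^{\lambda u}-1)/u$ is increasing, and optimizes over $\lambda>0$.) Apply this with $\mu=|F|$, $M=2^{\,i+1}/\rho$, and $t=g^{-1}\!\big(\tfrac{\eps^2 q(F)}{|F|\ln n}\big)$ — which is exactly the deviation threshold in the definition of $\cC_F$. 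Since $g\circ g^{-1}$ is the identity on $\bR_+$, the exponent collapses to an expression linear in $q(F)$:
\[
    \prob{\cC_F}~\le~\exp\!\left(-\frac{|F|}{M}\cdot\frac{\eps^2 q(F)}{|F|\ln n}\right)
    ~=~\exp\!\left(-\frac{\rho\,\eps^2\,q(F)}{2^{\,i+1}\ln n}\right).
\]
Substituting $q(F)=\alpha(F)\,2^{i}$ (so that $q(F)/2^{\,i+1}=\alpha(F)/2$) and $\rho=d\log^2 n$ then yields the claimed bound.

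I do not expect a genuine obstacle here: the definition of $\cC_F$ was evidently chosen precisely so that feeding $t=g^{-1}(\cdot)$ into the Bennett bound produces the linear term $\rho\eps^2 q(F)/(2^{\,i+1}\ln n)$, and — in contrast to Claims \ref{clm:large_lambda} and \ref{clm:small_lambda} — no case split on whether $q(F)\le|F|\ln n$ and no under-estimation of $g$ by a quadratic are needed, which is why the constant is clean. The only points demanding care are (i) using the correct per-summand bound $M=2^{\,i+1}/\rho$, which is exactly where $F\subseteq E_i$ and $\kappa_e\le k_e$ enter, and (ii) stating the Chernoff bound in the $g$-form rather than the more familiar $\eps^2$-form, so that the $g\circ g^{-1}$ cancellation is transparent.
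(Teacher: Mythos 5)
Your proposal is correct and is essentially the paper's own proof: the paper applies its $g$-form Chernoff bound (Theorem~\ref{thm:chernoff_4}) with $U=2^{i+1}/\rho$, $\mu=|F|$ and $\delta=g^{-1}\!\big(\eps^2 q(F)/(|F|\ln n)\big)$, so the $g\circ g^{-1}$ cancellation gives $\exp\!\big(-\eps^2 q(F)\rho/(2^{i+1}\ln n)\big)$, exactly as you argue, and then substitutes $q(F)=\alpha(F)2^i$ and $\rho=d\log^2 n$. The only cosmetic point is that this derivation yields an exponent with $\eps^2$ (matching the paper's displayed computation) rather than the $\eps$ appearing in the claim's statement, a discrepancy present in the paper itself and harmless for the union bound.
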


\newcommand{\clmunionbound}{
    By choosing $d = O(1/\eps^2)$ sufficiently large,
    then with high probability, every cut-induced set $F$ satisfies
    \begin{itemize}
    \item if $q(F) \leq |F| \ln n$ then $\cA_F$ does not hold;
    \item if $q(F) > |F| \ln n$ then $\cB_F$ does not hold; and
    \item $\cC_F$ does not hold.
    \end{itemize}
}
\begin{claim}
\ClaimName{union_bound}
\clmunionbound
\end{claim}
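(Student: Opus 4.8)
The plan is to union-bound the three families of bad events over all cut-induced sets $F$, using Claims~\ref{clm:large_lambda}, \ref{clm:small_lambda} and \ref{clm:errorub} together with the cut-counting theorem, \Theorem{count_cut_class}. The key observation is that all three claims bound the failure probability of $F$ by something of the form $n^{-\Omega(d\,\alpha(F)\,\eps^2)}$ (the $\cC_F$ bound is $n^{-d\alpha(F)\eps/2}$, which for $\eps \le 1$ is at least as strong as a bound of the form $n^{-d\alpha(F)\eps^2/2}$, so a single exponent $\Omega(d\alpha(F)\eps^2)$ handles all three). So it suffices to show that the number of cut-induced sets $F$ (across all $E_i$) with a given normalized value $\alpha(F) \le \alpha$ is at most $n^{O(\alpha)}$, and then choose $d$ large enough that $n^{-\Omega(d\alpha\eps^2)} \cdot n^{O(\alpha)}$, summed over the relevant range of $\alpha$, is $o(1)$.

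First I would fix an index $i$ and count cut-induced subsets $F \subseteq E_i$ with $\alpha(F) \le \alpha$, i.e.\ with $q(F) \le \alpha 2^i$. By definition of $q(F)$, each such $F$ is of the form $\delta(S) \cap E_i$ for some $S$ with $|\delta(S)| \le \alpha 2^i$; and every edge $e \in E_i \subseteq B$ (taking $B = E_i$) has $k_e \ge 2^i =: K$. Hence \Theorem{count_cut_class}, applied with this $B$ and $K$, gives that the number of distinct sets $\delta(S) \cap E_i$ arising from cuts of size at most $\alpha 2^i$ is less than $n^{2\alpha}$. So for each fixed $i$ there are fewer than $n^{2\alpha}$ cut-induced sets $F \subseteq E_i$ with $\alpha(F) \le \alpha$. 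Since $k_e \le u(E) \le \poly(n)$ (the edge weights, hence multiplicities, are polynomially bounded — or one simply notes the number of distinct powers $2^i$ is $O(\log n)$), the number of nonempty $E_i$ is $O(\log n)$, so the total count over all $i$ is at most $O(\log n)\cdot n^{2\alpha}$.

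Next I would dyadically stratify by the value of $\alpha(F)$: for integer $t \ge 0$, let $\mathcal{F}_t$ be the cut-induced sets with $2^t \le \alpha(F) < 2^{t+1}$. By the previous paragraph $|\mathcal{F}_t| \le O(\log n) \cdot n^{2^{t+2}}$, and every $F \in \mathcal{F}_t$ has failure probability at most $n^{-c d 2^t \eps^2}$ for an absolute constant $c$ (combining the three claims; here I use $\alpha(F) \ge 2^t$ in the exponent). The union bound over $\mathcal{F}_t$ then contributes at most $O(\log n) \cdot n^{2^{t+2} - c d 2^t \eps^2}$; choosing $d = C/\eps^2$ with $C$ large enough that $cC \ge 6$ (say) makes this at most $O(\log n)\cdot n^{-2^t}$, and summing the geometric-type series over $t \ge 0$ gives $O(\log n)\cdot n^{-1} = o(1)$. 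Since we have enlarged $d$ only by a constant factor, it remains $O(1/\eps^2)$ as required. The main subtlety — and the place I would be most careful — is the bookkeeping that lets \emph{one} union bound over cut-induced sets simultaneously kill $\cA_F$ (only for $F$ with $q(F) \le |F|\ln n$), $\cB_F$ (only for $q(F) > |F|\ln n$), and $\cC_F$ (for all $F$): for each $F$ exactly one of the first two is relevant by the dichotomy in the claim statement, so adding the $\cA_F$-or-$\cB_F$ term and the $\cC_F$ term per $F$ only costs a factor $2$ in the bound, absorbed into the constant $C$. I would also double-check that the $\cC_F$ bound $n^{-d\alpha(F)\eps/2}$ is genuinely at least $n^{-d\alpha(F)\eps^2/2}$ for $\eps \le 1$ (which holds since $\eps \ge \eps^2$), so that it fits the common exponent form.
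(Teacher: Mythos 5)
Your argument is correct and is essentially the paper's: a union bound over cut-induced sets that combines Claims \ref{clm:large_lambda}, \ref{clm:small_lambda} and \ref{clm:errorub} with the counting bound of \Corollary{count_canonical_cut}, with your dyadic stratification by $\alpha(F)$ playing the role of the paper's device of sorting the sets by $q(F)$ and integrating over the rank $j$. The one slip is the parenthetical claim that only $O(\log n)$ classes $E_i$ are nonempty --- \Theorem{mainthm} allows arbitrary integer weights, so $k_e$ need not be polynomially bounded --- but since each edge of the simple graph lies in exactly one $E_i$ there are at most $\binom{n}{2}$ nonempty classes, and this polynomial factor is absorbed by taking the constant in $d=O(1/\eps^2)$ slightly larger, exactly as you absorb the factor $2$ from handling $\cA_F$ or $\cB_F$ together with $\cC_F$.
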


The proof of \Claim{union_bound}, given in \Appendix{boundproofs},
is a straightforward modification of an argument of Karger \cite{KargerSkel}.
The only difference with our proof is that we require a result which
bounds the number of small cut-induced sets.
Such a statement is given by \Corollary{count_canonical_cut},
which follows directly from \Theorem{count_cut_class}.

\begin{corollary}
\CorollaryName{count_canonical_cut}
For each $i$ and any real number $\alpha\ge 1$, 
the number of cut-induced sets $F \subseteq E_i$ with $\alpha(F) \leq \alpha$
is less than $n^{2\alpha}$.
\end{corollary}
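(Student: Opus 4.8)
The plan is to obtain this as an immediate consequence of \Theorem{count_cut_class} by instantiating the edge set $B$ and the threshold $K$ appropriately. Fix the index $i$ and set $B := E_i$ and $K := 2^i$. By the definition of $E_i$, every edge $e \in E_i$ satisfies $k_e \geq 2^i = K$, so the hypothesis of \Theorem{count_cut_class} is satisfied for this choice of $B$ and $K$.

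Next I would show that every cut-induced set $F \subseteq E_i$ with $\alpha(F) \leq \alpha$ belongs to the collection counted by \Theorem{count_cut_class}. Since $F$ is cut-induced, the minimum in the definition \eqref{eq:qdef} of $q(F)$ is attained by some $S \subseteq V$ with $\delta(S) \intersect E_i = F$ and $\card{\delta(S)} = q(F)$. Using $\alpha(F) = q(F)/2^i$, the assumption $\alpha(F) \leq \alpha$ yields $\card{\delta(S)} = q(F) = \alpha(F)\cdot 2^i \leq \alpha \cdot 2^i = \alpha K$. Hence $F = \delta(S)\intersect B$ for some $S$ with $\card{\delta(S)} \leq \alpha K$, so $F$ lies in the set $\setst{\delta(S)\intersect B}{S\subseteq V ~\wedge~ \card{\delta(S)}\leq \alpha K}$. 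Therefore the number of cut-induced sets $F\subseteq E_i$ with $\alpha(F)\leq\alpha$ is at most the cardinality of that set, which by \Theorem{count_cut_class} is strictly less than $n^{2\alpha}$.

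I expect no real obstacle here; the statement is essentially a repackaging of \Theorem{count_cut_class}. The only points deserving a moment's care are that (i) the collection in \Theorem{count_cut_class} may additionally contain the empty set (when a cut avoids $E_i$) and may realize the same $F$ through several different sets $S$ — neither of which causes trouble, since we are only bounding from above via the inclusion of the cut-induced sets into that collection — and (ii) the minimizing $S$ in \eqref{eq:qdef} exists precisely because $F$ is cut-induced. No ingredient beyond \Theorem{count_cut_class} is required.
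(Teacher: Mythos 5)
Your proposal is correct and follows exactly the paper's own argument: apply \Theorem{count_cut_class} with $B=E_i$ and $K=2^i$ (valid since $k_e\geq 2^i$ on $E_i$), then observe that any cut-induced $F\subseteq E_i$ with $\alpha(F)\leq\alpha$ is realized as $\delta(S)\intersect E_i$ for a minimizing $S$ with $\card{\delta(S)}=q(F)\leq\alpha 2^i$. The extra remarks about the empty set and multiple realizing cuts are fine but not needed for the upper bound.
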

\begin{proof}
Since every $e \in E_i$ satisfies $k_e \geq 2^i$,
we may apply \Theorem{count_cut_class} with $B=E_i$ and $K=2^i$.
This yields 
$$
\card{ \setst{ \delta(S) \intersect E_i }{ S \subseteq V ~\wedge~ \card{\delta(S)} \leq \alpha 2^i } }
    ~<~ n^{2\alpha}.
$$
Now, by the definition of $\alpha(F)$ and $q(F)$,
for every cut-induced set $F \subseteq E_i$ with $\alpha(F) \leq \alpha$,
there exists a cut $\delta(S)$ such that $\delta(S) \intersect E_i = F$
and $\card{\delta(S)} \leq \alpha 2^i$.
This proves the desired statement.
\end{proof}

\subsection{All cuts are preserved}
\SectionName{allcuts}

In this section we prove that \Equation{cutspreserved} holds.
Recall that the random variable $X_C$ denotes total weight of all sampled edges in $C$.
Our main lemma is

\begin{lemma}
\LemmaName{wtderror}
With high probability, every cut $C$ satisfies $|X_C - |C|| = O(\epsilon |C|)$.
\end{lemma}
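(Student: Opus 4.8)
The plan is to decompose the error of an arbitrary cut $C$ into contributions from the subgraphs $E_i$, control each piece using the bad events whose failure is guaranteed by \Claim{union_bound}, and then sum the resulting bounds over $i$. Concretely, write $F_i := C \cap E_i$, so that $X_C = \sum_i X_{F_i}$ and $|C| = \sum_i |F_i|$, and hence $|X_C - |C|| \le \sum_i |X_{F_i} - |F_i||$. Since $C$ itself is a cut that induces each $F_i$, we have $q(F_i) \le |C|$ for every $i$, which is the key inequality linking the local quantities $q(F_i)$ back to the global cut size $|C|$ that we are trying to approximate.

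Next I would bound each term $|X_{F_i} - |F_i||$ according to which regime $F_i$ falls into. If $q(F_i) \le |F_i| \ln n$, then since $\cA_{F_i}$ fails we get $|X_{F_i} - |F_i|| \le \epsilon |F_i|$; summing these over all such $i$ contributes at most $\epsilon \sum_i |F_i| \le \epsilon |C|$. If instead $q(F_i) > |F_i| \ln n$, then since $\cB_{F_i}$ fails we get $|X_{F_i} - |F_i|| \le \epsilon q(F_i)/\ln n \le \epsilon |C| / \ln n$, using $q(F_i) \le |C|$. The number of distinct nonempty $F_i$ is at most the number of scales $i$, which is $O(\log n)$ since edge weights (and hence $k_e$ values) are $\poly(n)$ — or, even without that assumption, one can argue only $O(\log n)$ scales can contribute nontrivially to a fixed cut because $q(F_i) \ge 2^i$ forces $2^i \le |C|$ for any contributing scale. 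Thus the second regime contributes at most $O(\log n) \cdot \epsilon |C| / \ln n = O(\epsilon |C|)$. Adding the two regimes gives $|X_C - |C|| = O(\epsilon |C|)$, as claimed.

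The event $\cC_F$ does not appear to be needed for this particular lemma — it is the ``one-sided, heavy-tail'' event used later to control the total \emph{size} $|E'|$ of the sparsifier in the presence of large edge multiplicities, not the cut-value deviation — so I would not invoke it here. The main obstacle, such as it is, is bookkeeping: making sure the $O(\log n)$ bound on the number of contributing scales is justified cleanly (this is where the integrality/polynomial-weight normalization, or a direct $2^i \le q(F_i) \le |C|$ argument, is used), and making sure the constants in the two regimes combine without a spurious extra $\log n$ factor. In other words, the real content — the concentration of each $X_{F_i}$ and the fact that the union bound over all cut-induced sets succeeds — has already been packaged into Claims \ref{clm:large_lambda}--\ref{clm:union_bound} via the cut-counting theorem; \Lemma{wtderror} is the comparatively routine step of assembling those pieces for a single cut and summing a geometric/harmonic-type series over the scales.
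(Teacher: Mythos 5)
Your decomposition into $F_i = C \cap E_i$, the use of $q(F_i) \le |C|$, and the handling of the two regimes via $\cA_{F_i}$ and $\cB_{F_i}$ coincide with the paper's argument, and indeed they suffice for unweighted graphs. The gap is in your claim that only $O(\log n)$ scales contribute. The inequality $2^i \le q(F_i) \le |C|$ only gives $i \le \lg |C|$, and $\lg|C|$ is \emph{not} $O(\log n)$ under the hypotheses of \Theorem{mainthm}: the weights are arbitrary positive integers (the reduction from real weights explicitly relies on the result being independent of the magnitude of $u$), so no $\poly(n)$ bound on $u_e$ or $k_e$ is available. What is true is that the number of nonempty scales is at most $\binom{n}{2}$ (one per distinct $k_e$ value), so in the regime $q(F_i) > |F_i|\ln n$ your per-scale bound $\eps|C|/\ln n$ can be paid up to $\Theta(n^2)$ times, giving a total of order $\eps|C|\,n^2/\ln n$, which is useless. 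This is exactly why the paper introduces the threshold $t = \lg|C| - 4\lg n - \lg(1/\eps)$: the scales with $t < i \le \lg|C|$ number only $O(\log n)$ (using $\eps \ge 1/n$) and are handled by $\cB$, while the possibly numerous scales $i \le t$ need a different argument.

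For those small scales the paper bounds the downward error by \Claim{concentrate_in_large_Ci} (their total weight $\sum_{i \le t}|C_i|$ is already $O(\eps|C|)$), and bounds the upward error using the third bad event $\cC_{C_i}$ together with \Claim{Fsmall}, \Claim{ginv} and a harmonic-series summation over the at most $n^2$ scales. So your assertion that $\cC_F$ is not needed for this lemma is also mistaken: in the paper $\cC$ exists precisely for the weighted case of this lemma, while the bound on $|E'|$ is proved separately by a plain Chernoff bound. As written, your proof establishes the lemma only for unweighted graphs or under an extra $\max_e u_e = \poly(n)$ assumption; to cover the general weighted case you must add the small-scale analysis (threshold $t$, \Claim{concentrate_in_large_Ci}, and the $\cC$/$g^{-1}$ argument) or something equivalent.
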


For unweighted graphs, the proof of this lemma is quite simple.
For weighted graphs, we require the following three technical claims,
which are proven in \Appendix{calcproof}.

\newcommand{\clmfsmall}{
    Let $F \subseteq E_i$ be a cut-induced set. Then $|F|< n^2 2^i$.
}
\begin{claim}
\ClaimName{Fsmall}
\clmfsmall
\end{claim}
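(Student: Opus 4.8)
The plan is to bound $\card{F}$ by controlling two quantities separately: the number of \emph{distinct} multiedges that appear in $F$, and the multiplicity of each such multiedge. Since the underlying graph $G$ in \Theorem{mainthm} is simple, each multiedge of the associated multigraph corresponds to a distinct unordered pair of vertices, so the number of distinct multiedges occurring in $F$ (or indeed in all of $E$) is at most $\binom{n}{2} < n^2/2$.

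The one step that uses any structure is the bound on multiplicities: for each distinct multiedge $e = st$ appearing in $F$, I claim $u_e \le k_e < 2^{i+1}$. The inequality $k_e < 2^{i+1}$ is immediate from $e \in F \subseteq E_i$ together with the definition $E_i = \setst{e \in E}{2^i \le k_e < 2^{i+1}}$. For $u_e \le k_e$, observe that every cut $\delta(S)$ separating $s$ from $t$ must contain all $u_e$ copies of the multiedge $st$, so $u_e \le \card{\delta(S)}$ for every such $S$; taking the minimum over all $s$--$t$ cuts and recalling that $k_e = k_{st}$ is defined as that minimum gives $u_e \le k_e$.

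Combining the two bounds yields
$$
\card{F} ~=~ \sumstack{\text{distinct } e \in F} u_e ~<~ \binom{n}{2} \cdot 2^{i+1} ~\le~ \frac{n(n-1)}{2}\cdot 2^{i+1} ~<~ n^2 2^i,
$$
which is the desired estimate. (The argument in fact gives the slightly sharper bound $\card{F} < \binom{n}{2}\, 2^{i+1}$.)

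I do not expect a genuine obstacle here: the whole claim is a short counting argument, and the only place a definition is invoked is the multiplicity bound $u_e \le k_e$, which is where local edge connectivity enters. Everything else is elementary arithmetic.
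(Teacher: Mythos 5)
Your proof is correct and follows essentially the same route as the paper: bound each multiplicity by $u_e \le k_e < 2^{i+1}$ using the definition of $E_i$, and bound the number of distinct edges by $\binom{n}{2}$ since $G$ is simple. The only difference is that you explicitly justify $u_e \le k_e$ (every $s$--$t$ cut contains all copies of $st$), a fact the paper uses without comment.
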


\newcommand{\clmconc}{
    For any integer $d\ge 0$,
    $$
        \sum_{i\le \lg |C|-2 \lg n - d} \card{C \intersect E_i} ~<~ 2^{1-d} \, |C|.
    $$
}
\begin{claim}
\ClaimName{concentrate_in_large_Ci}
\clmconc
\end{claim}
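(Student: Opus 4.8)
\textbf{Proof proposal for Claim~\ref{clm:concentrate_in_large_Ci}.}
The plan is a direct estimate: bound each term $\card{C\cap E_i}$ using \Claim{Fsmall}, then sum the resulting geometric series. First I would note that for each $i$ the set $F := C\cap E_i$ is either empty (in which case it contributes $0$ to the sum and can be ignored) or a nonempty cut-induced subset of $E_i$, since $C = \delta(S)$ for some $S$ and $F = \delta(S)\cap E_i$. In the latter case \Claim{Fsmall} gives $\card{C\cap E_i} < n^2 2^i$. Hence
$$
    \sum_{i\le \lg|C| - 2\lg n - d} \card{C\cap E_i}
    ~<~ n^2 \sum_{i \le \lg|C| - 2\lg n - d} 2^i,
$$
where the sum on the right is over integers $i$ in the indicated range.

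Next I would evaluate the geometric sum. Writing $t := \lg|C| - 2\lg n - d$, the sum of $2^i$ over all integers $i \le t$ is $2^{\lfloor t\rfloor + 1} \le 2^{t+1}$. Substituting $t$ and simplifying,
$$
    n^2 \cdot 2^{t+1} ~=~ n^2 \cdot 2 \cdot \frac{|C|}{n^2 \, 2^d} ~=~ 2^{1-d}\,|C|,
$$
which is exactly the claimed bound. (Strict inequality is already supplied by the strict inequality in \Claim{Fsmall}; if $C\cap E_i$ is empty for all $i$ in the range the bound holds trivially since $|C| \ge 1$.)

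I do not anticipate a genuine obstacle here: the whole argument is a one-line application of \Claim{Fsmall} followed by summing a geometric series. The only minor point requiring care is the bookkeeping that $i$ ranges over integers (so the geometric sum telescopes to $2^{\lfloor t\rfloor + 1}$ rather than something larger) and that empty intersections $C\cap E_i$, which are not literally ``cut-induced sets,'' simply contribute zero and so do not affect the estimate.
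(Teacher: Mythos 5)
Your proposal is correct and follows essentially the same route as the paper: bound each nonempty $\card{C\cap E_i}$ by $n^2 2^i$ via \Claim{Fsmall} and sum the geometric series to get $n^2\,2^{\lg|C|-2\lg n-d+1}=2^{1-d}\,|C|$. Your extra bookkeeping about empty intersections and the integer range of $i$ is harmless and fine.
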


\newcommand{\clmginv}{
    Define $h : \bR \rightarrow \bR$ by
    $$
    h(x) ~=~ \frac{2x}{\ln(1+\sqrt{x})}.
    $$
    Then $g^{-1}(x) \leq h(x)$ for all $x \geq 0$.
}
\begin{claim}
\ClaimName{ginv}
\clmginv
\end{claim}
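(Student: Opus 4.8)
The plan is to reduce the claim to a purely one-variable inequality. Since $g(y)=(1+y)\ln(1+y)-y$ is a strictly increasing bijection of $[0,\infty)$ onto itself with $g(0)=0$ (as noted after \Equation{gdef}), and $h(x)\ge 0$ for every $x\ge 0$, the inequality $g^{-1}(x)\le h(x)$ is equivalent, $g$ and $g^{-1}$ both being increasing, to $x\le g(h(x))$. At $x=0$ we interpret $h(0)$ as the limiting value $0$ and the inequality is trivial, so assume $x>0$; then $\ln(1+\sqrt x)>0$ and $h$ is well defined. Hence it suffices to prove
$$ g\!\left(\frac{2x}{\ln(1+\sqrt x)}\right) \;\ge\; x \qquad\text{for all } x>0. $$

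To do this I would first replace $g$ by a lower bound that cancels cleanly against the $\ln(1+\sqrt x)$ in the denominator of $h$, namely the elementary inequality
$$ g(y) \;\ge\; \frac{y}{2}\,\ln(1+y) \qquad (y\ge 0). $$
Since $g(y)=(1+y)\ln(1+y)-y$, this rearranges to $(1+\tfrac{y}{2})\ln(1+y)\ge y$, i.e.\ to $\ln(1+y)\ge \tfrac{2y}{2+y}$, which holds because both sides vanish at $y=0$ and the derivative of their difference equals $\tfrac{1}{1+y}-\tfrac{4}{(2+y)^2}=\tfrac{y^2}{(1+y)(2+y)^2}\ge 0$.

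Now set $L:=\ln(1+\sqrt x)>0$ and apply the bound with $y=h(x)=2x/L$:
$$ g(h(x)) \;\ge\; \frac{x}{L}\,\ln\!\left(1+\frac{2x}{L}\right). $$
So it remains to check $\ln(1+2x/L)\ge L=\ln(1+\sqrt x)$, i.e.\ $2x/L\ge\sqrt x$, i.e.\ $2\sqrt x\ge \ln(1+\sqrt x)$, which is immediate from $\ln(1+s)\le s$ applied with $s=\sqrt x$ (in fact there is a factor of $2$ to spare).

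I expect the only real content to be \emph{finding} the intermediate bound $g(y)\ge\tfrac{y}{2}\ln(1+y)$: the constant $\tfrac12$ is pinned down by the expansion $g(y)=\tfrac{y^2}{2}-\tfrac{y^3}{6}+\cdots$ near $0$, while the factor $\ln(1+y)$ is the one that annihilates the denominator of $h$. Once this is in hand the rest is routine one-variable calculus relying only on $\ln(1+y)\ge\tfrac{2y}{2+y}$ and $\ln(1+s)\le s$; no cut-counting or probabilistic ingredient is needed, so this is a self-contained analytic lemma.
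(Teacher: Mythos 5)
Your proof is correct and follows essentially the same route as the paper: both reduce the claim to $g(h(x))\ge x$ via the key bound $g(y)\ge\tfrac{y}{2}\ln(1+y)$ combined with $h(x)\ge\sqrt{x}$ (so $\ln(1+h(x))\ge\ln(1+\sqrt{x})$), and then conclude by monotonicity of $g^{-1}$. The only difference is how the intermediate bound is verified (your derivative check of $\ln(1+y)\ge\tfrac{2y}{2+y}$ versus the paper's double-integration argument), which is an immaterial variation.
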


\vspace{6pt}
\begin{proofof}{\Lemma{wtderror}}
We wish to prove that $|X_C - |C|| \leq O( \eps |C|)$.
We may assume that the conclusions of \Claim{union_bound} hold,
since they hold with high probability.
We use those facts to bound the error contribution from each cut-induced set
$C_i := C \intersect E_i$.

To perform the analysis, we partition the $C_i$'s into three classes,
according to which bad event ($\cA_{C_i}$, $\cB_{C_i}$ or $\cC_{C_i}$)
will be used to analyze the error.
This partitioning depends on the threshold
$$
    t ~:=~ \lg |C| - 4 \lg n - \lg(1/\eps).
$$
The sets of indices are
\begin{align*}
\cI_1 &~=~ \setst{ i }{ \card{C_i} \geq q(C_i)/\ln n } \\
\cI_2 &~=~ \setst{ i }{ 0 < \card{C_i} < q(C_i)/\ln n ~\wedge~ t < i \leq \lg |C| } \\
\cI_3 &~=~ \setst{ i }{ 0 < \card{C_i} < q(C_i)/\ln n ~\wedge~ 0 \leq i \leq t }.
\end{align*}
We remark that
\begin{equation}
    \EquationName{tfirst}
    \lg |C| - t ~\leq~ 5 \lg n
\end{equation}
since we assume that $\epsilon \geq 1/n$.

To analyze the error $X_C - |C|$, we expand it as a sum over cut-induced sets.
\begin{equation}
\EquationName{errorub}
    X_C - |C|
        ~=~ \sum_{i \in \cI_1} (X_{C_i} - |C_i|)
        ~+~ \sum_{i \in \cI_2} (X_{C_i} - |C_i|) 
        ~+~ \sum_{i \in \cI_3} (X_{C_i} - |C_i|) 
\end{equation}

\paragraph{Unweighted graphs.}
For unweighted graphs, the analysis is simple.
Since any cut satisfies $|C| < n^2$, we have $\lg |C| \leq 2 \lg n$ and so $\cI_3 = \emptyset$.
We have assumed that the conclusions of \Claim{union_bound} hold,
so the events $\cA_{C_i}$ and $\cB_{C_i}$ do not occur
(under the stated conditions on $q(C_i)$).
Therefore
\begin{align}
    \nonumber
    |X_C - |C||
        &~\leq~ \sum_{i \in \cI_1} |X_{C_i} - |C_i||
        ~+~ \sum_{i \in \cI_2} |X_{C_i} - |C_i||
        \\\nonumber
        &~\leq~ \sum_{i \in \cI_1} \epsilon |C_i|
        ~+~ \sum_{i \in \cI_2} \frac{ \epsilon q(C_i) }{ \ln n }
        \\\nonumber
        &~\leq~ \epsilon |C|
        ~+~ \frac{\epsilon |C|\card{\cI_2}}{\ln n} 
        \qquad\text{(since $q(C_i) \leq |C|$)} \\
        \EquationName{unwtd}
        &~=~ O(\epsilon |C|)
        \qquad\text{(by \Equation{tfirst}).}
\end{align}
This completes the proof for the unweighted case.

\paragraph{Weighted graphs.}
For weighted graphs the analysis is slightly more complicated because
the number of cut-induced sets $C_i$ that contribute error may be much larger than $\lg n$,
because $\cI_3$ may be non-empty.
To show that the total error is still small, we will need to use the events $\cC_{C_i}$.

Consider \Equation{errorub} again.
The first two sums were analyzed in \Equation{unwtd},
so it suffices to analyze the third sum.
First we prove a lower bound on this sum:
$$
\sum_{i \in \cI_3} (|C_i| - X_{C_i}) ~\leq~ \sum_{i \in \cI_3} |C_i| ~=~ O( \epsilon |C| ),
$$
by \Claim{concentrate_in_large_Ci} and the definition of $t$.

Now we prove an upper bound.
By \Claim{union_bound}, we may assume that the events $\cC_{C_i}$ do not hold.
\begin{align}
    \nonumber
    \sum_{i \in \cI_3} (X_{C_i} - |C_i|) 
        ~\leq~& \sum_{i \in \cI_3} g^{-1}\Bigg( \frac{ \epsilon^2 q(C_i) }{ |C_i| \ln n } \Bigg) \cdot |C_i|
\intertext{We use the fact that $g^{-1}$ is monotonically increasing and that $|C| \geq q(C_i)$.
This yields}
    \nonumber
    ~\leq~& \sum_{i \in \cI_3}
            g^{-1}\Bigg( \frac{ \epsilon |C| }{ |C_i| \ln n } \Bigg) \cdot |C_i|
        \\\nonumber
    ~\leq~& \sum_{i \in \cI_3}
            h\Bigg( \frac{\epsilon |C|}{|C_i| \ln n} \Bigg) |C_i|
            \qquad\text{(by Claim~\ref{clm:ginv})}\\\nonumber
    ~=~& O\Bigg(\frac{\epsilon |C|}{\ln n} \Bigg) \sum_{i \in \cI_3}
            \frac{1}{\log \Big(1 + \sqrt{\frac{\epsilon |C|}{|C_i| \ln n}}\Big)}
            \\\nonumber
    ~\leq~& O\Bigg( \frac{ \epsilon |C| }{\log n} \Bigg) \sum_{i \in \cI_3}
            \frac{1}{\lg \Big( \frac{ \epsilon |C| }{ n^2 \lg(n) 2^i } \Big)}
        \qquad\text{(by \Claim{Fsmall})}\\\nonumber
    ~=~& O\Bigg( \frac{\epsilon |C| }{\log n} \Bigg) \sum_{t-i \in \cI_3}
            \frac{1}{\lg \Big( \frac{ \epsilon |C| }{ n^2 \lg(n) 2^{t-i} } \Big)}
            \\\nonumber
    ~\leq~& O\Bigg( \frac{\epsilon |C| }{\log n} \Bigg) \sum_{t-i \in \cI_3}
            \frac{1}{i  + \lg\Big( \frac{ \epsilon |C| }{ n^2 \lg(n) 2^t } \Big)}
    \\\EquationName{harmonic}
    ~\leq~& O\Bigg( \frac{\epsilon |C| }{\log n} \Bigg) \sum_{t-i \in \cI_3}
        \frac{1}{i + \lg n}.
\end{align}
The last inequality holds since
$$
    \frac{\epsilon |C|}{n^2 \lg(n) 2^t}
    ~=~ \frac{\epsilon |C|}{n^2 \lg(n) 2^{\lg |C| - 4 \lg n - \lg(1/\eps)} }
    ~\geq~ \frac{n^2}{\lg n}.
$$
The sum in \Equation{harmonic} is a subseries of a harmonic series with at most $n^2$ terms
(since there are at most ${n\choose 2}$ distinct $\kappa_e$ values)
so the value of this sum is $O(\log n)$.
Thus we have shown that the third sum in \Equation{errorub} is at most $O(\eps |C|)$.
\end{proofof}

\subsection{The size of the sparsifier}

To complete the proof of \Theorem{mainthm},
it remains to show that $G'$ does not have too many edges,
i.e., \Equation{sizebound} holds.
We have
\begin{align*}
\expect{ \card{E'} }
    &~=~ \sum_{\text{edge } e \in E}
        \prob{ \text{at least one copy of $e$ is sampled in one of the rounds} } \\
    &~=~ \sum_{\text{edge } e \in E} 1 - \Big( 1 - \frac{1}{\kappa_e} \Big)^{\rho u_e} \\
    &~\leq~ \rho \sum_{\text{edge } e \in E} \: \frac{u_e}{\kappa_e},
\end{align*}
so the right-hand side of \Equation{sizebound} is at least $2 \expect{ \card{E'} } + O(\log n)$.
By a simple Chernoff bound, it follows that \Equation{sizebound} holds with high probability.

\section{The cut-counting theorem}
\SectionName{splitting_off_alg}

In this section, we prove \Theorem{count_cut_class}, which is our
generalization of Karger's cut-counting theorem~\cite{KargerContract,KargerStein}.
The proof of Karger's theorem is based on his randomized contraction algorithm for
finding a global minimum cut of a graph.
Roughly speaking he shows that, for any small cut-induced set, it has non-negligible probability
of being output by the algorithm, and hence the number of small cut-induced sets must be small.
We will prove our generalized cut-counting theorem by analyzing a variant of the contraction
algorithm which incorporates the additional idea of splitting off vertices.

The formal statement of our theorem is:

\medskip
\noindent\textbf{\Theorem{count_cut_class}.~}{\it \thmcountcutclass}

This theorem becomes easier to understand by restating it using the terminology of 
\Section{sample_with_connectivity} (cf.~\Corollary{count_canonical_cut}).
A cut-induced subset of $B$ is precisely a set of the form
$\delta(S) \intersect B$, so the theorem is counting cut-induced sets
satisfying some condition.
This condition is: for a cut-induced set $F$,
there must exist $S \subseteq V$ with $\card{\delta(S)} \leq \alpha K$
and $F = \delta(S) \intersect B$.
This condition is equivalent to $q(F) \leq \alpha K$,
where $q$ is the function defined in \Equation{qdef}.
So the conclusion of \Theorem{count_cut_class} can be restated as
$$
\card{ \setst{ F }{ F \text{\normalfont\ is a cut-induced subset of } B ~\wedge~ q(F) \leq \alpha K } }
~<~ n^{2\alpha}.
$$

\paragraph{Comparison to Karger's theorem.}
For the sake of comparison, Karger's theorem is:

\begin{theorem}[Karger \protect\cite{KargerContract,KargerStein}]
\TheoremName{karger}
Let $G=(V,E)$ be a connected graph and let $B \subseteq E$ be arbitrary.
Suppose that the value of the global minimum cut is at least $K$.
Then, for every real $\alpha \geq 1$,
$$
\card{ \setst{ C }{ C \text{\normalfont\ is a cut in } G ~\wedge~ \card{C} \leq \alpha K } }
~<~ n^{2\alpha}.
$$
\end{theorem}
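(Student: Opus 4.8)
\medskip
\noindent\textbf{Proof proposal.}
The plan is to imitate Karger's proof of \Theorem{karger}: I will describe a randomized ``Steiner contraction'' procedure $\mathcal{A}$ which, given $G$, $B$ and $K$, always outputs some cut-induced subset of $B$, and show that for every fixed cut-induced $F^\star \subseteq B$ with $q(F^\star)\le\alpha K$ one has $\prob{\mathcal{A}\text{ outputs }F^\star} > n^{-2\alpha}$. Since $\mathcal{A}$ returns a single set, these events are disjoint, so summing over all such $F^\star$ gives $\card{\setst{F}{F\text{ is a cut-induced subset of }B~\wedge~q(F)\le\alpha K}}\cdot n^{-2\alpha} < 1$, which is the theorem. (Setting $B=E$ and $K$ equal to the global minimum cut recovers \Theorem{karger}.)

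The procedure maintains a loopless multigraph $H$, initialized to $G$; it never contracts or deletes a $B$-edge, so $B$ is carried along as a sub-multiset $B_H\subseteq E(H)$, and I call a vertex a \emph{terminal} if it is currently incident to some $B$-edge. The invariant $\lambda_H(x,y)\ge K$ is maintained for every pair $x,y$ joined by a $B$-edge; in particular every terminal has degree at least $K$ in $H$. The procedure repeats the following while $\card{V(H)}>\lceil 2\alpha\rceil$: (i) if some non-terminal $v$ has $\deg_H(v)<K$, eliminate $v$ by a complete splitting-off at $v$ that preserves $\lambda_H(x,y)$ for all $x,y\ne v$ (Lov\'asz's splitting-off theorem, when $\deg_H(v)$ is even; the odd case is discussed below); (ii) otherwise every vertex has degree at least $K$, hence $\card{E(H)}\ge\card{V(H)}\,K/2$, and we contract a uniformly random edge of $H$, discarding any loops created. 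When $\card{V(H)}\le\lceil 2\alpha\rceil$, pick a uniformly random subset $S$ of the remaining super-vertices and return the cut-induced subset of $B$ corresponding to $\delta_H(S)$.

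For the analysis fix $F^\star$ together with a set $\hat S$ witnessing $q(F^\star)\le\alpha K$, i.e.\ $\delta_G(\hat S)\intersect B=F^\star$ and $\card{\delta_G(\hat S)}\le\alpha K$, and track the cut $\delta_G(\hat S)$ through the run as an \emph{unordered} bipartition of the current vertex set. Every step of type (i) is harmless: since a bipartition and its reverse have the same set of crossing edges, I may always take the eliminated non-terminal $v$ to lie on the side \emph{not} containing $\hat S$, and as $v$ is a non-terminal no $B$-edge meets $v$, so the intersection of the tracked cut with $B$ stays equal to $F^\star$; moreover a one-line count of the edges at $v$ shows that for \emph{any} complete splitting-off one has $\card{\delta_{H'}(S)}\le\card{\delta_H(S)}$ for every $S$ avoiding $v$, so the tracked cut stays of size at most $\alpha K$. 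A step of type (ii) is \emph{bad} only if it contracts an edge crossing the tracked cut, which has probability at most $\card{\delta_H(\hat S)}/\card{E(H)}\le 2\alpha/\card{V(H)}$; a contraction that is not bad leaves the crossing-edge set of the tracked cut, hence its size and its intersection with the surviving $B$-edges (which remains $F^\star$, since any contracted $B$-edge was non-crossing), unchanged. Conditioned on making no bad contraction, the tracked cut descends to a bipartition of the final $\le\lceil 2\alpha\rceil$ super-vertices; the final random choice selects exactly that bipartition with probability at least $2^{1-\lceil 2\alpha\rceil}$, and the output is then precisely $F^\star$. Each step of type (i) contributes a survival factor $1$, and the type-(ii) steps occur at pairwise distinct values of $\card{V(H)}$ in $\{\lceil 2\alpha\rceil+1,\dots,n\}$, so multiplying the per-step survival probabilities bounds $\prob{\mathcal{A}\text{ outputs }F^\star}$ from below by $\bigl(\prod_{s=\lceil 2\alpha\rceil+1}^{n}(1-2\alpha/s)\bigr)\,2^{1-\lceil 2\alpha\rceil}$, which is strictly larger than $n^{-2\alpha}$ by exactly the estimate in \cite{KargerContract,KargerStein}.

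The part needing real care is the splitting-off when $\deg_H(v)$ is odd, together with the small-degree cases and the possibility that $v$ is incident to a bridge (which are excluded by Mader's refinement of the splitting-off theorem). The intended remedy is to split off all but one edge at $v$, preserving the connectivities $\lambda_H(x,y)$, then reassign $v$ --- which is legitimate, $v$ being a non-terminal --- to the side of the tracked cut containing its unique remaining neighbour, and finally contract that last (now non-crossing) edge; this deterministically eliminates $v$ with no crossing-edge risk. One must verify throughout that these reductions never drop $\lambda_H(x,y)$ below $K$ for the endpoints of a $B$-edge, as this is exactly what licenses the inequality $\card{E(H)}\ge\card{V(H)}\,K/2$ invoked at every random-contraction step. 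Everything else is a direct transcription of the standard analysis of the contraction algorithm.
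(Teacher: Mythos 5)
Your proposal is, at bottom, the same contraction-algorithm argument that the paper relies on: \Theorem{karger} is cited from Karger and is recovered in the paper as the special case $B=E$, $K=$ global min cut value of \Theorem{count_cut_class}, whose proof (\Algorithm{contract} together with Claims \ref{clm:splitoff}, \ref{clm:contract} and \ref{clm:output}) has exactly the skeleton you describe: per-contraction survival probability $1-2\alpha/r$ coming from a minimum-degree bound of $K$, a final uniformly random bipartition contributing a factor of roughly $2^{1-\lceil 2\alpha\rceil}$, disjointness of the output events, and the factorial estimate of \Claim{output}. For the statement actually being proved your argument is sound, essentially because the Steiner machinery you bolt on is vacuous here: with $B=E$ and $G$ connected every vertex is a terminal (and has degree at least the min cut $\geq K$, which is preserved under contraction), so your step (i) never fires and your procedure \emph{is} Karger's contraction algorithm. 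It would strengthen the write-up to say this explicitly rather than leave the specialization implicit.

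The caveat concerns the more general statement you route through, which is the paper's \Theorem{count_cut_class}. The splitting-off step as you justify it does not stand: Lov\'asz's splitting-off theorem preserves a connectivity threshold only under the hypothesis that $\lambda_H(x,y)\ge K$ for \emph{all} pairs $x,y\neq v$, whereas your invariant supplies this only for pairs joined by a $B$-edge; the correct tool is Mader's theorem (\Theorem{mader}), which preserves all local connectivities but excludes $\deg(v)=3$ and vertices incident to cut edges, and your ``split off all but one edge, then contract the last edge'' patch for odd degree can get stuck at exactly those exclusions (the degree can reach $3$, or a cut edge at $v$ can appear, partway through). The paper disposes of all of this with one move: duplicate every edge at the outset, so that the graph is Eulerian and its components are $2$-edge-connected --- properties preserved by contraction, splitting-off and loop removal --- whence a complete admissible splitting of every white vertex is always available and cut sizes, $K$, and hence $\alpha$ are unaffected. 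If you intend your proof to cover the general cut-induced-set theorem and not just \Theorem{karger}, adopt that doubling trick (or otherwise prove the parity/bridge cases cannot block you); for \Theorem{karger} itself, noting that $B=E$ makes step (i) empty completes your argument as written.
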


Our theorem improves on Karger's theorem in two ways.
First of all, we count cut-induced sets instead of cuts.
This is clearly more general and, as we mentioned before,
it is useful because it avoids overcounting cut-induced sets that are shared by many cuts.
Secondly, we want to bound the number of ``small'' cut-induced sets in $B$.
The bounds given by both theorem are $n^{2\alpha}$,
where $\alpha$ measures how small a cut or a cut-induced set is.
However in our cut-counting lemma, $\alpha$ is 
relative to $K$, the size of a smallest cut that intersects with $B$,
not relative to the size of a global minimum cut as in Karger's theorem.
This is an improvement since the global minimum cuts may not intersect $B$ at all,
so the global minimum cut value could be much smaller than $K$. 

For concreteness, consider the example in \Figure{example1},
which appears in \Appendix{gapexample}.
Suppose we want to bound the number of cuts that intersect with $E_{\lg n}$.
(Here $E_{\lg n}$ consists of the single edge $st$.)
Note that all such cuts have size $n-1$.
However the global minimum cuts all have size $2$, and they do not intersect with $E_{\lg n}$.
From \Theorem{karger} we see that there are at most $n^{n-1}$ cuts of size at most $n-1$
that intersect $E_{\lg n}$.
In contrast, \Theorem{count_cut_class} states that
there are at most $n^2$ cut-induced subsets of $E_{\lg n}$ 
that are induced by cuts of size at most $n-1$.

\paragraph{A weaker theorem based on effective conductance.}
We have also proven a weaker version of \Theorem{count_cut_class} which does not
suffice to prove \Theorem{mainthm} but does suffice to prove
\Corollary{sample_with_conductance}.
This weaker version is stated as \Theorem{count_cut_2};
it is weaker because the hypothesis $c_e \geq K$ is stronger than the hypothesis $k_e \geq K$,
by \Claim{ck}.

\begin{theorem}
\TheoremName{count_cut_2}
Let $G=(V,E)$ be a graph and let $B \subseteq E$ be arbitrary.
Suppose that $c_e \geq K$ for every $e \in B$.
Then, for every real $\alpha \geq 1$,
$$
\card{ \setst{ F }{ F \text{\normalfont\ is a cut-induced subset of } B ~\wedge~ q(F) \leq \alpha K } }
~<~ n^{2\alpha}.
$$
\end{theorem}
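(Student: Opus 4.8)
The quickest route is actually a one-line deduction: by \Claim{ck} we have $k_e \ge c_e$, so the hypothesis $c_e \ge K$ for every $e \in B$ implies $k_e \ge K$ for every $e \in B$, which is exactly the hypothesis of \Theorem{count_cut_class}; since its conclusion is verbatim the conclusion we want, \Theorem{count_cut_2} follows immediately. Nonetheless it is worth giving a self-contained proof that parallels Karger's contraction argument and does not invoke the splitting-off machinery, since it is this simpler argument that is actually needed for \Corollary{sample_with_conductance}. I would present both, but spend the bulk of the section on the direct proof.

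For the direct proof the plan is the usual Karger-style ``output-and-count'' framework: exhibit a randomized algorithm that outputs a cut-induced subset of $B$ (or fails), show that every \emph{fixed} cut-induced $F$ with $q(F) \le \alpha K$ (equivalently, admitting some $S$ with $\delta(S)\intersect B = F$ and $\card{\delta(S)}\le\alpha K$; cf. \Equation{qdef}) is output with probability strictly greater than $n^{-2\alpha}$, and conclude from the disjointness of the output events that the number of such $F$ is less than $n^{2\alpha}$. The algorithm is a variant of Karger's random contraction that is driven by random walks / random spanning trees rather than by uniformly random edges. The structural fact that makes the conductance hypothesis usable is Rayleigh monotonicity: contracting an edge is an electrical short, so it can only increase effective conductances. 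Hence every surviving $B$-edge retains conductance $\ge K$ throughout the run, and therefore (using $c_e \le k_e \le \card{\delta(S)}$, i.e.\ \Claim{ck} again) the sub-multigraph spanned by the surviving $B$-edges always has minimum cut $\ge K$.

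Given a fixed $F$ and a witness $S^*$ with $\card{\delta(S^*)} = q(F) \le \alpha K$, one then bounds the probability that the algorithm ever contracts a ``dangerous'' edge --- one whose contraction would destroy the property that the eventual $2$-partition induces $F$ on $B$. The heart of the matter is to argue that at a stage with $j$ supernodes a random spanning tree of the current multigraph contains at most $\alpha$ dangerous edges in expectation (an effective-resistance / Foster-type estimate, exploiting $c_e \ge K$ on the relevant edges), so that by Markov the per-stage contraction probability is at most $2\alpha/j$; the telescoping product $\prod_{j>2\alpha}(1-2\alpha/j) \ge 1/\binom{n}{2\alpha} > n^{-2\alpha}$ then lower-bounds the survival probability, and conditioned on survival the final partition induces exactly $F$ on $B$.

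The main obstacle is precisely the step asserting that few tree edges are dangerous: the witness cut $\delta(S^*)$ may contain a large number of edges \emph{outside} $B$, and those can have arbitrarily small conductance, so the naive expectation bound over all of $\delta(S^*)$ fails. The way around this --- and the reason random walks enter --- is to arrange the algorithm so that it first contracts away the non-$B$ part of the graph (an operation that only raises conductances), reducing to a multigraph all of whose edges lie in $B$ and which therefore has minimum cut $\ge K$; on that reduced graph one may apply Karger's theorem (\Theorem{karger}) essentially verbatim. The delicate point is that this preliminary reduction is randomized and different runs preserve different families of cut-induced subsets of $B$, so the whole analysis must be carried out relative to the fixed target $F$; managing that bookkeeping is where the real work lies.
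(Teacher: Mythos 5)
Your one-line deduction is valid: by \Claim{ck} we have $c_e \leq k_e$, so $c_e \geq K$ on $B$ implies $k_e \geq K$ on $B$, and \Theorem{count_cut_2} is then literally a special case of \Theorem{count_cut_class}. This is exactly the sense in which the paper calls the theorem ``weaker''; the reason the paper nonetheless gives a separate proof (in \Appendix{random_walk_alg}) is that the random-walk analysis is of independent interest, not that the reduction fails. So the statement itself is established by your first sentence.

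The self-contained argument you then sketch, however, has a genuine gap, and it is not mere bookkeeping. Your fallback plan --- ``first contract away the non-$B$ part of the graph'' and then apply \Theorem{karger} to the resulting all-black multigraph --- destroys the very object being counted: the witness cut $\delta(S^*)$ with $\card{\delta(S^*)} = q(F) \leq \alpha K$ will in general contain many white edges, and contracting any one of them identifies a vertex of $S^*$ with a vertex outside $S^*$, after which $F$ need not be induced by any cut of size at most $\alpha K$, or by any cut at all. Rayleigh monotonicity keeps conductances large but does nothing to protect the inducibility of $F$, and since the white edges of $\delta(S^*)$ can be numerous and of arbitrarily small conductance, the probability of contracting one of them admits no $2\alpha/j$-type bound --- this is the same failure you correctly identified for the Foster-type expectation estimate, simply reappearing inside the ``preliminary reduction''. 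The paper's two proofs dodge precisely this trap by different mechanisms: \Algorithm{contract} removes white vertices by \emph{admissible splitting-off} (Mader's theorem, after doubling edges), which only shrinks cuts and never touches black edges, so the invariants (I1)--(I3) are preserved; \Algorithm{contractRW} never contracts the white part wholesale but contracts random-walk trajectories between black vertices, and \Lemma{rw} bounds the probability that such a walk crosses the witness cut $C$ by $2\card{C}/(Kr)$ via the reciprocity law (\Lemma{switch}) together with the fact that $\cond_{u,W\setminus\set{u}} \geq K$ at every remaining black vertex; the $n^{-2\alpha}$ output probability then follows as in \Claim{output}. A complete direct proof along your lines would need an analogue of one of these mechanisms for handling the white edges of the witness cut; as written, that step --- which is the heart of the matter --- is missing.
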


Although this theorem is weaker than \Theorem{count_cut_class},
we feel that it is worth including in this paper because its proof is based on analysis of
random walks that may be of independent interest.
The proof is in \Appendix{random_walk_alg}.

\subsection{The generalized contraction algorithm}

\Theorem{count_cut_class} follows immediately from \Theorem{contract_with_split},
which is the analysis of our generalized contraction algorithm
(\Algorithm{contract}).
Henceforth, we will use the following terminology.
The edges in $B$ are called \newterm{black}.
Also, a cut is \newterm{black} if it contains a black edge,
and a vertex is black if it is incident to a black edge.
An edge, vertex or cut is \newterm{white} if it is not black.

\begin{theorem}
\TheoremName{contract_with_split}
For any cut-induced set $F \subseteq B$ with $q(F) \leq \alpha K$,
\Algorithm{contract} outputs $F$ with probability at least $n^{-2\alpha}$.
\end{theorem}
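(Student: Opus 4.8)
The plan is to mimic Karger's analysis of the contraction algorithm, but with two modifications that let us "protect" the black structure. The key idea is that $q(F) \le \alpha K$ is the relevant measure of how hard $F$ is to produce, and $K$ is a lower bound on the weight of \emph{every} black cut, so whenever the algorithm contracts a uniformly random edge, the chance of destroying $F$ (i.e., contracting an edge that crosses the cut $\delta(S)$ achieving $q(F)$) is governed by the usual ratio $q(F)/|E'|$, where $|E'|$ is the current number of edges. The trouble, as in any ``Steiner'' variant, is that we cannot contract the graph all the way down to two vertices: the white part of the graph may have tiny cuts that would be destroyed long before the black structure is isolated. The fix, which I expect \Algorithm{contract} implements, is \emph{splitting off} white vertices --- i.e., eliminating a white vertex $v$ by replacing its incident edges with a weighted clique on its neighbors that preserves all pairwise connectivities (Lovász/Mader splitting-off). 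This removes problematic white degree-low vertices without decreasing any $k_e$ for black edges $e$, hence without decreasing $K$, and without changing which cuts induce $F$ on $B$.

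First I would set up the invariant: at every step, the current (multi)graph $G'$ still contains $F$ as a cut-induced subset of $B$ with $q_{G'}(F) \le \alpha K$ --- more precisely, there is still a cut $\delta_{G'}(S)$ with $\delta_{G'}(S) \cap B = F$ and $|\delta_{G'}(S)| \le \alpha K$ --- and the minimum black cut value is still at least $K$. The algorithm alternates: if there is a white vertex, split it off (this is ``free'' --- it never harms the invariant, by properties of splitting-off, and when no white vertices remain the graph has at most... well, some bounded number of black vertices); otherwise, contract a uniformly random edge. The survival probability of $F$ under one random contraction is at least $1 - q(F)/|E'| \ge 1 - \alpha K/|E'|$. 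The standard trick is to lower-bound $|E'|$ in terms of the current number of vertices $r$ using the min-cut bound: since every vertex has degree at least $K$ (here I need that even after splitting off, each remaining vertex is black and its degree is at least the minimum black cut $\ge K$), we get $|E'| \ge rK/2$, so the survival probability of a single contraction is at least $1 - 2\alpha/r$. Multiplying these over the contraction steps from $r = n$ (or wherever contractions begin) down to $2$, and noting that splitting-off steps contribute probability $1$, telescopes to a product $\ge \prod_{r} (1 - 2\alpha/r)$, which is the familiar $\binom{n}{\lceil 2\alpha\rceil}^{-1} \ge n^{-2\alpha}$ bound (with the usual care when $2\alpha$ is not an integer --- stop contracting when $r \le \lceil 2\alpha \rceil$ and argue $F$ is already isolated, or argue $\delta(S)$ is still intact among few vertices).

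The main obstacle --- and the step deserving the most care --- is verifying that splitting off a white vertex genuinely preserves everything we need: (a) it does not decrease $k_e$ for any black edge, so $K$ is preserved as a lower bound on black cuts; (b) it does not destroy the witnessing cut, i.e., if $v \notin S$ is split off then $\delta(S)$ only shrinks or stays the same in size and still induces $F$ on $B$, and if $v \in S$ a symmetric statement holds (here one must check $v$ can always be chosen on the appropriate side, or that splitting-off preserves the existence of \emph{some} witnessing cut of size $\le \alpha K$); and (c) after all white vertices are gone, the contraction phase only ever contracts among black vertices, so the degree bound ``every vertex has degree $\ge K$'' holds throughout that phase. Property (a) is exactly the content of the classical splitting-off theorem (for a vertex of even degree, or with the appropriate parity handling); properties (b) and (c) are bookkeeping on top of it. I would also need to confirm that the count is $< n^{2\alpha}$ and not $\le$: this comes, as in Karger, from the strict inequality in the product bound or from the observation that distinct cut-induced sets correspond to disjoint ``success'' events, so their number times $n^{-2\alpha}$ is at most $1$, hence the number is at most $n^{2\alpha}$, and strictness follows from a slack in one of the steps (e.g., $q(F) < \alpha K$ for all but a measure-zero set of $\alpha$, or the first contraction when $|E'| > \alpha K$ strictly). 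Finally, \Theorem{count_cut_class} follows from \Theorem{contract_with_split} by the standard disjointness-of-outputs argument: the events ``\Algorithm{contract} outputs $F$'' are disjoint over distinct cut-induced sets $F$ with $q(F) \le \alpha K$, each has probability $\ge n^{-2\alpha}$, so there are fewer than $n^{2\alpha}$ of them.
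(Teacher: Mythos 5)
Your proposal follows essentially the same route as the paper's proof: maintain the invariants that $F$ stays cut-induced with a witnessing cut of size at most $\alpha K$ and that black connectivity stays at least $K$, observe that admissible splitting-off of white vertices is ``free'' for these invariants, bound the per-contraction failure probability by $2\alpha/r$ using that every remaining vertex is black with degree at least $K$, and finish with Karger's telescoping product (the paper stops at $\lceil 2\alpha\rceil$ vertices and pays an extra $2^{1-\lceil 2\alpha\rceil}$ for the final random-subset step, absorbed via $2^{R-1}\le R!$). The only detail you leave open --- guaranteeing Mader's hypotheses so that admissible splittings always exist --- is handled in the paper by duplicating every edge to make the graph Eulerian with $2$-edge-connected components, which is preserved by all of the algorithm's operations.
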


\begin{algorithm}
\begin{alg}
\item	\textbf{procedure} Contract($G$,\, $B$,\, $\alpha$)
\item	\textbf{input:} A graph $G=(V,E)$, a set $B \subseteq E$, and an approximation factor $\alpha$
\item	\textbf{output:} A cut-induced subset of $B$
\item	While there are more than $\lceil 2\alpha \rceil$ vertices remaining
    \begin{alg}
    \item	While there exists a white vertex $v$
        \begin{alg}
        \item   Perform admissible splitting-off at $v$ until $v$ becomes an isolated vertex
        \item   Remove $v$
        \end{alg}
    \item	Pick an edge $e$ uniformly at random
    \item	Contract $e$ and remove any self loops
    \end{alg}
\item	Pick a non-empty proper subset $S$ of $V$ uniformly at random
        and output the black edges with exactly one endpoint in $S$
\end{alg}
\caption{An algorithm for finding a small cut-induced set by splitting off white vertices.}
\AlgorithmName{contract}
\end{algorithm}

\Algorithm{contract} is essentially the same as Karger's contraction algorithm,
except that it maintains the invariant that $G$ has no white vertex by \newterm{splitting-off}.
For a pair of edges $uv$ and $vw$, splitting-off $uv$ and $vw$
is the operation that removes $uv$ and $vw$ then adds a new edge $uw$.
This splitting-off operation is \newterm{admissible} 
if it does not decrease the (local) edge connectivity
between any pair of vertices $s$ and $t$, except of course when one of those vertices is $v$.
Splitting-off has many applications in solving connectivity problems
because of the following theorem.

\begin{theorem}[Mader \protect\cite{Mader}]
\TheoremName{mader}
Let $G=(V,E)$ be a connected graph and $v\in V$ be a vertex.
If $v$ has degree $\neq 3$ and $v$ is not incident to any cut edge,
then there is a pair of edges $uv$ and $vw$ such that
the splitting-off of $uv$ and $vw$ is admissible.
\end{theorem}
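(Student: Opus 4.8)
The plan is to follow the classical uncrossing approach to splitting-off, recast in the language of ``dangerous'' sets. For a vertex set $X$, let $f(X)$ denote the number of edges (counted with multiplicity) having exactly one end in $X$; in particular $f(\{v\})$ is the degree of $v$. For $X\subseteq V\setminus\{v\}$ with $\emptyset\neq X\neq V\setminus\{v\}$, put $R(X):=\max\{\,k_{st}\ :\ s\in X,\ t\in V\setminus(X\cup\{v\})\,\}$, and call $X$ \emph{dangerous} if $v\notin X$ and $f(X)\le R(X)+1$. I would first dispatch the trivial degree cases: since $G$ is connected and $v$ lies on no cut edge, $f(\{v\})\ge 2$; if two edges at $v$ are parallel, splitting that pair creates only a loop and alters no pairwise connectivity, so it is admissible; and if $f(\{v\})=2$ with distinct neighbours $a,b$, then in any family of edge-disjoint $s$--$t$ paths ($s,t\neq v$) at most one path passes through $v$, and rerouting its $a$--$v$--$b$ portion along the new edge $ab$ shows the split is admissible. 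So, using $f(\{v\})\neq 3$, we may assume $v$ has at least four distinct neighbours. A short computation with Menger's theorem gives the crucial reformulation: splitting off $uv$ and $vw$ changes $f(X)$ (for $v$-avoiding sets $X$) only when $\{u,w\}\subseteq X$, where it drops by exactly $2$; hence a pair $(uv,vw)$ of edges at $v$ is \emph{not} admissibly splittable if and only if some dangerous set contains both $u$ and $w$.

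Suppose, for contradiction, that no pair of neighbours of $v$ is admissibly splittable, so that every pair of neighbours of $v$ lies in a common dangerous set. The technical heart is an uncrossing lemma for dangerous sets: if $X$ and $Y$ are dangerous, $X\cap Y\neq\emptyset$, and $X,Y$ cross, then either $X\cup Y$ is dangerous, or both $X\setminus Y$ and $Y\setminus X$ are dangerous, with the degenerate case $X\cup Y=V\setminus\{v\}$ handled separately. I would prove this by combining the submodular and posimodular inequalities of the cut function, $f(X)+f(Y)\ge f(X\cap Y)+f(X\cup Y)$ and $f(X)+f(Y)\ge f(X\setminus Y)+f(Y\setminus X)$, with a skew-supermodular property of the requirement function $R$: for the pairs realizing $R(X)$ and $R(Y)$, a case analysis on which of the four ``corner'' sets $X\cap Y$, $X\cup Y$, $X\setminus Y$, $Y\setminus X$ contains their endpoints shows that $R(X)+R(Y)$ is at most $R(X\cap Y)+R(X\cup Y)$ or at most $R(X\setminus Y)+R(Y\setminus X)$.

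Granting the uncrossing lemma, I would pick a dangerous set $M$ that maximizes the number of edges of $v$ with an end in $M$ (breaking ties to make $|M|$ as large as possible); by the contradiction hypothesis this number is at least $2$, so $M$ contains some neighbour $u_0$ of $v$. If more than one edge of $v$ left $M$, choose a neighbour $w\notin M$ of $v$ and a dangerous set $Y$ with $u_0,w\in Y$. Since $u_0\in M\cap Y$ the sets are not disjoint; the inclusions $M\subseteq Y$ or $Y\subseteq M$ each contradict the choice of $M$ (as $w\in Y\setminus M$), so $M$ and $Y$ cross. The uncrossing lemma then must land in the branch where $M\cup Y$ is dangerous: the branch yielding $M\setminus Y$ discards $u_0$ and, with the extremal choice of $M$ and $Y$, cannot arise, while the degenerate branch $M\cup Y=V\setminus\{v\}$ already forces $f(\{v\})\le 3$ and is excluded. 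But $M\cup Y\supsetneq M$ has strictly more $v$-edges, contradicting maximality. Hence at most one edge of $v$ leaves $M$. Now let $p$ be the number of $v$--$M$ edges and $q\le 1$ the number of remaining edges at $v$, so $p=f(\{v\})-q$ and $f(M\cup\{v\})=f(M)-p+q\le f(M)-f(\{v\})+2$. But $M\cup\{v\}$ still separates the pair witnessing that $M$ is dangerous, so $f(M\cup\{v\})\ge R(M)\ge f(M)-1$; combining, $f(\{v\})\le 3$, contradicting $f(\{v\})\ge 4$.

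The main obstacle is the uncrossing lemma together with the extremal argument that forces the favourable branch: verifying the skew-supermodularity of $R$ needs a careful case analysis, and one must choose $M$ and the auxiliary set $Y$ so that the branch producing $M\setminus Y$ never occurs, while separately ruling out the degenerate union $X\cup Y=V\setminus\{v\}$. By contrast, the easy-degree reductions and the closing counting argument are short; everything rests on the combinatorics of crossing dangerous sets, which is exactly the delicate part of Mader's theorem.
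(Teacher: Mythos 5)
You should first note that the paper does not prove this statement at all --- it is quoted from Mader \cite{Mader} as a black box --- so what you are really proposing is a new proof of a known and genuinely difficult theorem, and the sketch breaks down exactly at its difficult core. The decisive gap is the uncrossing dichotomy together with the claim that the extremal choice of $M$ forces the branch in which $M\cup Y$ is dangerous. After the initial degree reductions your argument never uses the hypothesis that $v$ lies on no cut edge, and without it the key step is simply false: take $G=K_{1,4}$ with centre $v$. There is no admissible pair at $v$; the dangerous sets are exactly the one- and two-element sets of leaves; the extremal $M$ is a two-leaf set with two $v$-edges entering and two leaving; and for $Y=\{u_0,w\}$ with $u_0\in M$, $w\notin M$, the sets $M,Y$ cross, $M\cup Y$ is \emph{not} dangerous ($f(M\cup Y)=3>R(M\cup Y)+1=2$), while both $M\setminus Y$ and $Y\setminus M$ \emph{are} dangerous. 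So the branch you assert ``cannot arise'' is the only one that occurs, and the conclusion ``at most one edge of $v$ leaves $M$'' fails. Of course $K_{1,4}$ has cut edges at $v$, but that is precisely the point: the no-cut-edge hypothesis must enter the uncrossing/extremal analysis itself, and your sketch provides no mechanism for it. Relatedly, the ``skew-supermodularity'' of $R$ you invoke does not fall out of the witness-pair case analysis in the clean form you need: the witnesses can land so that, e.g., $R(X\setminus Y)\ge R(X)$ while $R(X\cap Y)\ge R(Y)$, a mixed pairing that is matched by neither the submodular nor the posimodular inequality. Handling these mixed cases --- using the $v$--$(X\cap Y)$ edges that appear in the cross term of the posimodular identity, and ultimately a cover of the neighbours of $v$ by at most three maximal dangerous sets with an exact edge count --- is the actual content of Mader's theorem (cf.\ Frank's simplified proof), and it is substantially more involved than a single maximal dangerous set plus a counting step.

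A smaller but genuine error: the parallel-edge reduction is wrong. Splitting off two parallel edges at $v$ need not be admissible: take $V=\{v,x,x',u,y\}$ with edges $vx$, $vx'$, two parallel copies of $vu$, $uy$, $xy$, $x'y$. Here $v$ has degree $4$ and is incident to no cut edge, yet removing the two copies of $vu$ drops $\lambda(x,u)$ from $2$ to $1$, because a path may use only \emph{one} copy of $vu$ together with a different edge at $v$; your rerouting only handles paths that use both copies. (This reduction is not actually needed for your main argument, but as stated it is false.) The parts of the sketch that are fine --- the admissibility criterion via dangerous sets, the degree-$2$ case, and the closing count assuming at most one $v$-edge leaves $M$ --- are the standard easy ingredients; everything hinges on the unproven exclusion of the ``differences'' branch, which, as the star example shows, cannot be repaired without bringing the cut-edge hypothesis into that part of the argument.
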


Since \Algorithm{contract} needs to perform admissible splitting-off,
we must ensure that the hypotheses of Mader's theorem are satisfied.
This can be accomplished by the simple trick of duplicating every edge,
which ensures that $G$ is Eulerian and its components are $2$-edge-connected.
Note that these conditions are preserved under
all modifications to the graph performed by \Algorithm{contract},
namely contraction, splitting-off and removal of self loops.

\newcounter{itemctr}
\newenvironment{invariants}{
    \begin{list}{(I\textrm{\arabic{itemctr}}): }{
        \usecounter{itemctr}
        \setlength{\itemindent}{0pt}
        \setlength{\labelwidth}{24pt}
        \setlength{\labelsep}{9pt}
        \setlength{\leftmargin}{\parindent+\labelwidth+\labelsep}
        \setlength{\itemsep}{3pt}
        \setlength{\topsep}{6pt}
        \setlength{\listparindent}{0pt}
    }
}
{
    \end{list}
}

To prove \Theorem{contract_with_split},
we fix a cut-induced set $F$ with $q(F) \leq \alpha K$.
We will show that, with good probability,
the algorithm maintains the following invariants.
\begin{invariants}
\item $F$ is a cut-induced set in the remaining graph,
\item $q(F) \leq \alpha K$ (where $q(F)$ now minimizes over cuts in the remaining graph), and
\item every remaining black edge $e$ satisfies $k_e \geq K$.
\end{invariants}

The only modifications to the graph made by \Algorithm{contract}
are splitting-off operations, contraction of edges, and removal of self-loops.
Clearly removing self-loops does not affect the invariants.

\begin{claim}
\ClaimName{splitoff}
The admissible splitting-off operations at $v$ do not affect the invariants.
\end{claim}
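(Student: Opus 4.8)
The plan is to verify each of the three invariants in turn, using the defining property of admissible splitting-off --- namely that it preserves all local edge connectivities $k_{st}$ for pairs $s,t$ not involving the split vertex $v$. The key observation is that since we only split off at \emph{white} vertices, $v$ is not incident to any black edge, so the black edges (and in particular those in $F$) are not directly touched by the operation; they are only ``relabeled'' insofar as $v$'s neighbours get new connections among themselves.

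First I would handle (I3): every remaining black edge $e = st$ is a black edge, so $v \neq s$ and $v \neq t$ (as $v$ is white). By the definition of admissibility, the splitting-off does not decrease $k_{st}$, so if $k_e \geq K$ held before, it still holds afterward. Next, for (I1), I would argue that being ``cut-induced'' is a statement about which black edges are separated by some vertex partition; more carefully, I would track a specific cut $\delta(S)$ witnessing that $F$ is cut-induced and that $q(F) \leq \alpha K$, and show that after the split-off one can choose a corresponding cut in the new graph. The natural choice: if $v \in S$, keep $S$; if $v \notin S$, keep $S$; either way, the new edge $uw$ added by the split-off crosses $\delta(S)$ if and only if exactly one of $u,w$ is in $S$, and in that case one of the two removed edges $uv, vw$ also crossed $\delta(S)$ --- so the size of the cut can only decrease or stay the same when restricted appropriately, and since $uv, vw$ are white (incident to white $v$), the black edges crossing $\delta(S)$ are exactly unchanged. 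Hence $F = \delta(S) \cap B$ still holds, giving (I1), and $|\delta(S)|$ does not increase, giving $q(F) \leq \alpha K$ for (I2) --- though for (I2) I must be slightly careful, since $q(F)$ is a minimum over \emph{all} cuts inducing $F$, so it suffices that \emph{some} such cut of size $\leq \alpha K$ survives, which is exactly what the argument above provides.

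The main obstacle I anticipate is the bookkeeping in (I1)/(I2): I need to make precise the correspondence between vertex subsets of the old graph and of the new (split-off) graph, keeping in mind that splitting-off does not change the vertex set (it only changes edges), so a subset $S \subseteq V$ literally remains a subset of the same $V$. With that realization the argument is cleaner than it first appears: for \emph{any} fixed $S$, splitting off a pair $uv, vw$ at a white vertex $v$ changes $|\delta(S)|$ by at most the net effect of removing $uv, vw$ and adding $uw$, and one checks case-by-case on the membership of $u, v, w$ in $S$ that $|\delta(S)|$ never increases (it decreases by $2$ if $v$ alone is on one side, and is unchanged otherwise). Since all three of $uv, vw, uw$ are white, the set $\delta(S) \cap B$ is literally unchanged for every $S$. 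Therefore $q(F)$ (a min over $S$ with $\delta(S) \cap B = F$) cannot increase, and the set of cut-induced subsets of $B$ cannot grow; in particular invariants (I1) and (I2) are preserved. I would then conclude by noting that admissible splitting-off is repeated until $v$ is isolated and removed, and since isolated vertices are irrelevant to cuts among the remaining vertices, removing $v$ also preserves all three invariants, completing the proof of the claim.
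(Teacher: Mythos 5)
Your proposal is correct and follows essentially the same route as the paper's proof: (I3) from admissibility plus the whiteness of $v$, (I1) from the fact that splitting-off touches only white edges so $\delta(S)\cap B$ is literally unchanged for every $S$, and (I2) from the observation that splitting-off never increases $\card{\delta(S)}$. Your explicit case analysis and the remark about deleting the isolated vertex $v$ are just more detailed versions of the same argument.
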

\begin{proof}
For (I1), note that splitting-off affects only white edges, and all edges in $F$ are black.
For (I2), note that splitting-off only decreases the size of any cut.
For (I3), the edge connectivity between any two black vertices is unaffected
since the splitting-off is admissible and $v$ is white.
\end{proof}

\begin{claim}
\ClaimName{contract}
Let the number of remaining vertices be $r$.
Assuming that the invariants hold,
they will continue to hold after the contraction operation
with probability at least $1-2\alpha/r$.
\end{claim}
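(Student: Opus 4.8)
The plan is to mimic the analysis of Karger's contraction algorithm, accounting for the fact that we only care about black edges and that white vertices are repeatedly removed. Suppose the invariants (I1)--(I3) hold when a contraction step begins, and let $r$ be the number of remaining vertices. By invariant (I1), $F$ is induced by some cut $\delta(S)$ in the current graph, and by (I2) we may choose $S$ so that $|\delta(S)| = q(F) \le \alpha K$. The contraction step picks a uniformly random edge $e$ and contracts it; the invariants are destroyed only if $e \in \delta(S)$, since contracting an edge outside $\delta(S)$ leaves $S$ a valid cut still inducing $F$ (splitting-off and self-loop removal were already handled by Claim~\ref{clm:splitoff}), and edge connectivities of black vertices only go up under contraction, so (I3) persists. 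Hence it suffices to show $\prob{e \in \delta(S)} \le 2\alpha/r$, i.e. that $|\delta(S)|$ is at most a $2\alpha/r$ fraction of the total number of edges.

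The key point is a lower bound on the total edge count. Just before this contraction step, the inner while-loop has terminated, so there are no white vertices: every remaining vertex is black, hence incident to some black edge. By invariant (I3), every black edge $e$ has $k_e \ge K$, so every vertex has degree at least $K$ (its incident black edge certifies a cut of size $\ge K$ separating it from the rest, and the degree of a vertex is an upper bound on... wait, it's the reverse: $k_e \le \deg(v)$ for $v$ an endpoint of $e$, so $\deg(v) \ge k_e \ge K$). Therefore the total number of edges is at least $rK/2$. Combining, $\prob{e \in \delta(S)} = |\delta(S)| / (\text{\# edges}) \le (\alpha K)/(rK/2) = 2\alpha/r$, as desired.

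I would organize the write-up as: (i) reduce to showing the contraction fails only if $e$ crosses a minimum-size cut inducing $F$ (using Claim~\ref{clm:splitoff} for the already-performed splitting-offs and a one-line argument that contraction preserves (I1)--(I3) when $e$ is uncontracted); (ii) establish the minimum-degree bound $\deg(v) \ge K$ for all $r$ remaining vertices, using the absence of white vertices and invariant (I3); (iii) conclude the edge count is $\ge rK/2$ and finish the arithmetic. The main obstacle — really the only subtle point — is step (ii): one must be careful that the claim "no white vertices remain" genuinely holds at the moment the contraction edge is picked (it does, since the outer loop runs the inner white-vertex-elimination loop to completion first), and that $k_e \ge K$ for a black edge incident to $v$ indeed forces $\deg(v) \ge K$ in the current multigraph, which is immediate since $\delta(\{v\})$ is a cut separating the endpoints of that black edge.
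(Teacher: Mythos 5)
Your proof is correct and follows essentially the same route as the paper's: bound the edge count below by $rK/2$ using that every remaining vertex is black and (I3) forces each such vertex's degree to be at least $K$, then observe that only contracting an edge of a minimum cut inducing $F$ can break (I1)/(I2), giving failure probability at most $q(F)/(rK/2) \le 2\alpha/r$. The subtle points you flag (no white vertices remain when the edge is picked, and $\deg(v) \ge k_e \ge K$ via the cut $\delta(\{v\})$) are exactly the ones the paper's proof relies on.
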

\begin{proof}
For (I3), note that
any black cut which exists after the contraction
also existed before the contraction,
so the edge connectivity between any two black vertices cannot decrease.

Now, with respect the graph before the contraction,
let $C$ be a minimum cardinality cut that induces $F$,
i.e., $\card{C}=q(F)$.
We claim that $\Pr[ e \in C ] \geq 2 \alpha / r$,
where the probability is over the random choice of $e$ to be contracted.
To see this, note that every remaining vertex $u$ is black, 
so the cut $\delta(\set{u})$ is a black cut.
By invariant (I3) we have $\delta(\set{u}) \geq K$,
so the number of remaining edges is at least $K r / 2$.
Since $e$ is picked uniformly at random,
$$
\Pr[ e \in C ]
 ~\leq~ \frac{\card{C}}{K r / 2}
 ~=~ \frac{2 q(F)}{K r}
 ~\leq~ \frac{2\alpha}{r},
$$
by (I2).
Let us assume that $e \not \in C$.
Then $F$ is still induced by $C$ after contracting $e$, so (I1) is preserved. 
Furthermore, (I2) is preserved since $\card{C} \leq \alpha K$.
\end{proof}

The following claim completes the proof of \Theorem{contract_with_split}.
We relegate its proof to \Appendix{contract_alg} as it is the same argument used
to prove Karger's theorem \cite{KargerContract,KargerStein}.
(See also Karger~\cite[App.~A]{KargerSkel}, where a slightly more general result is proven.)

\newcommand{\clmoutput}{
    The probability that \Algorithm{contract} outputs $F$ is at least $n^{-2\alpha}$.
}
\begin{claim}
\ClaimName{output}
\clmoutput
\end{claim}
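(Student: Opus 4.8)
The plan is to mimic Karger's original analysis of the contraction algorithm, using the three invariants (I1)--(I3) together with Claims~\ref{clm:splitoff} and~\ref{clm:contract} to track the survival probability of the fixed cut-induced set $F$. First I would observe that, by Claim~\ref{clm:splitoff}, the repeated splitting-off phase never affects the invariants, so the only randomness that can destroy $F$ comes from the contraction steps. Thus, conditioned on the invariants holding at the start of a contraction with $r$ vertices remaining, Claim~\ref{clm:contract} gives that they continue to hold with probability at least $1 - 2\alpha/r$. The algorithm performs contractions while more than $\lceil 2\alpha \rceil$ vertices remain, so $r$ ranges over $n, n-1, \ldots, \lceil 2\alpha\rceil + 1$. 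Multiplying these survival probabilities, the probability that the invariants still hold when the loop terminates is at least
$$
\prod_{r = \lceil 2\alpha \rceil + 1}^{n} \left( 1 - \frac{2\alpha}{r} \right)
~=~ \prod_{r = \lceil 2\alpha \rceil + 1}^{n} \frac{r - 2\alpha}{r}.
$$

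The next step is the standard telescoping/binomial-coefficient estimate. Writing $t = \lceil 2\alpha \rceil$, one shows this product is at least $\binom{n}{t}^{-1} \cdot \binom{2\alpha}{t}^{\text{-ish}}$ type quantity; more carefully, the usual bound gives that this product is $\Omega(n^{-2\alpha})$, and in fact at least $n^{-2\alpha}$ after accounting for the final random choice. Concretely, when the loop ends there are at most $\lceil 2\alpha \rceil$ vertices, the invariants hold, so $F$ is still a cut-induced set in the contracted graph; picking the correct side $S$ of the corresponding cut uniformly among the $2^{\lceil 2\alpha \rceil} - 2 < 2^{2\alpha+1}$ nonempty proper subsets succeeds with probability at least $2^{-2\alpha-1}$ (or one argues directly that some subset induces $F$). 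Combining the contraction survival probability with this final step yields probability at least $n^{-2\alpha}$ that the output equals $F$; the constants work out exactly as in Karger's argument, which is why the paper relegates the routine calculation to the appendix.

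I would structure the write-up as: (1) invoke Claim~\ref{clm:splitoff} to dismiss the splitting-off steps; (2) set up the product of $(1 - 2\alpha/r)$ factors using Claim~\ref{clm:contract} and a union bound over the contraction rounds; (3) lower-bound the product by the telescoping estimate; (4) handle the final random-subset selection; (5) conclude. The main obstacle — really the only subtle point — is getting the arithmetic in step~(3) to land on exactly $n^{-2\alpha}$ rather than a slightly worse bound, which requires being careful about the $\lceil 2\alpha \rceil$ rounding and whether the final subset-selection factor is absorbed into the product or handled separately. Since this is precisely the calculation in \cite{KargerContract, KargerStein} (and the slightly more general version in \cite[App.~A]{KargerSkel}), the cleanest route is to cite that argument verbatim and simply note that invariants (I1)--(I3) supply exactly the hypotheses it needs: that every remaining vertex has black degree at least $K$ (so the edge count is at least $Kr/2$), and that $q(F) \leq \alpha K$ throughout.
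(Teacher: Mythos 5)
Your plan is correct and follows essentially the same route as the paper's proof: Claim~\ref{clm:splitoff} disposes of the splitting-off steps, Claim~\ref{clm:contract} gives the survival product $\prod_{r=\lceil 2\alpha\rceil+1}^{n}(1-2\alpha/r)$, which telescopes to $\frac{(n-2\alpha)!}{(\lceil 2\alpha\rceil-2\alpha)!}\cdot\frac{\lceil 2\alpha\rceil!}{n!}$ (factorials via the Gamma function), and the final uniform choice of $S$ succeeds with probability at least $2^{1-\lceil 2\alpha\rceil}$ since some surviving cut induces $F$ and there are fewer than $2^{\lceil 2\alpha\rceil-1}$ nontrivial cuts left. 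The only detail to tighten is exactly the arithmetic you flagged, which the paper closes with the inequalities $2^{R-1}\le R!$, $n!/(n-2\alpha)!<n^{2\alpha}$, and $x!\le 1$ for $x\in[0,1]$, where $R=\lceil 2\alpha\rceil$.
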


\subsection{Remarks on cactus representations}

A special case of \Theorem{karger} is that any connected graph $G=(V,E)$ has at most
$n^2$ (non-trivial) minimum cuts.
(In fact, the theorem actually proves a bound of $\binom{n}{2}$, which is tight.)
The same fact is implied by a much earlier result of Dinic, Karzanov and Lomonosov~\cite{DKL},
which states that the minimum cuts have a \newterm{cactus representation}.
Fleiner and Frank \cite{FleinerFrank} give a recent exposition of this result.

Dinitz\footnote{E. Dinic and Y. Dinitz are two different transliterations of the same person's name.}
and Vainshtein~\cite{DVConf,DV} generalized this result as follows.
(See also Fleiner and Jord\'an~\cite{FleinerJordan}.)
Let $U \subseteq V$ be a subset of vertices with $\card{U} \geq 2$.
A cut $C = \delta(S)$ is called a \newterm{$U$-cut} if 
the partition $\set{ U \intersect S, U \setminus S }$ of $U$ that it induces
has both parts non-empty.
A $U$-cut $C$ is called minimal if $\card{C}$ is minimal amongst all $U$-cuts.
Two minimal $U$-cuts are called equivalent if they induce the same partition of $U$.
Dinitz and Vainshtein showed that the equivalence classes of minimal $U$-cuts have
a cactus representation.
In particular, there are at most $n^2$ equivalence classes of minimal $U$-cuts.

We now explain how the latter result also follows from \Theorem{count_cut_class}.
Let $K$ be the minimum cardinality of a $U$-cut.
We add dummy edges of weight $\epsilon/n^2$ between all pairs of $U$-vertices
and let $B$ be the set of dummy edges.
Then every dummy edge $e$ has $k_e \geq K$ and every minimal $U$-cut has weight at most
$(1+\epsilon) K$.
By \Theorem{count_cut_class}, the number of cut-induced sets induced by
cuts of size at most $(1+\epsilon)K$ is at most $n^{2(1+\epsilon)}$.
Any two equivalent minimal $U$-cuts induce the same cut-induced subset of $B$,
so the number of equivalence classes is at most $n^{2(1+\epsilon)}$.
Taking $\epsilon \rightarrow 0$ proves that there are at most
$n^2$ equivalence classes of minimal $U$-cuts.

\section{Sparsifiers by uniform random spanning trees}
\label{sec:spanning_tree}

In this section we describe an alternative approach to constructing a graph sparsifier.
Instead of sampling edges independently at random, as was done in
\Section{sample_with_connectivity},
we will sample edges by picking random spanning trees.
The analysis of this sampling proves \Theorem{sample_with_trees}.
The proof is a small modification of the proof in \Section{sample_with_connectivity},
with some differences to handle the dependence in the sampled edges.
The following two lemmas explain why sampling random spanning trees
is similar to sampling according to effective conductances.

Let us introduce some notation.
For an edge $st\in E$, we denote by $r_{st}$ the effective resistance between $s$ and $t$.
This is the inverse of the effective conductance $c_{st}$.

\begin{lemma}
\LemmaName{negcor}
Let $G=(V,E)$ be an unweighted simple graph, and let $T$ be a spanning tree in $G$
chosen uniformly at random.
Let $e_1,\ldots,e_k \in E$ be distinct edges.
Then
\begin{align}
\label{eq:negcor}
\Pr[e_1, \ldots, e_k \in T] &~\leq~ \Pr[ e_1 \in T ] \cdots \Pr[ e_k \in T ] \\
\label{eq:negcor2}
\Pr[e_1, \ldots, e_k \not\in T] &~\leq~ \Pr[ e_1 \not\in T ] \cdots \Pr[ e_k \not\in T ]
\end{align}
\end{lemma}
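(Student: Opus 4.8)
The statement is exactly the negative correlation / negative association property of edges in a uniform spanning tree. The cleanest route is via the transfer-current theorem of Burton and Pemantle, which says that for distinct edges $e_1,\dots,e_k$, the probability $\Pr[e_1,\dots,e_k \in T]$ equals the determinant $\det\big(Y(e_i,e_j)\big)_{i,j=1}^k$, where $Y$ is the transfer-current matrix ($Y(e,f)$ is the current across $f$ when a unit current is injected so as to flow across $e$). In particular the diagonal entry is $Y(e_i,e_i) = c_{e_i} r_{e_i}\cdot(\text{unit}) = \Pr[e_i\in T]$. The plan is: (1) invoke the transfer-current theorem to write the joint inclusion probability as a $k\times k$ determinant; (2) observe that $Y$ is a symmetric positive semidefinite matrix — it is, up to a change of basis, the orthogonal projection onto the cycle space (or its orthogonal complement) of the graph — so the matrix $M := (Y(e_i,e_j))_{i,j}$ is itself PSD; (3) apply Fischer's inequality (Hadamard-type inequality for PSD matrices), which gives $\det M \le \prod_i M(e_i,e_i) = \prod_i \Pr[e_i \in T]$. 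That proves \eqref{eq:negcor}. For \eqref{eq:negcor2}, apply the identical argument to the dual graph, or equivalently to the complementary projection: the events "$e_i \notin T$" correspond to "$e_i$ lies in the spanning tree of the cographic matroid," and the same transfer-matrix formula holds with $Y$ replaced by $I - Y$ (projection onto the cut space), which is also PSD, so Fischer's inequality again yields the product bound.

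An alternative, more self-contained route avoids citing Burton–Pemantle and instead uses induction on $k$ together with conditioning. The key identity is that, conditioned on $e_k \in T$, the distribution of $T \setminus \{e_k\}$ is a uniform spanning tree of the graph $G / e_k$ (contract $e_k$), and $\Pr_{G/e_k}[e_i \in T'] \le \Pr_G[e_i \in T]$ because contracting an edge can only decrease (weakly) the probability that another fixed edge is in the random tree — this monotonicity is itself a consequence of Rayleigh's monotonicity principle for effective resistances, since $\Pr[e\in T] = c_e r_e$ where $r_e$ is the effective resistance across $e$, and contracting edges lowers effective resistances. Then $\Pr[e_1,\dots,e_k\in T] = \Pr[e_k\in T]\cdot \Pr_{G/e_k}[e_1,\dots,e_{k-1}\in T'] \le \Pr[e_k\in T]\cdot\prod_{i<k}\Pr_{G/e_k}[e_i\in T'] \le \prod_i \Pr[e_i\in T]$, where the middle step is the induction hypothesis. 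Inequality \eqref{eq:negcor2} follows by the dual conditioning: conditioned on $e_k\notin T$, $T$ is a uniform spanning tree of $G \setminus e_k$ (delete $e_k$), deletion raises effective resistances hence decreases $\Pr[e_i\notin T] = 1 - c_{e_i}r_{e_i}$... one must be careful here — deletion raises $r_{e_i}$, hence raises $\Pr[e_i \notin T]$ in $G\setminus e_k$ relative to $G$, which is the wrong direction. So for \eqref{eq:negcor2} the dual-graph formulation is cleaner: the non-edges of a UST of $G$ are in bijection with the edges of a UST of the planar (or matroid) dual, and \eqref{eq:negcor2} for $G$ is literally \eqref{eq:negcor} for the dual, so no separate argument is needed.

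I expect the main obstacle to be the bookkeeping in the dual argument for \eqref{eq:negcor2}: making precise that "$e_i \notin T$ in $G$" corresponds to "$e_i \in T^*$ in the cographic matroid" and that the uniform measure transports correctly, especially if one wants to stay at the level of general (not necessarily planar) graphs, where one should phrase everything in terms of the graphic matroid and its dual rather than an actual dual graph. The determinantal route sidesteps this entirely, at the cost of invoking the transfer-current theorem as a black box; given that the paper elsewhere cites heavy machinery freely, I would present the determinantal proof as the main argument, with Fischer's inequality applied to the PSD transfer-current matrix for \eqref{eq:negcor} and to $I - Y$ for \eqref{eq:negcor2}, and remark that both are instances of the general principle that principal submatrices of a projection matrix have determinant at most the product of their diagonal entries.
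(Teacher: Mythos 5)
Your proposal is correct, but it proves the lemma rather than citing it, which is not what the paper does: the paper's ``proof'' is a pointer to the literature (Brooks et al.\ for $k=2$, and Feder--Mihail negative association as presented in Lyons--Peres, Theorem 4.5, for general $k$). Your determinantal route --- Burton--Pemantle transfer-current theorem, positive semidefiniteness of the (projection) kernel, and Hadamard--Fischer applied to the principal submatrices of $Y$ and of $I-Y$ --- is a legitimate alternative and arguably more transparent, since it reduces both inequalities to ``a principal submatrix of a PSD contraction has determinant at most the product of its diagonal entries.'' The one step you should not wave through is the exclusion formula $\Pr[e_1,\ldots,e_k\notin T]=\det\bigl((I-Y)_S\bigr)$: it does not follow from the inclusion formula by merely ``replacing $Y$ with $I-Y$,'' but it is a standard fact about determinantal measures with projection kernels (also in Lyons--Peres), so a citation closes it. Your second, elementary route is also sound, with two small repairs. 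First, conditioning on $e_k\in T$ passes to the multigraph $G/e_k$, so the induction hypothesis must be stated for multigraphs (Kirchhoff's formula and Rayleigh monotonicity still hold there). Second, your worry about \eqref{eq:negcor2} is based on a sign slip: conditioning on $e_k\notin T$ gives a uniform spanning tree of $G\setminus e_k$, deletion \emph{raises} $r_{e_i}$, hence raises $\Pr[e_i\in T']=r_{e_i}$ and therefore \emph{lowers} $\Pr[e_i\notin T']$, which is exactly the direction the induction needs; so the deletion/Rayleigh induction proves \eqref{eq:negcor2} directly, and no dual-matroid bookkeeping is required. In short: the paper buys the lemma off the shelf; your argument supplies a proof, and with the two fixes above either of your routes is complete.
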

\begin{proof}
In the case $k=2$, Equation~\ref{eq:negcor} was known to Brooks et al.~\cite[Equation (2.34)]{BSST}.
See also Lyons and Peres~\cite[Exercise 4.3]{LyonsPeres}.
For general $k$, this is a consequence of Theorem 4.5 in Lyons and Peres~\cite{LyonsPeres},
which is a result of Feder and Mihail \cite{FederMihail}.
See also Goyal, Rademacher and Vempala~\cite[Section 3]{GoyalRV09}.
\end{proof}

One useful consequence of \Lemma{negcor} is that concentration inequalities
can be proven for the number of edges in $T$ that lie in any given subset.
The concentration is due to the following theorem:

\begin{theorem}
\TheoremName{concentration}
Let $a_1, a_2, \ldots, a_k$ be reals in $[0,1]$,
and let $X_1, \ldots X_k$ be $\{0,1\}$-valued random variables.
Suppose that
\begin{align*}
\Pr[ \wedge_{i \in I} X_i = 1 ] &~\leq~ \prod_{i \in I} \Pr[ X_i = 1 ] \qquad \forall I\subseteq[k] \\
\Pr[ \wedge_{i \in I} X_i = 0 ] &~\leq~ \prod_{i \in I} \Pr[ X_i = 0 ] \qquad \forall I\subseteq[k].
\end{align*}
Suppose $\mu_1 \leq E[ \sum_i a_i X_i ] \leq \mu_2$.
Then
\begin{align*}
\Pr[ \sum_i a_i X_i \leq (1-\delta) \mu_1 ]
    &~\leq~ e^{-\mu_1 \delta^2 / 2} \\
\Pr[ \sum_i a_i X_i \geq (1+\delta) \mu_2 ]
    &~\leq~ \Big( \frac{e^\delta}{(1+\delta)^{1+\delta}} \Big)^{\mu_2}
\end{align*}
\end{theorem}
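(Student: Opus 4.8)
The plan is to reduce this to the standard Chernoff bound by showing that the negative-correlation hypotheses are exactly what is needed to push through the usual moment-generating-function argument. Recall that the textbook proof of the multiplicative Chernoff bound for a sum $\sum_i a_i X_i$ of independent $\{0,1\}$ variables proceeds by bounding $\expect{ e^{\lambda \sum_i a_i X_i} } = \prod_i \expect{ e^{\lambda a_i X_i} }$ and then applying Markov's inequality; independence is used only at the factorization step. So the heart of the matter is to prove that, under the stated hypotheses, for every $\lambda > 0$
$$
\expect{ \prod_i e^{\lambda a_i X_i} } ~\le~ \prod_i \expect{ e^{\lambda a_i X_i} }
$$
(for the upper-tail bound), and the corresponding inequality with $\lambda < 0$ (for the lower-tail bound). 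Once these two inequalities are in hand, the rest of the proof is a verbatim copy of the standard derivations of the two stated tail bounds, with $\mu$ replaced by $\mu_2$ in the upper bound and by $\mu_1$ in the lower bound, using monotonicity of the relevant functions in $\mu$.

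To establish the factorization inequality, I would first reduce to the case $a_i \in \{0,1\}$ being replaced by general $a_i \in [0,1]$ via the observation that $e^{\lambda a_i X_i}$ is an affine function of $X_i$ (since $X_i \in \{0,1\}$): write $e^{\lambda a_i X_i} = 1 + (e^{\lambda a_i}-1) X_i =: \alpha_i + \beta_i X_i$, where for $\lambda > 0$ we have $\beta_i = e^{\lambda a_i}-1 \ge 0$. Then $\prod_i (\alpha_i + \beta_i X_i)$ expands into $\sum_{I \subseteq [k]} \big(\prod_{i \notin I}\alpha_i\big)\big(\prod_{i \in I}\beta_i\big) \prod_{i \in I} X_i$, and since all $\beta_i \ge 0$ and all $\alpha_i \ge 0$, taking expectations term by term and applying the hypothesis $\prob{ \wedge_{i \in I} X_i = 1 } \le \prod_{i \in I}\prob{X_i=1}$ to each term gives exactly $\prod_i (\alpha_i + \beta_i \prob{X_i=1}) = \prod_i \expect{e^{\lambda a_i X_i}}$. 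For the lower tail, one uses $\lambda < 0$, so $\beta_i = e^{\lambda a_i} - 1 \in [-1,0]$; here I would instead write $e^{\lambda a_i X_i} = e^{\lambda a_i} + (1 - e^{\lambda a_i})(1-X_i) =: \gamma_i + \eta_i (1-X_i)$ with $\gamma_i, \eta_i \ge 0$, expand the product in terms of the indicators $1-X_i$, and apply the second hypothesis $\prob{\wedge_{i \in I} X_i = 0} \le \prod_{i\in I}\prob{X_i=0}$ to each term of the expansion. This yields $\expect{e^{\lambda \sum_i a_i X_i}} \le \prod_i \expect{e^{\lambda a_i X_i}}$ for $\lambda < 0$ as well.

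With the MGF factorization bounds established, I would finish as follows. For the upper tail: for any $\lambda > 0$, $\prob{\sum_i a_i X_i \ge (1+\delta)\mu_2} \le e^{-\lambda(1+\delta)\mu_2}\prod_i \expect{e^{\lambda a_i X_i}} \le e^{-\lambda(1+\delta)\mu_2}\prod_i \exp\!\big((e^{\lambda a_i}-1)\prob{X_i=1}\big) \le e^{-\lambda(1+\delta)\mu_2}\exp\!\big((e^{\lambda}-1)\mu_2\big)$, where the last step uses $e^{\lambda a_i}-1 \le a_i(e^\lambda-1)$ for $a_i \in [0,1]$ (convexity) together with $\expect{\sum_i a_i X_i} \le \mu_2$; optimizing at $\lambda = \ln(1+\delta)$ gives the claimed $\big(e^\delta/(1+\delta)^{1+\delta}\big)^{\mu_2}$. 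For the lower tail: for any $\lambda < 0$, $\prob{\sum_i a_i X_i \le (1-\delta)\mu_1} \le e^{-\lambda(1-\delta)\mu_1}\prod_i \expect{e^{\lambda a_i X_i}}$, and using $e^{\lambda a_i}-1 \le a_i(e^\lambda-1)$ again (valid for all real $\lambda$ and $a_i\in[0,1]$) together with $e^\lambda - 1 < 0$ and $\expect{\sum_i a_i X_i}\ge \mu_1$ gives $\le e^{-\lambda(1-\delta)\mu_1}\exp\!\big((e^\lambda-1)\mu_1\big)$; optimizing at $\lambda = \ln(1-\delta)$ and using the standard inequality $(1-\delta)\ln(1-\delta) - (1-\delta) + 1 \ge \delta^2/2$ yields $e^{-\mu_1\delta^2/2}$. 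The main obstacle is getting the sign bookkeeping right in the factorization step for the lower tail — one must use the $\{X_i = 0\}$ hypothesis rather than the $\{X_i=1\}$ one, and the trick is to re-expand the product in the complementary indicators $1 - X_i$ so that all coefficients in the expansion are nonnegative; everything else is routine Chernoff-style manipulation.
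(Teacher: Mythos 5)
Your proof is correct: the affine rewriting $e^{\lambda a_i X_i}=1+(e^{\lambda a_i}-1)X_i$ (resp.\ in terms of $1-X_i$ for $\lambda<0$), the term-by-term use of the negative-correlation hypotheses to get $\expect{e^{\lambda\sum_i a_i X_i}}\le\prod_i\expect{e^{\lambda a_i X_i}}$, and the standard optimization at $\lambda=\ln(1\pm\delta)$ all go through, including the monotone replacement of $\expect{\sum_i a_i X_i}$ by $\mu_2$ (upper tail) and $\mu_1$ (lower tail). The paper does not prove this theorem itself but simply cites Gandhi et al., whose argument is the same MGF-based Chernoff derivation under negative correlation, so your write-up is essentially a self-contained version of that proof.
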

\begin{proof}
See Gandhi et al.~\cite[Theorem 3.1]{Gandhi}.
\end{proof}

We will also use the following corollary.

\begin{corollary}
\CorollaryName{concentration}
Assume the same hypotheses as \Theorem{concentration}.
Let $\mu = E[ \sum_i a_i X_i ]$.
Then
\begin{align*}
\Pr[ | \sum_i a_i X_i - \mu | \geq \delta \mu ]
    ~\leq~ 2 e^{ - \frac{ \delta^2 \mu }{ 2 (1 + \delta/3) }}.
\end{align*}
\end{corollary}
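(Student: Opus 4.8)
The plan is to derive the two-sided bound directly from the two one-sided tail bounds in Theorem \ref{thm:concentration}, taking $\mu_1 = \mu_2 = \mu$ since we are conditioning on the exact expectation $\mu = E[\sum_i a_i X_i]$. First I would handle the lower tail: Theorem \ref{thm:concentration} gives $\Pr[\sum_i a_i X_i \leq (1-\delta)\mu] \leq e^{-\mu\delta^2/2}$, and since $2(1+\delta/3) \geq 2$ for $\delta \geq 0$, we have $e^{-\mu\delta^2/2} \leq e^{-\delta^2\mu/(2(1+\delta/3))}$, so the lower-tail probability is already at most the target bound (without even the factor of $2$).

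Next I would handle the upper tail. Theorem \ref{thm:concentration} gives $\Pr[\sum_i a_i X_i \geq (1+\delta)\mu] \leq \left(\frac{e^\delta}{(1+\delta)^{1+\delta}}\right)^{\mu}$, and the standard manipulation is to observe that $\frac{e^\delta}{(1+\delta)^{1+\delta}} = e^{-g(\delta)}$ where $g(\delta) = (1+\delta)\ln(1+\delta) - \delta$ — conveniently, $g$ is exactly the function defined in \eqref{eq:gdef}. So the upper-tail bound is $e^{-g(\delta)\mu}$, and it remains to show $g(\delta) \geq \frac{\delta^2}{2(1+\delta/3)}$ for all $\delta \geq 0$. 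This is the classical inequality underlying the Bernstein/Chernoff form of the bound; it can be verified by comparing power series or by checking that the difference $g(\delta) - \frac{\delta^2}{2(1+\delta/3)}$ vanishes to second order at $0$ and has nonnegative derivative. Combining, $\Pr[\sum_i a_i X_i \geq (1+\delta)\mu] \leq e^{-\delta^2\mu/(2(1+\delta/3))}$.

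Finally, I would union-bound the two events:
$$
\Pr\Big[\,\Big|\sum_i a_i X_i - \mu\Big| \geq \delta\mu\,\Big]
~\leq~ \Pr\Big[\sum_i a_i X_i \leq (1-\delta)\mu\Big] + \Pr\Big[\sum_i a_i X_i \geq (1+\delta)\mu\Big]
~\leq~ 2\,e^{-\delta^2\mu/(2(1+\delta/3))}.
$$
The only non-routine step is the scalar inequality $g(\delta) \geq \delta^2/(2(1+\delta/3))$, and even that is a well-known fact; I expect no real obstacle here, since the corollary is essentially just repackaging Theorem \ref{thm:concentration} into the Bernstein-style form that will be convenient to cite in \Section{spanning_tree}.
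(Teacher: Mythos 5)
Your proposal is correct and is exactly the argument the paper has in mind: the paper's proof of \Corollary{concentration} is simply ``Follows from \Theorem{concentration} and basic calculus'' (with a pointer to McDiarmid), and your write-up supplies precisely those calculus details --- the identity $e^{\delta}/(1+\delta)^{1+\delta} = e^{-g(\delta)}$, the standard inequality $g(\delta) \geq \delta^2/(2(1+\delta/3))$, the trivial domination of the lower-tail exponent, and a union bound. Nothing further is needed.
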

\begin{proof}
Follows from \Theorem{concentration} and basic calculus.
See also McDiarmid~\cite[Theorem 2.3]{McDiarmid}.
\end{proof}

Now consider the approach of \Algorithm{sampletrees} for constructing a sparsifier.
In each round of sampling, instead of picking edges independently,
we pick a uniformly random spanning tree.
Every edge $e$ in the tree is assigned weight $c_e$.
This sampling is repeated for $\rho$ rounds,
and the sparsifier is the sum of these weighted trees.

By \Lemma{kirchoff}, the probability of sampling any particular edge is the same as
when sampling by effective conductances, as was done in \Corollary{sample_with_conductance}.
Furthermore, the same analysis as \Section{sample_with_connectivity} shows that this
sampling method also produces a sparsifier --- the only change to the analysis is that
all uses of Chernoff bounds 
(namely, in Claims \ref{clm:large_lambda}, \ref{clm:small_lambda}
and \ref{clm:errorub})
can be replaced with the concentration bounds in \Theorem{concentration}
and \Corollary{concentration}.
This completes the proof of \Theorem{sample_with_trees}.

\subsection{Lower bound on number of trees}
\SectionName{sp_tree_lb}

In this section, we consider the tradeoff between the number of trees (i.e., the value $\rho$)
and the quality of sparsification in Theorem~\ref{thm:sample_with_trees}.
We prove a lower bound on the number of trees necessary to produce a sparsifier
with a given approximation factor.

\vspace{6pt}
\begin{proofof}{\Lemma{sp_tree_lb}}
Let $G$ be a graph defined as follows.
Its vertices are $\set{u_1,\ldots,u_{n}} \union \set{v_1, \ldots, v_{n+1}}$.
For every $i=1,\ldots,n$, add $k$ parallel edges
$v_i v_{i+1}^{(1)}, \ldots, v_i v_{i+1}^{(k)}$,
and a single length-two path $v_i$-$u_i$-$v_{i+1}$.
The edges $v_i v_{i+1}^{(j)}$ are called \textit{heavy},
and the edges $v_i u_i$ and $u_i v_{i+1}$ are called \textit{light}.
Note that the heavy edges each have effective conductance exactly $(2k+1)/2$.
The light edges each have effective conductance exactly $(2k+1)/(k+1) < 2$.

A uniform random spanning tree in this graph can be constructed by repeating
the following experiment independently for each $i=1,\ldots,n$.
With probability $2k/(2k+1)$, add a uniformly selected heavy edge $v_i v_{i+1}^{(j)}$
to the tree, and add a uniformly selected light edge $v_i u_i$ or $u_i v_{i+1}$ to the tree.
In this case we say that the tree is ``heavy in position $i$''.
Otherwise, with probability $1/(2k+1)$, add both light edges $v_i u_i$ and $u_i v_{i+1}$ to the tree
but no heavy edges.
In this case we say that the tree is ``light in position $i$''.

Our sampling procedure produces a sparsifier that is the union of $\rho$ trees,
where every edge $e$ in the sparsifier is assigned weight $c_e / \rho$.
Suppose there is an $i$ such that all sampled trees are light in position $i$.
Then the cut defined by vertices $\set{v_1,u_1,v_2,u_2,\ldots,v_i}$
has weight exactly $(2k+1)/(k+1) < 2$ in the sparsifier,
whereas the true value of the cut is $k+1$.

The probability that at least one tree is heavy in position $i$ is $1-(2k+1)^{-\rho}$.
The probability that there exists an $i$ such that every tree is light in position $i$ is
$$
p ~:=~ 1-(1-(2k+1)^{-\rho})^n
$$
Suppose $\rho = \ln n/\ln(2k+1)$.
Then $\lim_{n \rightarrow \infty} p = 1-1/e$.
So with constant probability, there is an $i$ such that every tree is light in position $i$,
and so the sparsifier does not approximate the original graph better than a factor
$\frac{k+1}{2}$.
\end{proofof}

\bibliography{Sparsifiers}
\bibliographystyle{plain}
\appendix

\section{Discussion of $k_{st}$, $c_{st}$ and $k'_{st}$}
\AppendixName{gapexample}

As mentioned in the introduction,
the three quantities of an edge $st$ that we consider
(edge connectivity, effective conductance and edge strength)
all roughly measure the connectivity between $s$ and $t$.
However their values can differ significantly.
In this section, we illustrate this with some examples.

Consider a graph which consists of exactly one edge $st$.
To increase $k_{st}$ by $k$,
we can simply add $k$ edge disjoint paths between $s$ and $t$.
In the following examples, we can see that no matter how large $k$ is,
it is possible that $k'_{st}$ or $c_{st}$ increases only by one
while the other increases by $\Omega(k)$.

\begin{figure}
	\centering
    \includegraphics[width=2.5in,clip]{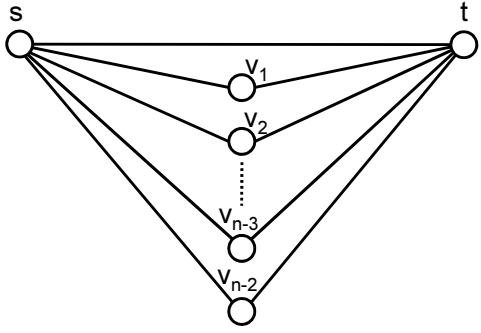}
	\caption{Example showing that conductance can be $\Omega(n)$ times larger than strength}
	\FigureName{example1}
\end{figure}

\begin{itemize}
\item
In \Figure{example1},
$s$ and $t$ are connected by an edge $st$
and $n-2$ paths of length $2$.
Clearly $k_{st}=n-1$, 
$c_{st}=\frac{1}{2}(n-2)+1=n/2$.
But $k'_{st}=2$ as every induced subgraph with at least two vertices is at most $2$ edge connected.

\begin{figure}
	\centering
    \includegraphics[width=2.5in,clip]{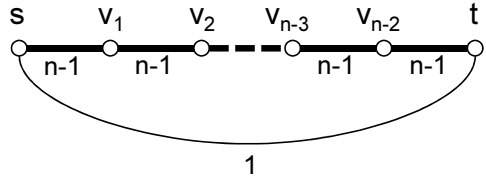}
	\caption{Example showing that strength can be $\Omega(n)$ times larger than conductance}
	\label{example2}
\end{figure}

\item
In Figure \ref{example2},
$s$ and $t$ are connected by an edge $st$
and a path of length $n-1$
which consists of edges of weight $n-1$.
The graph is $n$-edge-connected so
$k_{st}=k'_{st}=n$ but $c_{st}=\frac{n-1}{n-1}+1=2$.
\end{itemize}

Although $c_{st}$ and $k'_{st}$ are incomparable, they are upper bounded by $k_{st}$.

\begin{claim}
\ClaimName{ck}
For any edge $st\in E$, $k_{st} \geq \max \set{ c_{st}, k'_{st} }$.
\end{claim}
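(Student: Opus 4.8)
The plan is to prove the two inequalities $k_{st} \geq c_{st}$ and $k_{st} \geq k'_{st}$ separately, as they follow from quite different elementary facts about cuts.

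\medskip
\noindent\textbf{Proof proposal.}
For the bound $k_{st} \geq k'_{st}$, I would argue directly from the definitions. By definition $k'_{st}$ is the largest $k$ for which some $k$-strong component (a maximal vertex-induced $k$-edge-connected subgraph) $H$ contains both $s$ and $t$. Since $H$ is $k'_{st}$-edge-connected, any cut of $H$ that separates $s$ from $t$ has weight at least $k'_{st}$. Now take any cut $\delta(S)$ in $G$ separating $s$ and $t$; restricting this cut to $H$ (i.e., looking at $\delta_H(S \cap V(H))$) yields an $s$-$t$ cut in $H$, and since $H$ is vertex-induced, every edge of that cut is also an edge of $\delta(S)$ in $G$. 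Hence $u(\delta(S)) \geq u(\delta_H(S \cap V(H))) \geq k'_{st}$. Taking the minimum over all $s$-$t$ cuts $\delta(S)$ gives $k_{st} \geq k'_{st}$.

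\medskip
For the bound $k_{st} \geq c_{st}$, the key fact is that effective conductance is a minimum over a larger feasible region than the combinatorial min-cut, or equivalently one can use an electrical/flow argument. I would use the max-flow interpretation: $k_{st}$ equals the maximum value of an $s$-$t$ flow in $G$ with edge capacities $u_e$ (by the max-flow min-cut theorem). Meanwhile, the unit-voltage electrical flow between $s$ and $t$ (with edge $e$ treated as a resistor of value $u_e$, i.e.\ conductance $1/u_e$) is a feasible flow of value exactly $c_{st}$, provided the current on each edge does not exceed its capacity $u_e$. The current on edge $e = xy$ is $(\phi(x) - \phi(y))/u_e$ where $\phi$ is the harmonic voltage function with $\phi(s) = 1$, $\phi(t) = 0$; since by the maximum principle $\phi$ takes values in $[0,1]$, we get $|\phi(x) - \phi(y)| \leq 1$ and hence the current magnitude is at most $1/u_e \leq u_e$ whenever $u_e \geq 1$ — but for arbitrary weights this needs care, so instead I would scale the electrical flow. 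Cleaner: normalize so the potential drop is small, or simply observe that the electrical flow of value $c_{st}$ can be scaled down to value $1$ (dividing currents by $c_{st}$), and this scaled flow is feasible (each edge current is then $(\phi(x)-\phi(y))/(u_e\, c_{st})$, and one shows this is at most $u_e$ since $c_{st}$ is at least the reciprocal of the effective resistance, which is at most the length of any single $s$-$t$ path's resistance). Rather than fuss with scaling, the slickest route is Thomson's principle / Rayleigh: $c_{st} = 1/r_{st}$ and $r_{st}$ equals the minimum energy $\sum_e u_e \theta_e^2$ over unit $s$-$t$ flows $\theta$; comparing with the $\ell_\infty$ notion, one shows any unit $s$-$t$ flow supported on a min-cut-saturating flow has energy at most...

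\medskip
Honestly, the cleanest argument I would actually write: since $c_{st}$ is the current for a unit voltage difference, consider instead imposing voltage difference $c_{st}$ reversed — or just use that effective resistance is monotone and $r_{st} \geq 1/k_{st}$ by Rayleigh monotonicity (shorting all of $S$ to $s$ and all of $V \setminus S$ to $t$ for an optimal cut $S$ leaves a single ``super-edge'' of conductance $u(\delta(S)) = k_{st}$, so $r_{st} \geq 1/k_{st}$, giving $c_{st} = 1/r_{st} \leq k_{st}$). This shorting argument is the main obstacle to state precisely — one must check that contracting vertex sets only decreases effective resistance (Rayleigh monotonicity under contraction) and that after contracting $S$ and $V \setminus S$ the only edges remaining are exactly those of $\delta(S)$, forming parallel resistors whose combined conductance is $\sum_{e \in \delta(S)} 1/u_e$... which is \emph{not} $k_{st}$. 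So the resistors of $\delta(S)$ in \emph{parallel} give conductance $\sum_{e} 1/u_e$, not $u(\delta(S))$; I have the electrical model backwards relative to a ``capacity'' model. The resolution, and the step I expect to be the real obstacle, is to get the correspondence between ``resistor of value $u_e$'' and cut weights exactly right, after which the shorting/Rayleigh-monotonicity argument delivers $c_{st} \le k_{st}$ cleanly, and combined with $k'_{st} \le k_{st}$ from the first paragraph we conclude $k_{st} \ge \max\{c_{st}, k'_{st}\}$.
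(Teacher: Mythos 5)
Your first inequality, $k_{st} \geq k'_{st}$, is fine; the paper dismisses it as immediate from the definition, and your restriction-to-$H$ argument is a correct way to spell that out. For $k_{st} \geq c_{st}$, your shorting argument is in fact the paper's proof: take a minimum $s$-$t$ cut $\delta(S)$ of weight $k_{st}$, contract $S$ and $V \setminus S$ into $s'$ and $t'$, invoke Rayleigh monotonicity (shorting can only increase effective conductance, so $c_{s't'} \geq c_{st}$), and compute the conductance of the resulting two-vertex graph. The only genuine gap in your writeup is that you stop exactly there, and your retraction ("the parallel resistors give $\sum_e 1/u_e$, not $u(\delta(S))$") rests on the wrong reading of the convention. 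Despite the phrase ``resistor of value $u_e$,'' the paper treats the weight $u_e$ as the \emph{conductance} of edge $e$ (resistance $1/u_e$); this is what makes the worked examples in \Appendix{gapexample} come out as stated (e.g.\ a path of $n-1$ edges each of weight $n-1$ contributes conductance $\frac{n-1}{n-1}=1$), and it is the only convention under which $c_e \leq k_e$ can hold in general, which \Corollary{sample_with_conductance} relies on. With that convention your original computation stands: after the contraction the surviving non-loop edges are exactly those of $\delta(S)$, in parallel between $s'$ and $t'$, with combined conductance $\sum_{e \in \delta(S)} u_e = u(\delta(S)) = k_{st}$, so $c_{st} \leq c_{s't'} = k_{st}$.

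Incidentally, your earlier max-flow sketch also goes through once the convention is fixed: with unit voltage the current on $e=xy$ is $u_e(\phi(x)-\phi(y))$, of magnitude at most $u_e$ because $0 \leq \phi \leq 1$ by the maximum principle, so the electrical flow of value $c_{st}$ is feasible for capacities $u_e$ and max-flow min-cut gives $c_{st} \leq k_{st}$ directly. But this detour is unnecessary; the contraction-plus-Rayleigh route you sketched first is the paper's one-line argument, and the Thomson/energy digression can be dropped.
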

\begin{proof}
It is immediate from the definition of edge strength that $k'_{st} \leq k_{st}$,
so we focus on the effective conductance.
Since the connectivity between $s$ and $t$ is $k_{st}$,
there is a cut of size $k_{st}$ separating $s$ and $t$.
Contracting both sides of the cut, 
we get two new vertices $s'$ and $t'$.
By Rayleigh monotonicity~\cite{DoyleSnell}, $c_{s't'}$ is at least $c_{st}$.
Clearly $c_{s't'} = k_{st}$, so the proof is complete.
\end{proof}

\section{Corollaries of \Theorem{mainthm}}
\AppendixName{corollaries}

First we show that our corollaries satisfy the hypotheses of \Theorem{mainthm}.
By \Claim{ck}, Corollaries \ref{cor:sample_with_connectivity}, \ref{cor:sample_with_conductance}
and \ref{cor:sample_with_strength} all have $\kappa_e \leq k_e$,
so \Theorem{mainthm} is applicable.

It remains to analyze $|E'|$, the number of sampled edges.
For Corollaries \ref{cor:sample_with_connectivity} and \ref{cor:sample_with_strength}
we use a property of edge strength proved by Bencz\'ur and Karger \cite[Lemma 2.7]{BK}. 

\begin{lemma}
\LemmaName{sample_size}
In a multigraph with edge strengths $k'_e$, we have
$$
\sum_{e\in E} 1/k'_e ~\le~ n-1.
$$
Here the sum is over all copies of the multiedges.
\end{lemma}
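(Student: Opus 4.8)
The plan is to prove this by induction on $n$, exploiting the recursive structure of $k$-strong components. Recall that the strength $k'_e$ is the largest $k$ such that some $k$-strong component contains both endpoints of $e$. The key structural fact is that for any value $k$, the maximal $k$-edge-connected vertex-induced subgraphs (the $k$-strong components) partition the vertex set in a laminar-across-scales fashion: as $k$ decreases, components only merge. I would phrase the induction on the number of vertices of the graph $G$ (with the base case $n=1$ being trivial since there are no edges, so the sum is $0 \le 0$).

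For the inductive step, let $k$ be the strength of the whole graph $G$, i.e. the largest value for which $G$ itself is $k$-edge-connected. If $k = 0$, then $G$ is disconnected and I would apply the inductive hypothesis to each connected component $G_j$ on $n_j$ vertices, summing the bounds $\sum_{e \in G_j} 1/k'_e \le n_j - 1$; since there are at least two components, $\sum_j (n_j - 1) = n - (\text{number of components}) \le n - 2 \le n-1$. So assume $k \ge 1$ and $G$ is connected. Consider the edges $e$ with $k'_e = k$ — call this set $E_k$; by definition these are exactly the edges not contained in any proper strong subgraph of strictly larger strength. Removing $E_k$ (or rather, passing to the $(k+1)$-strong components) breaks $G$ into vertex-disjoint pieces $G_1, \dots, G_t$ on $n_1, \dots, n_t$ vertices with $\sum_i n_i = n$. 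Every edge not in $E_k$ lies entirely within one of these pieces, and its strength within that piece equals its strength in $G$ (strength is determined by the strong component containing it, which is contained in one $G_i$). Applying the inductive hypothesis to each $G_i$ gives $\sum_{e \notin E_k} 1/k'_e \le \sum_i (n_i - 1) = n - t$.

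It remains to bound $\sum_{e \in E_k} 1/k'_e = |E_k|/k$ by $t - 1$, i.e. to show $|E_k| \le k(t-1)$. Here I would contract each piece $G_i$ to a single vertex, obtaining a multigraph $H$ on $t$ vertices whose edges are exactly the copies of edges in $E_k$. The graph $G$ being $k$-edge-connected implies $H$ is $k$-edge-connected (contraction cannot decrease global edge connectivity), so $H$ is connected and in fact every cut of $H$ has at least $k$ edges. But I actually want an \emph{upper} bound on $|E_k| = |E(H)|$, so connectivity alone is not enough — I need that $H$ is not $(k+1)$-edge-connected after removing any... hmm. The cleaner route: among the $G_i$, the contracted graph $H$ has the property that $G$ is not $(k+1)$-strong but the $G_i$ are the $(k+1)$-strong components, which forces a bound relating $|E_k|$ to the number of pieces. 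Concretely, one shows $|E_k| \le k(t-1)$ by a sparsest-cut / averaging argument: if $|E(H)| > k(t-1)$ then $H$ has average degree exceeding $2k(t-1)/t$, and a standard lemma (every graph with $m > k(t-1)$ edges on $t$ vertices contains a $(k+1)$-edge-connected vertex-induced subgraph) would contradict maximality of the $G_i$ as $(k+1)$-strong components. This last combinatorial lemma — that $m > k(t-1)$ forces a $(k+1)$-edge-connected induced subgraph — is the real content, and is itself proved by induction (repeatedly removing a vertex of degree $\le k$, which must terminate before the graph is exhausted since the edge count is too high). I expect establishing this step cleanly to be the main obstacle; everything else is bookkeeping with the laminar structure of strong components. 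Since this is Lemma 2.7 of Bencz\'ur and Karger \cite{BK}, I would in fact just cite their proof, but the inductive skeleton above is the argument.
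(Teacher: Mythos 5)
The paper itself does not prove \Lemma{sample_size} at all: it is quoted verbatim from Bencz\'ur and Karger \cite[Lemma 2.7]{BK}, so any complete argument you give is necessarily a different route from the paper's (which is a citation). Your inductive skeleton is sound, and the reductions you make are correct: for the maximal $k$ with $G$ $k$-edge-connected, every edge has $k'_e\ge k$; the edges with $k'_e\ge k+1$ lie inside the (vertex-disjoint) $(k+1)$-strong components and keep their strength there, so the inductive hypothesis handles them; and the whole lemma then rests on the counting bound $|E_k|\le k(t-1)$ for the edges joining the $t\ge 2$ pieces. That bound, and the auxiliary fact you invoke (a multigraph on $t$ vertices with more than $k(t-1)$ edges contains a $(k+1)$-edge-connected induced subgraph on at least two vertices), are both true, and your contraction argument correctly turns such a subgraph of the contracted graph $H$ back into a $(k+1)$-edge-connected induced subgraph of $G$ spanning several pieces, contradicting maximality.

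The one genuine flaw is your parenthetical proof of that auxiliary fact: repeatedly deleting a vertex of degree at most $k$ only shows that a graph with more than $k(t-1)$ edges contains an induced subgraph of \emph{minimum degree} at least $k+1$, which is far weaker than $(k+1)$-edge-connectivity (two large cliques joined by a single edge have high minimum degree and edge connectivity $1$). The correct elementary proof deletes a \emph{cut} of at most $k$ edges: if $H$ is not $(k+1)$-edge-connected, split $V$ into $A\cup B$ across a cut of size at most $k$; then $e(A)+e(B)>k(|A|-1)+k(|B|-1)$, so one side inherits the edge surplus and induction applies. With that repair your proof is complete. For comparison, Bencz\'ur and Karger's own argument is shorter and avoids the counting lemma entirely: take a global minimum cut of value $c$ in a connected component; its at most $c$ crossing edges each have strength at least $c$, so they contribute at most $1$ to the sum; deleting them can only \emph{decrease} the strengths of the remaining edges (hence only increase the reciprocals), so induction on the two resulting pieces gives $(n_1-1)+(n_2-1)+1\le n-1$. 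Your top-down decomposition into $(k+1)$-strong components buys a structural picture of where the strength-$k$ edges live, at the cost of needing the extra edge-count lemma; the bottom-up min-cut peeling exploits the monotonicity of strength under edge deletion to sidestep it.
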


Thus, for Corollaries \ref{cor:sample_with_connectivity} and \ref{cor:sample_with_strength}, we have
$$
\expect{\card{E'}}
    ~=~ \rho \sum_{e\in E} p_e
    ~=~ \rho \sum_{e\in E} 1/\kappa_e
    ~\le~ \rho \sum_{e\in E} 1/k'_e
    ~\le~ \rho (n-1)
    ~=~ O(n \log^2 n/\eps^2).
$$

Finally, we must bound the size of $E'$ in \Corollary{sample_with_conductance}.
We require the following lemma.

\begin{lemma}
\LemmaName{kirchoff}
Let $G=(V,E)$ be a multigraph, and let $T$ be a spanning tree in $G$
chosen uniformly at random.
Then, for any copy of an edge $e \in E$, $\Pr[e \in T] = 1/c_e$.
\end{lemma}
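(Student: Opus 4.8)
The plan is to deduce this from Kirchhoff's classical theorem relating uniform spanning trees to electrical networks. Recall that we are viewing $G$ as the multigraph with $u_e$ parallel unit-resistor copies of each edge $e$; since $u_e$ unit resistors in parallel have combined resistance $1/u_e$, this multigraph realizes exactly the weighted network used to define the effective conductance $c_e$, and hence the effective resistance between the endpoints of $e$ in the multigraph equals $r_e = 1/c_e$. Kirchhoff's theorem asserts that, in a multigraph whose edges are unit resistors, a uniformly random spanning tree $T$ contains a fixed edge $f$ with probability equal to the effective resistance between the endpoints of $f$. Applying this to an arbitrary copy $f$ of the multiedge $e$ immediately gives $\Pr[f \in T] = r_e = 1/c_e$, which is the claim.

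For completeness I would include the short standard derivation rather than only citing it (it is the rank-one case of the transfer-current theorem; see Lyons and Peres~\cite{LyonsPeres}). Fix a copy $f = st$. Deletion--contraction of the spanning-tree count gives $\Pr[f \in T] = \tau(G/f)/\tau(G)$, where $\tau(\cdot)$ denotes the number of spanning trees. Let $r'$ be the effective resistance between $s$ and $t$ in $G - f$. The matrix-tree theorem, in the ``identified-vertices'' form $\tau(H_{x\sim y}) = R_{\mathrm{eff}}^{H}(x,y)\,\tau(H)$ where $H_{x\sim y}$ denotes $H$ with $x$ and $y$ merged (a spanning tree of $H_{x\sim y}$ pulls back to a spanning two-forest of $H$ separating $x$ from $y$), yields $\tau(G/f) = \tau((G-f)_{s\sim t}) = r'\,\tau(G-f)$, while deletion--contraction gives $\tau(G) = \tau(G-f) + \tau(G/f) = (1+r')\,\tau(G-f)$. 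Hence $\Pr[f \in T] = r'/(1+r')$, and the right-hand side is precisely the effective resistance between $s$ and $t$ in $G$, since there $f$ (a unit resistor) is in parallel with the remainder of the network, whose effective resistance is $r'$. Therefore $\Pr[f \in T] = r_{st} = 1/c_{st} = 1/c_e$.

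I do not anticipate any genuine difficulty; this is a classical fact and the work is essentially bookkeeping. The points requiring care are: checking that the parallel bundles of unit resistors in the multigraph reproduce exactly the weighted network defining $c_e$, and noting that, since two parallel copies of a multiedge form a cycle, at most one copy of $e$ can belong to $T$, so the per-copy value $1/c_e$ is the natural quantity. As a consistency check, summing over copies gives $\sum_{\text{copies } f} \Pr[f \in T] = \sum_{e} u_e/c_e = \expect{|T|} = n-1$, which is also what the size bound in \Corollary{sample_with_conductance} relies on.
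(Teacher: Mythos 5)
Your proposal is correct and takes essentially the same route as the paper, which simply cites the classical Kirchhoff/effective-resistance fact (Kirchhoff, Brooks et al., Lov\'asz, Lyons--Peres) after noting that the $u_e$ parallel unit resistors realize the network defining $c_e$; your added deletion--contraction/matrix-tree derivation of $\Pr[f\in T]=r'/(1+r')=r_{st}$ is the standard proof of that cited fact and is sound (with the trivial degenerate case where the copy $f$ is a bridge, so $\tau(G-f)=0$ and $\Pr[f\in T]=1=r_{st}$, handled directly rather than via $r'$).
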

\begin{proof}
See Kirchhoff~\cite{Kirchhoff},
Brooks et al.~\cite[pp.~318]{BSST},
Lov\'asz~\cite[Theorem 4.1(i) and Corollary 4.2]{LovaszSurvey} and
Lyons and Peres~\cite[Corollary 4.4]{LyonsPeres}.
\end{proof}

This immediately implies that
$$
    \sum_{e\in E} 1/c_e ~=~ \sum_{e\in E} \prob{ e \in T } ~=~ \expect{ \card{E(T)} } = n-1.
$$
This fact is known as Foster's theorem,
and it is independently due to Foster~\cite{Foster} and Tetali~\cite{Tetali}.
Thus,
$$
    \expect{ \card{E'} } ~=~ \rho \sum_{e\in E} p_e ~=~ \rho \sum_{e\in E} 1/c_e ~=~ \rho (n-1)
        ~=~ O( n \log^2 n / \eps^2).
$$

\section{Motivation for Partitioning Edges}
\AppendixName{partitioning}

The natural first approach to proving \Theorem{mainthm}
would be to bound the probability of large deviation for each cut
and then union bound over all cuts.
This approach is not feasible, as can be illustrated using the example in \Figure{example3}.

\begin{figure}
	\centering
    \includegraphics[width=2.5in,clip]{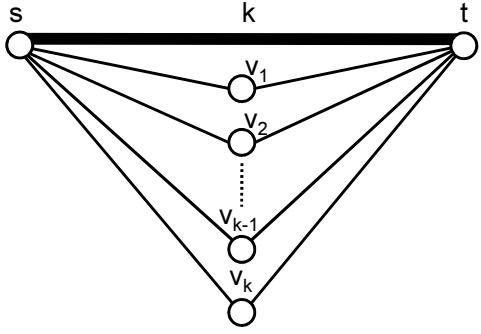}
	\caption{The cut-induced set consisting of $st$ is overcounted $2^n$ times if we simply union
    bound over all cuts.}
	\FigureName{example3}
\end{figure}

In this graph, $s$ and $t$ are connected by $k$ parallel edges
and $k$ paths of length $2$.
Recall that in our sampling scheme 
each copy of $st$ is sampled with probability $1/2k$
for $\rho=O(\log^2n)$ rounds and is assigned a weight of $2k/\rho$ if sampled.
Each edge other than $st$ is sampled with probability $1/2$
for $\rho$ rounds and is assigned a weight of $2/\rho$ if sampled.
Consider a set $U\subseteq V-t$ that contains $s$.
Then $|\delta(U)|$ is at most $2k$.
Suppose we want to bound the probability that
the sampled weight $w(\delta(U))$ exceeds $4k$.
For this to happen, at most $2\rho$ copies of $st$
can be included in the sparsifier.
By a Chernoff bound, this many edges are included 
with probability at most $e^{-2\rho}=e^{O(-\log^2n)}$.
However, there are $\Omega(2^n)$ such $U$'s,
which is too many for such a union bound to work.

The reason this union bound fails
is that the event ``more than $2\rho$ copies of $st$ are included''
is overcounted $\Theta(2^n)$ times, once for each $U$.
However since all $\delta(U)$'s share the same $k$ copies of $st$,
we actually only need to analyze this event once.

Bencz\'ur and Karger \cite{BK} accomplished this by decomposing the graph.
Assume that the edges $E=\{e_1,\ldots,e_m\}$ are sorted by increasing edge strengths.
Each $G_j$ contains all edges $e_i$ with $i\ge j$.
Then $G$ can be viewed as the sum of $G_j$'s,
with each $G_j$ scaled by $k'_{e_j}-k'_{e_{j-1}}$.
An important property of this decomposition is that
if $e_i$ is in $G_j$ 
then the strength of $e_i$ in $G_j$ is the same as
its in the original graph $G$. 
This is because the $k'_{e_i}$-strong component $H$ in $G$ that contains $e_i$
must also be present in $G_j$, as all edges in $H$ have strengths at least $k'_{e_i}$.

Therefore, even though edges in $G_j$
have small sampling probabilities (at most $1/k'_{e_j}$), 
the expected number of sampled edges in every cut is 
at least $\Omega(\log n/\epsilon^2)$,
since the min cut of $G_j$ is large (at least $k'_{e_j}$).
Thus Karger's graph skeleton analysis is applicable to sampling in $G_j$.
Roughly speaking, in order to use the Chernoff bound 
to obtain a constant factor approximation
in the number of sampled edges in a cut
with a failure probability of $n^{-\Omega(1)}$,
the expected number of sampled edges in the cut needs to be $\Omega(\log n)$.

To prove \Theorem{mainthm}, we could attempt to use the same decomposition to 
analyze our sampling scheme where edge connectivity is used instead of strength.
The problem is that in general
edge connectivity is not preserved under such decomposition.
To see this, consider the example in \Figure{example1}.
Observe that the subgraph induced by those edges 
with connectivities at least $n-1$ 
consists of only one edge $st$, so this subgraph has min cut value $1$.
The expected number of copies of $st$ in the sparsifier is $O(\log^2 n/n)$,
so we cannot expect to say that sampling preserves every cut of this subgraph
to within $1\pm \epsilon$.

\section{Proofs for \Section{badevents}}
\AppendixName{boundproofs}

In this section we prove \Claim{large_lambda}, \Claim{small_lambda} and \Claim{errorub}.
We require the following three versions of the Chernoff bound.
For the case $U=1$, these can be found in the survey of McDiarmid~\cite{McDiarmid};
the case of larger $U$ reduces to that case by scaling.

\begin{theorem}
\TheoremName{chernoff_3}
Let $X_1,\ldots,X_k$ be independent random variables with values in $[0,1]$.
Let $\alpha_1,\ldots,\alpha_k$ be scalars in $[0,U]$.
Let $X$ be a weighted sum of Bernoulli trials defined by
$X=\sum_{i=1}^{k} \alpha_i X_i$, and let $\mu = \expect{X}$.
Then for any $\delta > 0$, we have
$$
    \prob{ |X-\mu|\ge \delta\mu} ~\le~ 2\exp\Bigg(- \frac{ \delta^2 \mu}{2(1+\delta/3)U} \Bigg).
$$
\end{theorem}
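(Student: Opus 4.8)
The plan is to follow the textbook exponential-moment (Chernoff) argument, establishing the bound first in the case $U = 1$ and then deducing the general case by a simple rescaling. Since the paper already remarks that the $U=1$ version is recorded by McDiarmid~\cite{McDiarmid}, one option is to cite it outright; but the short derivation is worth sketching. First I would bound the moment generating function: each summand $\alpha_i X_i$ takes values in $[0,1]$, so by convexity of $s\mapsto e^{ts}$ on $[0,1]$ we have $e^{ts}\le 1+(e^t-1)s$ for $s\in[0,1]$ and any real $t$; applying this pointwise to $\alpha_i X_i$, taking expectations, and using $1+x\le e^x$ gives $\expect{e^{t\alpha_i X_i}}\le\exp\big((e^t-1)\alpha_i\expect{X_i}\big)$, and hence by independence $\expect{e^{tX}}\le\exp\big((e^t-1)\mu\big)$.

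Next I would apply Markov's inequality to $e^{tX}$. For the upper tail, choosing the optimal $t=\ln(1+\delta)>0$ yields $\prob{X\ge(1+\delta)\mu}\le\exp(-\mu\,g(\delta))$, where $g$ is the function from \Equation{gdef}; for the lower tail, the analogous computation with $t=\ln(1-\delta)<0$ (legitimate for $0<\delta<1$, while the bound is vacuous for $\delta\ge1$) gives $\prob{X\le(1-\delta)\mu}\le\exp(-\mu\delta^2/2)$. It then remains to invoke the elementary inequality $g(\delta)\ge\delta^2/\big(2(1+\delta/3)\big)$, valid for all $\delta\ge0$, so that each one-sided probability is at most $\exp\big(-\delta^2\mu/(2(1+\delta/3))\big)$; a union bound over the two tails contributes the factor $2$, completing the $U=1$ case.

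For general $U\ge 1$, I would set $Y:=X/U=\sum_i(\alpha_i/U)X_i$. Each coefficient $\alpha_i/U$ now lies in $[0,1]$, so $Y$ meets the hypotheses of the $U=1$ case and $\expect{Y}=\mu/U$; moreover the event $\{\,|X-\mu|\ge\delta\mu\,\}$ is literally the same as $\{\,|Y-\mu/U|\ge\delta\,(\mu/U)\,\}$, so applying the $U=1$ bound to $Y$ gives $\prob{|X-\mu|\ge\delta\mu}\le 2\exp\big(-\delta^2(\mu/U)/(2(1+\delta/3))\big)$, which is the claimed estimate. The argument is entirely routine; the only point needing a little care is the convexity step for a \emph{weighted} sum — one must have each $\alpha_i X_i\in[0,1]$ (which is precisely why the coefficients are assumed in $[0,U]$ and the reduction to $[0,1]$ is carried out before applying $e^{ts}\le 1+(e^t-1)s$, as that inequality fails for $s>1$) — together with the standard numerical fact $g(\delta)\ge\delta^2/(2(1+\delta/3))$.
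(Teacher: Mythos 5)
Your proposal is correct and matches the paper's treatment: the paper simply cites McDiarmid's survey for the $U=1$ case and notes that the general case follows by scaling, which is exactly your reduction via $Y=X/U$, and your sketch of the $U=1$ bound (MGF estimate by convexity, Markov with $t=\ln(1+\delta)$, and the inequality $g(\delta)\ge \delta^2/(2(1+\delta/3))$) is the standard derivation behind that citation. The only cosmetic remark is that the scaling argument needs no assumption $U\ge 1$ (any $U>0$ works), and the lower-tail case $\delta\ge 1$ is handled trivially since $X\ge 0$, as you indicate.
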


\begin{corollary}
\CorollaryName{chernoff_1}
Let $X$ and $\mu$ be as in \Theorem{chernoff_3}.
Then for any $0<\delta<1$, we have
$$
    \Pr[|X-\mu|\ge \delta\mu]\le 2e^{-\delta^2 \mu/3U}.
$$
\end{corollary}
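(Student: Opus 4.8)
The plan is to obtain this as an immediate weakening of \Theorem{chernoff_3}, trading the sharper exponent for a cleaner one. The first and only substantive step is to record the numerical fact that is needed: for $0 < \delta < 1$ we have $2(1+\delta/3) < 2(1+1/3) = 8/3 < 3$. (This already holds for every $\delta < 3/2$, but the hypothesis $\delta < 1$ more than suffices.) This is the estimate that lets us replace the awkward denominator $2(1+\delta/3)U$ of \Theorem{chernoff_3} by the simpler $3U$.

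With that in hand the rest is routine and requires no new probabilistic input. Since $0 \le 2(1+\delta/3)U < 3U$ and $\delta^2\mu \ge 0$, we get $\frac{\delta^2\mu}{2(1+\delta/3)U} \ge \frac{\delta^2\mu}{3U}$, and because $t \mapsto e^{-t}$ is decreasing this yields $\exp\!\big(-\tfrac{\delta^2\mu}{2(1+\delta/3)U}\big) \le \exp\!\big(-\tfrac{\delta^2\mu}{3U}\big)$. Plugging this into the bound of \Theorem{chernoff_3} and keeping the leading factor of $2$ gives $\prob{|X-\mu| \ge \delta\mu} \le 2e^{-\delta^2\mu/(3U)}$, which is the claim. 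There is essentially no obstacle: the whole proof is the constant comparison $2(1+\delta/3) < 3$ for $\delta < 1$ together with monotonicity of the exponential, so the ``hard part'' (the Bernstein-type tail bound) has already been done in \Theorem{chernoff_3}.
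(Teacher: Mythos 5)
Your proof is correct and is exactly the intended derivation: the paper itself gives no separate argument for \Corollary{chernoff_1} beyond citing McDiarmid, and the standard route is precisely your observation that $2(1+\delta/3)\le 8/3<3$ for $0<\delta<1$, so the exponent in \Theorem{chernoff_3} can only improve when the denominator is replaced by $3U$. Nothing is missing.
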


\begin{theorem}
\TheoremName{chernoff_4}
Let $X$ and $\mu$ be as in \Theorem{chernoff_3}.
Then for any $\delta > 0$, we have
$$
    \Pr[X-\mu\ge \delta\mu] ~\le~ \exp\Bigg(-\frac{g(\delta)\mu}{U} \Bigg),
$$
where $g(\delta) = (1+\delta) \ln( 1+\delta ) - \delta$ is the function defined in \Equation{gdef}.
\end{theorem}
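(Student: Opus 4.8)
The plan is to prove this by the standard exponential-moment (Chernoff) method, choosing the free parameter so that the function $g$ emerges exactly from the optimization. The key point, which also explains why the constant $U$ appears as it does, is that even though the summands $\alpha_i X_i$ are not Bernoulli, each one takes values in the interval $[0,U]$, so one may exploit convexity of $x \mapsto e^{tx}$ on $[0,U]$.

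First I would record the pointwise bound: for $t > 0$ and $x \in [0,U]$, convexity gives $e^{tx} \le 1 + \tfrac{x}{U}(e^{tU}-1)$. Applying this with $x = \alpha_i X_i$ and taking expectations yields
\[
\expect{e^{t\alpha_i X_i}} ~\le~ 1 + \frac{\expect{\alpha_i X_i}}{U}\,(e^{tU}-1) ~\le~ \exp\!\Big( \frac{\expect{\alpha_i X_i}}{U}\,(e^{tU}-1) \Big),
\]
using $1+y \le e^y$. By independence of the $X_i$ the moment generating function factorizes, so $\expect{e^{tX}} \le \exp\!\big(\tfrac{\mu}{U}(e^{tU}-1)\big)$. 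Then Markov's inequality applied to $e^{tX}$ gives, for every $t > 0$,
\[
\prob{X-\mu \ge \delta\mu} ~=~ \prob{e^{tX} \ge e^{t(1+\delta)\mu}} ~\le~ \exp\!\Big( \frac{\mu}{U}(e^{tU}-1) - t(1+\delta)\mu \Big).
\]

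Finally I would optimize over $t$. Writing $s = tU$, the exponent becomes $\tfrac{\mu}{U}\big(e^{s} - 1 - (1+\delta)s\big)$; differentiating in $s$ shows the bracket is minimized at $s = \ln(1+\delta) > 0$, where it equals $\delta - (1+\delta)\ln(1+\delta) = -g(\delta)$. Substituting back gives $\prob{X-\mu \ge \delta\mu} \le \exp(-g(\delta)\mu/U)$, as required. (If $\mu = 0$ the statement is trivial, since then $X = 0$ almost surely and the right-hand side equals $1$.)

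I do not expect a genuine obstacle here; the only steps deserving care are checking that the convexity inequality is invoked on the correct interval $[0,U]$ — so that the normalization by $U$ is legitimate despite the summands not being $\{0,1\}$-valued — and verifying that the optimal $s = \ln(1+\delta)$ is strictly positive for all $\delta > 0$, which is what makes the application of Markov to $e^{tX}$ with $t > 0$ valid throughout.
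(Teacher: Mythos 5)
Your proof is correct. Note, though, that the paper does not actually write out a proof of this theorem: it simply cites McDiarmid's survey for the case $U=1$ (where the summands $\alpha_i X_i$ lie in $[0,1]$) and remarks that the general case follows by dividing all the $\alpha_i$, and hence $X$ and $\mu$, by $U$, which converts the $U=1$ bound $\exp(-g(\delta)\mu')$ with $\mu'=\mu/U$ into the stated $\exp(-g(\delta)\mu/U)$. Your route is instead a self-contained derivation by the exponential-moment method, and it is the standard argument underlying the cited result: the convexity step $e^{tx}\le 1+\tfrac{x}{U}(e^{tU}-1)$ on $[0,U]$ is exactly what handles the non-Bernoulli bounded summands (and is where the scaling by $U$ enters directly rather than via a reduction), independence factorizes the moment generating function, and the choice $s=tU=\ln(1+\delta)>0$ produces $-g(\delta)\mu/U$ in the exponent, with the $\mu=0$ case trivial as you observe. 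What your approach buys is self-containment and transparency about the role of $U$; what the paper's citation-plus-scaling buys is brevity. There is no gap in your argument.
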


\repeatclaim{\Claim{large_lambda}}{\clmlargelambda}

\begin{proof}
Let $U = \frac{2^{i+1}}{\rho}$ and $\mu=\expect{X_F}=|F|$.
By the definition of our sampling process,
$X_F$ is a weighted sum of Bernoulli trials where each weight is less than $U$.
Thus
\begin{align*}
\prob{\cA_F}
    &~=~ \prob{|X_F-|F|| > \epsilon |F|} \\
    &~\le~ 2\exp\Bigg(-\frac{\epsilon^2|F|}{3 U}\Bigg)
        \qquad\text{(by \Corollary{chernoff_1})}\\
    &~\leq~ 2\exp\Bigg(- \frac{\epsilon^2 q(F) \rho}{3 \ln(n) 2^{i+1}} \Bigg)
        \qquad\text{(by our assumption on $q(F)$)}\\
    &~\leq~ 2\exp\Bigg(- \frac{\epsilon^2 \alpha(F) d \ln n}{6} \Bigg) \\
    &~\le~ 2n^{-d\alpha(F)\epsilon^2/6}.
\end{align*}
This concludes the proof.
\end{proof}

\repeatclaim{\Claim{small_lambda}}{\clmsmalllambda}

\begin{proof}
Let $U = \frac{2^{i+1}}{\rho}$, $\mu=\expect{X_F}=|F|$
and $\delta=\frac{\epsilon q(F)}{\ln(n) |F|}$.
Then $X_F$ is a weighted sum of Bernoulli trials, and $U$ is an upper bound on the weights.
Note that $\delta \ge \epsilon$ and $1 \geq \epsilon$.
Thus
$$
    \frac{\delta^2}{1+\delta/3}
        ~\ge~ \frac{\delta\epsilon}{1+\epsilon/3}
        ~\ge~ \frac{\delta\epsilon}{2}.
$$
Also, 
\begin{equation}
 \EquationName{small_lambda}
    \frac{\delta\mu}{U} 
         ~=~ \frac{\epsilon q(F) \rho}{\ln(n) 2^{i+1}}
         ~=~ \frac{\epsilon \alpha(F) d \ln n}{2}.
\end{equation}
Thus
\begin{eqnarray*}
\prob{\cB_F} &=& \prob{ |X_F-|F|| \ge \frac{\epsilon q(F)}{\ln n} } \\
&=& \prob{|X_F-|F|| \ge \delta |F|} \\
&\le& 2\exp\Bigg(-\frac{\delta^2\mu}{2(1+\delta/3)U}\Bigg)
    \qquad\text{(by \Theorem{chernoff_3})}\\
&\le& 2\exp\Bigg(-\frac{\delta\epsilon\mu}{4 U}\Bigg) \\
&\le& 2\exp\Bigg(-\frac{\epsilon^2\alpha(F) d \ln n}{8}\Bigg)
    \qquad\text{(by \Equation{small_lambda})}\\
&=& 2n^{-d\alpha(F) \epsilon^2/8}.
\end{eqnarray*}
This concludes the proof.
\end{proof}

\repeatclaim{\Claim{errorub}}{\clmerrorub}

\begin{proof}
Let $U = \frac{ 2^{i+1} }{ \rho }$ and $\mu = \expect{ X_F } = |F|$.
Then
\begin{align*}
\prob{ \cC_F }
 &~=~ \prob{ X_F - |F| > g^{-1}\Big( \frac{ \epsilon^2 q(F) }{ |F| \ln n } \Big) \cdot |F| } \\
 &~\le~ \exp\Bigg(-\Big(\frac{ \epsilon^2 q(F) }{ |F| \ln n }\Big) \frac{|F|}{U} \Bigg)
    \qquad\text{(by Theorem \ref{thm:chernoff_4})} \\
 &~\le~ \exp\Bigg(-\frac{ \epsilon^2 q(F) \rho }{ \ln(n) 2^{i+1} } \Bigg) \\
 &~\le~ \exp\Bigg(-\frac{ \epsilon^2 \alpha(F) d \ln n }{ 2 } \Bigg).
\end{align*}
This completes the proof.
\end{proof}

\repeatclaim{\Claim{union_bound}}{\clmunionbound}

\begin{proof}
Fix an $i$ and let $F^1, F^2, \ldots$ be all the cut-induced subsets of $E_i$,
ordered such that $q(F^1) \le q(F^2) \le \ldots$.
Let
$$
p_j ~=~
\begin{cases}
\prob{ \cA_F \union \cC_F } &\quad\text{(if $q(F) \leq |F| \ln n$)} \\
\prob{ \cB_F \union \cC_F } &\quad\text{(if $q(F) > |F| \ln n$)}.
\end{cases}
$$
By Claims \ref{clm:large_lambda}, \ref{clm:small_lambda} and \ref{clm:errorub},
there exists a value $d=O(1/\epsilon^2)$ such that 
\begin{equation}
\EquationName{p_j}
p_j ~\le~ 4n^{-(r+2)\alpha(F^j)}.
\end{equation}
We consider the first $n^2$ cut-induced sets.
Note that for all $j \ge 1$, $p_j \le 4n^{-(r+2)}$.
Therefore, a union bound shows that
the probability that any bad event happens for some $F^j$ with $1 \leq j \leq n^2$
is at most $n^2 \cdot p_1 \le 4n^{-r}$.

Now we consider the remaining cut-induced sets $F^j \subseteq E_i$ for $j > n^2$.
\Corollary{count_canonical_cut} states that, for any $\alpha \geq 1$,
$$
\card{ \setst{ \text{cut-induced set } F \subseteq E_i }{ \alpha(F) \leq \alpha } } ~<~ n^{2\alpha}.
$$
Letting $\alpha>1$ be such that $j = n^{2 \alpha}$,
we see that $\alpha(F^j) > \alpha = \frac{\ln j}{2 \ln n}$.
Thus, from \Equation{p_j} we have
$$
p_j
    ~\le~ 4n^{-(r+2)\alpha(F^j)}
    ~<~ 4n^{-(r+2) \ln(j)/2 \ln n }
    ~\le~ 4j^{-(r+2)/2}
$$
Thus 
\begin{align*}
\sum_{j>n^2} p_j
    &~\le~ \sum_{j>n^2} 4j^{-(r+2)/2} \\
    &~\le~ \int^\infty_{n^2} 4j^{-(r+2)/2}\, dj \\
    &~=~ \frac{-8 j^{-r/2}}{r} \Bigg{|}_{n^2}^\infty \\
    &~=~ O(n^{-r}).
\end{align*}
This completes the proof.
\end{proof}

\section{Proofs for \Section{allcuts}}
\AppendixName{calcproof}

\repeatclaim{\Claim{Fsmall}}{\clmfsmall}

\begin{proof}
Since $F \subseteq E_i$, every $e \in F$ satisfies $k_e \leq 2^{i+1}$.
Since $u_e \leq k_e$ for every $e$, we obtain $u_e \leq 2^{i+1}$.
Thus $|F| = \sum_{e \in F} u_e \leq \binom{n}{2} 2^{i+1}$.
This proves the claim.
\end{proof}

\repeatclaim{\Claim{concentrate_in_large_Ci}}{\clmconc}

\begin{proof}
By Claim~\ref{clm:Fsmall},
$$
\sum_{i\le \lg |C|-2\lg n- d} \card{C \intersect E_i}
~<~ \sum_{i\le \lg |C|-2\lg n - d} n^2 2^i \\
~\le~ n^2 2^{\lg |C|-2\lg n -d +1} \\
~=~ 2^{1-d} \, |C|.
$$
This completes the proof.
\end{proof}

\repeatclaim{\Claim{ginv}}{\clmginv}

The purpose of this claim is to give a simple, asymptotically tight upper bound on $g^{-1}$.
We thank ``mathphysicist'' from the web site MathOverflow
for pointing out that a precise expression for $g^{-1}$ can be given using the Lambert $W$ function.
Specifically, one can show that
$$
g^{-1}(x) ~=~ 
\exp\Big( W\big(\smallfrac{x-1}{e}\big)+1\Big)-1.
$$
Unfortunately this exact expression is not terribly useful,
since we do not know of any simple, asymptotically tight bounds on $W$.

\vspace{6pt}
\begin{proofof}{\Claim{ginv}}
For all $x \geq 0$, we have
\begin{gather}
    \nonumber
    \sqrt{x} ~\geq~ \ln( 1 + \sqrt{x} )
    \\\nonumber\implies\qquad
    h(x) ~=~ \frac{2x}{\log( 1 + \sqrt{x} )} ~\geq~ \sqrt{x}
    \\\nonumber\implies\qquad
    \ln(1 + h(x)) ~\geq~ \ln(1 + \sqrt{x})
    \\\implies\qquad
    \EquationName{ginv1}
    h(x) \cdot \ln( 1 + h(x) )
        ~=~ \frac{ 2x \cdot \ln( 1 + h(x) ) }{ \ln( 1 + \sqrt{x} ) }
        ~\geq~ 2x.
\end{gather}
Next, for $y \geq 0$,
\begin{gather*}
    \frac{1}{1+y} ~\geq~ \frac{1}{1+y} - \frac{y}{(1+y)^2}.
    \intertext{Integrating, we obtain}
    \ln(1+y) ~\geq~ \frac{y}{1+y}
    \\\implies\qquad
    \ln(1+y) ~\geq~ \frac{1}{2} \Big( \ln(1+y) + \frac{y}{1+y}\Big).
    \intertext{Integrating again, we obtain}
    (1+y) \ln(1+y) - y ~\geq~ \frac{y \ln(1+y)}{2}.
\end{gather*}
Substituting $y=h(x)$ into this and applying \Equation{ginv1} yields
$$
g(h(x))
    ~=~ (1+h(x)) \ln(1+h(x)) - h(x)
    ~\geq~ \frac{h(x) \ln(1+h(x))}{2}
    ~\geq~ x.
$$
As observed above, $g^{-1}$ is strictly monotonically increasing.
Thus $g(h(x)) \geq x$ implies that $h(x) \geq g^{-1}(x)$.
\end{proofof}

\section{Probability of success in \Algorithm{contract}}
\AppendixName{contract_alg}

In this appendix we complete the proof of \Theorem{contract_with_split}
by proving the following claim.

\repeatclaim{\Claim{output}}{\clmoutput}

\begin{proof}
Define $R := \ceil{2 \alpha}$.
In the last iteration of the algorithm, the number of remaining vertices is at least $R+1$.
The probability that the invariants hold at the end of the algorithm
is at least the product of the probabilities that the 
invariants are not violated at any step of the algorithm.
By Claims \ref{clm:splitoff} and \ref{clm:contract}, this probability is at least
$$
    \prod_{r=n}^{R+1} 1 - \frac{2 \alpha}{r} 
    ~~=~~ \frac{n-2\alpha}{n} \cdot \frac{n-1 -2\alpha}{n-1} \cdots \frac{R+1 -2\alpha}{R+1} 
    ~~=~~ \frac{(n-2\alpha)!}{(R-2\alpha)!}\frac{R!}{n!},
$$
where the factorial function is extended to arbitrary real numbers via the Gamma function.

Since there are at most $R$ remaining vertices at the end of the algorithm
there are less than $2^{R-1}$ remaining non-trivial cuts,
at least one of which induces $F$, by (I1).
Therefore, the probability that the last step of the algorithm
selects a set $S$ that induces $F$ is at least
$$
    2^{1-R} \frac{(n-2\alpha)!}{(R-2\alpha)!}\frac{R!}{n!}
    ~>~ \frac{1}{n^{2\alpha}} \frac{1}{(R-2\alpha)!}
    ~\ge~ n^{-2\alpha},
$$
where we have used the inequalities $2^{R-1}\le R!$,
$n! / (n-2\alpha)! < n^{2\alpha}$,
and $x! \leq 1$ for $x \in [0,1]$.
\end{proof}

\section{Random contraction algorithm by random walks}
\AppendixName{random_walk_alg}

In this appendix we present \Algorithm{contractRW},
which is a variant of the contraction algorithm that contracts
random walks instead of random edges.
We use the similar terminology and notation to \Section{splitting_off_alg},
e.g., black vertices.
The following analysis of the algorithm immediately implies \Theorem{count_cut_2}.

\begin{theorem}
\TheoremName{contract_with_RW}
For any cut-induced set $F \subseteq B$ with $q(F) \leq \alpha K$,
\Algorithm{contract} outputs $F$ with probability at least $n^{-2\alpha}$.
\end{theorem}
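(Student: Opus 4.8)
The plan is to follow the proof of \Theorem{contract_with_split} essentially verbatim, changing only the analysis of the contraction step, since \Algorithm{contract} and its random-walk variant differ only in that one line. Fix a cut-induced set $F \subseteq B$ with $q(F) \le \alpha K$ and let $C = \delta(S^\star)$ be a cut of minimum cardinality inducing $F$, so $|C| = q(F) \le \alpha K$. As before I would track three invariants in the current (contracted) multigraph: (I1) $F$ is cut-induced; (I2) $q(F) \le \alpha K$; and (I3) every remaining black edge $e$ still has $k_e \ge K$. Invariant (I3) holds at the start because the hypothesis $c_e \ge K$ gives $k_e \ge c_e \ge K$ for every $e \in B$, by \Claim{ck}. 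The splitting-off phase is handled exactly as in \Claim{splitoff}: it affects only white edges (so (I1) is untouched), it only shrinks cuts (so (I2) holds), and being admissible it does not change the edge connectivity between black vertices (so (I3) holds).

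The only genuinely new ingredient is the analogue of \Claim{contract} for the random-walk contraction step: assuming the invariants hold when $r$ vertices remain, they should survive the step with probability at least $1 - 2\alpha/r$ (summed appropriately if a single walk reduces the vertex count by more than one). As in \Claim{contract}, (I3) cannot be destroyed by a contraction --- any black cut surviving the contraction already existed, so edge connectivities between black vertices do not drop --- and (I1)--(I2) persist through the \emph{same} cut $C$ as long as the contracted blob does not straddle $S^\star$, i.e. as long as the walk traverses no edge of $C$. Bounding $\Pr[\text{the walk crosses } C]$ is the crux, and it is here that one invokes the electrical-network interpretation of random walks --- effective conductances, the stationary-walk-as-unit-flow correspondence, and Foster's theorem $\sum_e 1/c_e = n-1$ (\Appendix{corollaries}) --- together with the hypothesis $c_e \ge K$ for black edges and the fact that (I3) forces $|\delta(\{v\})| \ge K$ for every remaining super-vertex $v$ while $C$ contains only $|C| \le \alpha K$ edges. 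It is precisely this estimate, rather than the invariants, that exploits the stronger hypothesis, which is why this argument yields only the weaker \Theorem{count_cut_2} rather than \Theorem{count_cut_class}.

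With this claim in hand the endgame is identical to \Claim{output}. The probability that no invariant is violated before fewer than $R := \ceil{2\alpha}$ vertices remain is at least
$$
\prod_{r=n}^{R+1}\Big(1 - \frac{2\alpha}{r}\Big)
~=~ \frac{(n-2\alpha)!\,R!}{(R-2\alpha)!\,n!},
$$
and then, since fewer than $2^{R-1}$ nontrivial cuts remain and at least one of them induces $F$ by (I1), the final step picks a good cut with an extra factor $2^{1-R}$; multiplying and using $2^{R-1} \le R!$, $n!/(n-2\alpha)! < n^{2\alpha}$, and $(R-2\alpha)! \le 1$ gives the claimed probability $n^{-2\alpha}$.

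The main obstacle is the probabilistic core of the second paragraph. Unlike the single-edge contraction of \Algorithm{contract}, where $\Pr[e \in C] \le |C|/(Kr/2)$ is immediate, a multi-step random walk can cross $C$ in many ways, so the bound must be obtained through the flow/conductance correspondence rather than a direct union over the edges of $C$; the delicate part will be making the walk's length and stopping rule precise so that each step makes real progress on the vertex count \emph{and} the crossing probability it incurs is bounded by $2\alpha/r$ per unit of progress, since any larger constant times $\alpha/r$ would degrade the final exponent below $n^{-2\alpha}$.
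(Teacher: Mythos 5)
Your outline matches the paper's strategy at the structural level (maintain invariants, bound the damage per contraction by $2\alpha/r$, finish exactly as in \Claim{output}), but it omits the entire technical content of the theorem: the per-iteration bound
$\Pr[\text{the contracted walk crosses } C \mid u_1 \neq u_2] \le 2\,\card{C}/(K r)$,
where $r$ is the number of remaining black vertices. You correctly identify this as ``the crux'' and gesture at effective conductances, the flow correspondence and Foster's theorem, but you do not prove it, and it is not a routine computation. In the paper this is \Lemma{rw}, and its proof needs several nontrivial ingredients that your sketch does not anticipate: (i) conditioned on $u_1 \neq u_2$, the starting vertex is no longer distributed proportional to degree but proportional to $\cond_{w, W\setminus\{w\}}$ (\Claim{uw}, via the escape-probability formula $\Pr[\cE'] = \cond_{s,T\cup U}/d(s)$); (ii) the walk is conditioned on hitting another black vertex before returning to $u_1$, which could a priori \emph{increase} the chance of crossing $C$, and removing this conditioning requires the monotonicity argument of Claims~\ref{clm:given} and~\ref{clm:mess}; (iii) to sum over starting vertices one needs the reciprocity law \Lemma{switch} (itself requiring the excursion decomposition and independence of Claims~\ref{clm:split_xi} and~\ref{clm:indep_xi}) to reverse the walks so that they originate at the merged cut-vertex $z$, after which ergodicity gives $\sum_{w\in W}\Pr[z\rightarrow w < W\setminus\{w\}]=1$ and the hypothesis $c_e \ge K$ yields $\cond_{u,W\setminus\{u\}}\ge K$ while $\cond_{z,W}\le d(z)=2\card{C}$. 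Foster's theorem plays no role. Without some argument of this kind your proposal is an outline of the paper's proof with its only hard step missing, so it cannot be accepted as a proof.

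Two smaller inaccuracies are worth fixing as well. \Algorithm{contractRW} is not ``\Algorithm{contract} with one line changed'': it has no splitting-off phase at all (white vertices are simply absorbed by contracting walks or left in place), its while loop counts \emph{black} vertices, and the start vertex is chosen with probability proportional to degree with the step discarded when $u_1 = u_2$ --- the last two points are exactly what make the conditioning issues in (i)--(ii) above unavoidable. Consequently, in the endgame the graph may still contain many white vertices when the loop terminates, so you cannot literally say ``fewer than $2^{R-1}$ nontrivial cuts remain''; what you need is that only the bipartition of the at most $R=\ceil{2\alpha}$ black super-vertices determines which black edge set is output, so the uniformly random $S$ induces $F$ with probability at least $2^{1-R}$, after which the calculation of \Claim{output} goes through unchanged.
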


\begin{algorithm}
\begin{alg}
    \item	\textbf{procedure} ContractRW($G$,\, $B$,\, $\alpha$)
    \item	\textbf{input:} A graph $G=(V,E)$, a set $B \subseteq E$,
            and an approximation factor $\alpha$
    \item	\textbf{output:} A cut-induced subset of $B$
	\item	While there are more than $\lceil 2 \alpha \rceil$ black vertices remaining
	\begin{alg}
        \item	Randomly pick a black vertex $u_1$ with probability proportional to its degree
        \item	Perform a random walk starting from $u_1$ and
                stopping when it hits a black vertex $u_2$ (possibly $u_1=u_2$)
        \item   If $u_1 \neq u_2$
        \begin{alg}
            \item	Contract all edges traversed by the random walk and remove any self loops
        \end{alg}
    \end{alg}
	\item	Pick a non-empty proper subset $S$ of $V$ uniformly at random
	        and output the black edges with exactly one endpoint in $S$
\end{alg}
\caption{An algorithm for finding a small cut-induced set by contracting random walks.}
\AlgorithmName{contractRW}
\end{algorithm}

The approach for proving \Theorem{contract_with_RW}
is again similar to Karger's analysis of the contraction algorithm
--- for any cut $C$, we can bound the probability that an edge in $C$ is contracted. 
Formally, our analysis is:

\begin{lemma}
\LemmaName{rw}
Consider an iteration of the while loop that begins with $r$ remaining black vertices.
Suppose that no edge in $C$ has been contracted so far.
Suppose that the random walk in this iteration has $u_1 \neq u_2$.
Then
$$
\Pr[ \text{\normalfont some edge in $C$ is contracted in this iteration} ]
    ~\leq~
    \frac{2 \card{C}}{K r}.
$$
\end{lemma}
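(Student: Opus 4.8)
The plan is to treat this as the random-walk analogue of \Claim{contract}: there a single uniformly random edge lands in $C$ with probability at most $\card{C}/(Kr/2)$, and here an entire random walk must play the role of that one edge. Write $m$ for the number of edge-copies in the current graph and $D$ for the sum of the degrees of the currently black vertices. Since the walk contracts exactly the edges it traverses (merging their endpoints), destroying an edge of $C$ forces the walk to touch both sides of the cut $C$, which in turn forces it to traverse some edge of $C$; so it suffices to bound $\Pr[\,\text{the walk traverses an edge of }C \mid u_1\neq u_2\,]$. I would then write this as $\Pr[\,\text{walk traverses an edge of }C\,]/\Pr[u_1\neq u_2]$ and bound the numerator and denominator separately, the key point being that $D$ appears in both and cancels.

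For the numerator I would bound ``traverses an edge of $C$'' by the expected number of crossings of $C$ (Markov's inequality), which is $\sum_{e\in C}$ of the expected number of traversals of the copy $e$. The walk used here --- pick $u_1$ with probability proportional to $\deg(u_1)$ among the black vertices, then walk until the first return to the set of black vertices --- is exactly one excursion of the stationary random walk on the current multigraph between consecutive visits to that set (the starting weights $\deg(\cdot)/D$ are precisely the stationary distribution conditioned on the black vertices). A standard excursion (renewal--reward) computation then shows that the expected number of traversals of any fixed copy $e$ during such an excursion equals $2/D$: the stationary walk traverses a given copy at long-run rate $1/m$, and visits the set of black vertices at rate $D/(2m)$. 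Summing over $C$, the numerator is at most $2\card{C}/D$.

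For the denominator I would use the escape-probability identity: a walk started at a black vertex $v$ reaches a different black vertex before returning to $v$ with probability exactly $c_{\mathrm{eff}}(v)/\deg(v)$, where $c_{\mathrm{eff}}(v)$ is the effective conductance between $v$ and the set of all other black vertices. Every black vertex $v$ is incident to some black edge $e=vx$, whose other endpoint $x$ is also black, so by Rayleigh monotonicity $c_{\mathrm{eff}}(v)\geq c_{vx}=c_e\geq K$, using that black edges have effective conductance at least $K$. Averaging over the choice of $u_1$, the $\deg(v)$ factors cancel against the sampling weights, giving
$$
\Pr[u_1\neq u_2] ~=~ \frac1D\sum_{v\text{ black}} c_{\mathrm{eff}}(v) ~\geq~ \frac1D\sum_{v\text{ black}} K ~=~ \frac{rK}{D}.
$$
Dividing the two bounds yields $\Pr[\,\text{walk traverses an edge of }C \mid u_1\neq u_2\,] \leq (2\card{C}/D)/(rK/D) = 2\card{C}/(Kr)$, as desired.

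The step I expect to be the main obstacle is exactly this conditioning on $u_1\neq u_2$: the naive expectation argument only controls the unconditional crossing probability, and a priori $\Pr[u_1\neq u_2]$ can be arbitrarily small, so one genuinely needs the matching lower bound $\Pr[u_1\neq u_2]\geq rK/D$ --- which is where the hypothesis $c_e\geq K$ on black edges is used in an essential way. One should also verify the routine points that, since no edge of $C$ has yet been contracted, $C$ remains a cut of the current graph, and that contraction preserves $c_e\geq K$ for the remaining black edges (because identifying vertices only increases effective conductances).
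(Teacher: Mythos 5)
Your proposal is correct, but it takes a genuinely different route from the paper's proof of \Lemma{rw}. The paper subdivides the edges of $C$ and merges the subdivision points into a single vertex $z$, computes the conditional law of $u_1$ given $u_1\neq u_2$ (it is proportional to $\cond_{w,W\setminus\{w\}}$, via the escape-probability formula), removes the conditioning by a monotonicity claim, and then applies a reciprocity law for hitting probabilities ($\cond_{s,\{t\}\cup U}\Pr[s\rightarrow t<U]=\cond_{t,\{s\}\cup U}\Pr[t\rightarrow s<U]$) to reverse all walks so that they emanate from $z$; the reversed hitting probabilities sum to $1$, and the bound follows from $\cond_{z,W}\leq \deg(z)=2\card{C}$ and $\cond_{u,W\setminus\{u\}}\geq K$. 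You instead bound the conditional probability by the ratio of two unconditional quantities: the numerator, the expected number of crossings of $C$ in one excursion, is handled by observing that starting $u_1$ from the degree-biased law on black vertices makes the excursion a piece of the stationary walk watched on $W$, so a renewal--reward (Kac-type) computation gives $2\card{C}/D$; the denominator $\Pr[u_1\neq u_2]\geq rK/D$ comes from the same escape-probability formula and the bound $\cond_{v,W\setminus\{v\}}\geq c_{vx}\geq K$ that the paper uses in its last step, which is where the hypothesis on black edges enters in both arguments. Your route avoids the subdivision trick, the conditional-distribution claim, the decoupling claim, and the reciprocity lemma entirely, and it explains conceptually why the algorithm samples $u_1$ proportionally to degree; the price is that the renewal--reward identity (expected traversals of a fixed edge copy per excursion equals $(1/m)/(D/2m)=2/D$) is quoted rather than proved, and a fully rigorous write-up should include that stationarity computation, as well as the two routine verifications you already flag (that $C$ survives intact when none of its edges has been traversed, and that contractions only increase the effective conductances of the surviving black edges).
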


To prove \Theorem{contract_with_RW}, one applies \Lemma{rw}
where $C$ is a cut which induces $F$ and satisfies $\card{C} \leq q(F) \leq \alpha K$.
The remainder of the proof follows by the same argument as \Claim{output}.

The key method in proving \Lemma{rw} is to understand the probability
that a random walk hits a certain set of vertices before hitting some other set of vertices.
To that end, let us introduce some notation.
For any two sets of vertices $X$ and $Y$, 
let $\cond_{X,Y}$ denote the effective conductance between $X$ and $Y$.
Equivalently, identify $X$ into a single vertex $x$, identify $Y$ into a single vertex $y$,
and let $\cond_{X,Y}$ be the effective conductance between $x$ and $y$.

Next, suppose that $s \in V$ and that $T$ and $U$ are disjoint subsets of $V$.
We use $s \rightarrow T < U$ to denote the event that 
a random walk starting at $s$ hits $T$ before it hits $U$.
If $T = \{t\}$, we use the shorthand $s \rightarrow t < U$,
and similarly if $U = \{u\}$.
In the case that $s \in T \cup U$, the term ``hits'' means
``hits after performing at least one step of the random walk''.

\begin{remark}
The event $s \rightarrow T < U$ can also be understood in another way.
Let $G'$ be the graph obtained by identifying all nodes in $T$ into a single node $t$,
and identifying all nodes in $U$ into a single node $u$.
Then $\Pr[s \rightarrow T < U]$ equals the probability that a random walk in $G'$
starting at $s$ hits $t$ before $u$.
\end{remark}

The main tool in the proof of \Lemma{rw} is the following reciprocity law.
It will allow us to consider random walks originating at the cut $C$ rather
than random walks that cross $C$.

\begin{lemma}
\LemmaName{switch}
Let $s, t \in V$ and $U \subseteq V$.
Assume that $s \neq t$ and $\{s,t\} \cap U = \emptyset$.
Then
\begin{equation}
\EquationName{recip}
        \cond_{s,\{t\} \cup U} \cdot \Pr[s\rightarrow t < U]
    ~=~ \cond_{t,\{s\} \cup U} \cdot \Pr[t \rightarrow s < U].
\end{equation}
\end{lemma}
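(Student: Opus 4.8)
The statement is a reciprocity law, and reciprocity laws of this type are classically proven by relating hitting probabilities to harmonic functions (equivalently, to voltages in the associated resistor network). First I would recall the standard probabilistic interpretation: for a random walk on the network, the probability $\Pr[s \rightarrow t < U]$ equals the voltage $v(s)$ in the network where $t$ is held at potential $1$ and every vertex of $U$ is held at potential $0$ (and $s$ is a free vertex obeying Kirchhoff's node law). Equivalently, after identifying $U$ to a single node $u$, it is the probability that the walk from $s$ reaches $t$ before $u$. The quantity $\cond_{s,\{t\}\cup U}$ is the effective conductance between $s$ and the identified node $\{t\}\cup U$.

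\textbf{Key steps, in order.} (1) Identify $U$ to a single vertex $u$; by the Remark preceding the lemma this does not change any of the hitting probabilities, and effective conductances between sets are by definition computed after such an identification, so WLOG $U=\{u\}$ is a single vertex and we must show $\cond_{s,\{t,u\}}\Pr[s\to t<u] = \cond_{t,\{s,u\}}\Pr[t\to s<u]$. (2) Express each side via escape probabilities: $\Pr[s\to t<u]$ is the probability the walk from $s$ hits $t$ before $u$, and a standard fact is that $\cond_{s,\{t,u\}}\cdot\Pr[s\to t<u]$ equals the current that flows into node $t$ when node $s$ is a unit current source and $u$ is the sink — more precisely, I would write both sides in terms of the Green's function / voltages of a single fixed network, namely the one where $u$ is grounded. (3) Let $G(\cdot,\cdot)$ denote the Green's function of the walk killed at $u$ (equivalently the inverse of the grounded Laplacian $L_{\{u\}}$). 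Then the hitting probability $\Pr[s\to t<u] = G(s,t)/G(t,t)$ by the last-exit decomposition, and the escape conductance $\cond_{s,\{t,u\}}$ can be related to $G(s,s)$ and $G(s,t)$ as well. The punchline is the \emph{symmetry of the Green's function}, $G(s,t)=G(t,s)$, which holds because $L_{\{u\}}$ is a symmetric matrix (the network is undirected). Plugging the formulas for both sides and cancelling yields the identity.

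\textbf{Concretely}, the cleanest route: with $u$ grounded, let $v_s$ be the voltage function with a unit current injected at $s$ (so $L_{\{u\}} v_s = e_s$, i.e. $v_s = G(\cdot,s)$). Then $\Pr[s\to t<u] = v_s(t)/v_s(s)$ (hitting $t$ before the ground, via the maximum principle / optional stopping), and $\cond_{s,\{t,u\}}$, the conductance from $s$ to the shorted pair $\{t,u\}$, works out to $1/(v_s(s) - v_s(t)^2/v_s(t)\text{-type correction})$ — rather than fight this, I would instead observe that the product $\cond_{s,\{t,u\}}\cdot\Pr[s\to t<u]$ has a clean meaning: it is the current reaching $t$ in the unit-voltage problem between $s$ and $\{t,u\}$, and by the reciprocity of the transfer-current matrix (which is exactly Green's-function symmetry $G(s,t)=G(t,s)$) this is symmetric in $s$ and $t$. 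So the whole lemma is the statement that the $(s,t)$ entry of a symmetric matrix equals its $(t,s)$ entry, dressed up in random-walk language.

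\textbf{The main obstacle} is bookkeeping the precise dictionary between the three languages — hitting probabilities, voltages with grounded $u$, and effective conductances between \emph{sets} — so that the factors of $G(s,s)$, $G(t,t)$ cancel correctly and no spurious normalization is left over. In particular one must be careful that $\cond_{s,\{t\}\cup U}$ uses the pair $\{t\}\cup U$ shorted together, not just $t$, so the relevant voltage problem shorts $t$ to the ground; getting that right is what makes both sides land on $G(s,t)=G(t,s)$ rather than on mismatched expressions. Once the dictionary is pinned down, the proof is a one-line appeal to symmetry of the grounded Laplacian's inverse.
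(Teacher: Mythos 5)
Your argument is correct, but it proves the lemma by a genuinely different route than the paper. You translate everything into electrical language: after identifying $U$ to a single grounded vertex, you observe that $\cond_{s,\{t\}\cup U}\cdot\Pr[s\rightarrow t<U]$ is exactly the current entering $t$ in the unit-voltage problem with $s$ at potential $1$ and $\{t\}\cup U$ at potential $0$, and that this quantity is symmetric under exchanging $s$ and $t$ by network reciprocity --- formally, applying the discrete Green identity $\sum_x w_2(Lw_1)=\sum_x w_1(Lw_2)$ (symmetry of the weighted Laplacian) to the two harmonic extensions $w_1$ (boundary values $1$ at $s$, $0$ on $\{t\}\cup U$) and $w_2$ ($1$ at $t$, $0$ on $\{s\}\cup U$) gives precisely that the current into $t$ under $w_1$ equals the current into $s$ under $w_2$, which is \Equation{recip}. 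The paper instead stays probabilistic: it cites the classical reversibility identity $d(s)\Pr[s\rightarrow t<\{s\}\cup U]=d(t)\Pr[t\rightarrow s<\{t\}\cup U]$ from Bollob\'as, then factors each side via \Claim{split_xi} and \Claim{indep_xi} (the walk hits $t$ before $\{s\}\cup U$ iff it escapes $s$ before returning \emph{and}, independently, hits $t$ before $U$) and converts the escape probability into a conductance via \Claim{split_xi_2}. Your route buys a more conceptual one-line core (symmetry of a quadratic form) and avoids the excursion-decomposition/independence argument, at the cost of the dictionary you yourself flag: one must justify that the current entering a vertex of the shorted boundary set equals the effective conductance times the hitting probability (which follows from the last-exit/Green's-function identity you mention), and one should be a little careful that the symmetry used is Green reciprocity between two \emph{different} grounded boundary-value problems ($\{t\}\cup U$ versus $\{s\}\cup U$), not literally symmetry of a single killed Green's function; in the weighted setting that symmetry also carries the conductance normalization $G(s,t)/c(t)=G(t,s)/c(s)$. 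The paper's route buys brevity and reuse: Claims \ref{clm:split_xi}--\ref{clm:split_xi_2} are needed elsewhere in the random-walk analysis anyway, so the lemma reduces to one textbook citation. Your digression computing $\cond_{s,\{t,u\}}$ from voltages before abandoning it is harmless since the final reciprocity argument does not rely on it.
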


To prove this lemma, we need to understand the relationship between the following events:
\begin{align*}
\cE   &~=~ s\rightarrow T < \{s\} \cup U \\
\cE'  &~=~ s\rightarrow T \cup U < s \\
\cE^* &~=~  s\rightarrow T < U.
\end{align*}
In English, $\cE$ is the event
that the random walk hits $T$ before hitting $U$ or returning to $s$,
$\cE'$ is the event that the random walk hits $T$ or $U$ before returning to $s$,
and $\cE^*$ is the same as $\cE$ except that
the random walk is permitted to return to $s$ before hitting $T$ or $U$.

\begin{claim}
\ClaimName{split_xi}
$\cE = \cE^* \wedge \cE'$.
\end{claim}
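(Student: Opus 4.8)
The plan is to prove this set identity by establishing the two inclusions separately, after translating each of the three events into a statement about first hitting times of the random walk $X_0, X_1, \dots$ started at $s$. Write $\tau_A := \inf\{\, n \ge 1 : X_n \in A \,\}$; since $s \notin T \cup U$, the ``at least one step'' convention is only felt by $\tau_{\{s\}}$ (the return time to $s$) and by $\tau_{\{s\}\cup U}$, and one checks directly that $\tau_{\{s\}\cup U} = \min(\tau_{\{s\}}, \tau_U)$. With this notation $\cE$ is the event $\{\tau_T < \tau_{\{s\}\cup U}\}$, $\cE^*$ is $\{\tau_T < \tau_U\}$, and $\cE'$ is $\{\tau_{T\cup U} < \tau_{\{s\}}\}$.

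First I would dispatch the forward inclusion $\cE \subseteq \cE^* \cap \cE'$, which is immediate from monotonicity of hitting times: on $\cE$ we have $\tau_T < \tau_{\{s\}\cup U} \le \tau_U$, which is $\cE^*$, and also $\tau_{T\cup U} \le \tau_T < \tau_{\{s\}\cup U} \le \tau_{\{s\}}$, which is $\cE'$.

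For the reverse inclusion $\cE^* \cap \cE' \subseteq \cE$, assume $\tau_T < \tau_U$ and $\tau_{T\cup U} < \tau_{\{s\}}$. From $\tau_T < \tau_U$ we get $\tau_{T\cup U} = \min(\tau_T,\tau_U) = \tau_T$, so the second hypothesis becomes $\tau_T < \tau_{\{s\}}$; combining it with $\tau_T < \tau_U$ yields $\tau_T < \min(\tau_{\{s\}}, \tau_U) = \tau_{\{s\}\cup U}$, which is exactly $\cE$. The same argument phrased sample-pathwise: consider the first instant the walk touches $T \cup U \cup \{s\}$ (with the one-step convention for the $\{s\}$ component); event $\cE'$ forces that first touch to be in $T \cup U$ rather than a return to $s$, and event $\cE^*$ then forces it to be in $T$ rather than $U$ — and ``first touch of $T \cup U \cup \{s\}$ is in $T$'' is precisely $\cE$.

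I do not anticipate a genuine obstacle here: the claim is a pointwise identity of events, so no probabilistic estimate is required. The only point needing care is to keep the ``return to $s$'' / ``at least one step'' convention uniform across $\cE$, $\cE'$ and $\cE^*$, so that $\tau_T$, $\tau_U$ and $\tau_{\{s\}}$ all refer to the same walk and the inequalities above are legitimate; once that is pinned down, both inclusions are a short case analysis of which of $T$, $U$, $\{s\}$ the walk meets first.
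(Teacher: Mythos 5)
Your argument is correct. Note that the paper does not actually supply a proof of this claim---it is stated as immediate from the definitions of $\cE$, $\cE'$ and $\cE^*$---so there is no authorial argument to compare against; your hitting-time formalization (writing the three events as $\set{\tau_T < \min(\tau_{\set{s}},\tau_U)}$, $\set{\tau_T < \tau_U}$ and $\set{\min(\tau_T,\tau_U) < \tau_{\set{s}}}$ and checking the two inclusions) is exactly the routine verification the authors are leaving implicit, and your handling of the uniform ``at least one step'' convention and of the disjointness of $T$ and $U$ (which rules out ties) is sound.
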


\begin{claim}
\ClaimName{indep_xi}
$\cE^*$ and $\cE'$ are independent.
\end{claim}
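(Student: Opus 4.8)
\textbf{Proof proposal for Claim~\ref{clm:indep_xi} (independence of $\cE^*$ and $\cE'$).}

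The plan is to exploit the strong Markov property of the random walk, together with the observation that $\cE'$ depends only on the behaviour of the walk \emph{before} its first return to $s$, while $\cE^*$ can be read off from the walk \emph{ignoring} the excursions away from and back to $s$. More precisely, I would decompose the trajectory of the walk started at $s$ into a sequence of \emph{excursions}: the $\ell^{\text{th}}$ excursion begins at the $(\ell-1)^{\text{th}}$ return to $s$ (with the $0^{\text{th}}$ return being time $0$) and ends at the $\ell^{\text{th}}$ return to $s$, unless the walk hits $T \cup U$ during that excursion, in which case the excursion (and the walk, for our purposes) terminates. By the strong Markov property, these excursions are i.i.d.: each one is a fresh copy of a walk started at $s$, run until it either returns to $s$ or hits $T \cup U$, whichever comes first.

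With this decomposition in hand, I would argue as follows. The event $\cE'$ — that the walk hits $T \cup U$ before returning to $s$ — is exactly the event that the \emph{first} excursion is one of the ``successful'' ones (terminating in $T\cup U$ rather than at $s$); so $\Pr[\cE']$ is the per-excursion success probability $p$. Now condition on the number $L$ of excursions performed before termination (so $L$ is geometric: $\Pr[L = \ell] = (1-p)^{\ell-1} p$), and condition further on the entire past of the first $L-1$ (unsuccessful) excursions. The event $\cE^*$ — that the walk hits $T$ before $U$ — is determined entirely by where the \emph{final}, successful excursion lands: it hits $T$ iff that excursion terminates in $T$. By the strong Markov property, conditioned on being a successful excursion, the final excursion's terminal set ($T$ versus $U$) is independent of $L$ and of all earlier excursions, and its distribution is the same as that of a single excursion conditioned to be successful. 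Hence $\Pr[\cE^* \mid \cE'] = \Pr[\text{excursion ends in } T \mid \text{excursion successful}]$, and — this is the key point — the same computation with $\cE'$ replaced by its complement, or unconditioned, gives that $\Pr[\cE^* \mid \lnot\cE']$ equals the \emph{same} quantity, because on $\lnot\cE'$ the walk still eventually (almost surely, assuming recurrence/connectivity as is implicit here) performs a successful excursion that decides $\cE^*$, and that excursion is again a fresh conditioned-successful excursion independent of everything before. Therefore $\Pr[\cE^* \mid \cE'] = \Pr[\cE^*]$, which is precisely independence.

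The main obstacle I anticipate is making the ``the final successful excursion is independent of how many unsuccessful excursions preceded it'' step fully rigorous: one must be careful that the stopping rule defining an excursion (return to $s$ \emph{or} hit $T\cup U$) is a genuine stopping time, so that the strong Markov property applies excursion-by-excursion, and that the i.i.d.\ excursion structure does not secretly correlate ``success'' with ``lands in $T$''. The cleanest way to handle this is to note that each excursion is sampled from a single fixed distribution $\nu$ on finite paths from $s$ (paths that touch $s$ only at their endpoints, except possibly the last path which touches $T\cup U$), and that the trajectory is obtained by concatenating i.i.d.\ $\nu$-samples until the first one that is ``successful''; then $\cE'$ = \{first sample successful\} and $\cE^*$ = \{first successful sample lands in $T$\}, and these are manifestly independent events of an i.i.d.\ sequence (the latter depends only on the identity of the first success, the former only on its index being $1$ — wait, one must instead phrase $\cE'$ as the index being $1$ and $\cE^*$ as the type of the success, which are independent since index and type of the first success in an i.i.d.\ sequence of (Bernoulli-success, conditional-type) pairs are independent). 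I would also record, en route, that $\Pr[\cE] = \Pr[\cE^*]\Pr[\cE']$ follows by combining this claim with Claim~\ref{clm:split_xi}, which is what the reciprocity proof will actually use.
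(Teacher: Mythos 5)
Your proof is correct and follows essentially the same route as the paper: both decompose the walk into excursions from $s$ and invoke the Markov property (the ``craps principle'') to identify $\Pr[\cE^*]$ with $\Pr[\cE^* \mid \cE']$, via the observation that the index of the first successful excursion is independent of its type. Your write-up merely makes the i.i.d.\ excursion structure and the stopping-time issue more explicit than the paper does.
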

\begin{proof}
The claim essentially follows from the ``craps principle''.
In more detail, consider any random walk starting $s$ and ending at $T \cup U$.
It can be viewed as a sequence $w_1, \ldots, w_k$ of random walks where
\begin{itemize}
\item for each $i<k$, $w_i$ starts and ends at $s$ but otherwise does not traverse $s$,
\item $w_k$ starts at $s$ and ends at $T \cup U$ but otherwise does not traverse $s$.
\end{itemize}
So $\Pr[\cE^*]$ is the probability that $w_k$ ends at $T$,
and by the Markov property, this equals
$$ \Pr[ s \rightarrow T < U \mid s \rightarrow T \cup U < s ]. $$
Thus we have argued that $ \Pr[\cE^*] = \Pr[\cE^* \mid \cE']$, as required.
\end{proof}

\begin{claim}
\ClaimName{split_xi_2}
$\Pr[\cE'] = \cond_{s,T \cup U}/d(s)$.
\end{claim}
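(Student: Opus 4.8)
The plan is to express $\Pr[\cE']$ as an escape probability and then invoke the classical identity relating escape probabilities to effective conductance. Recall that $\cE'$ is the event $s \rightarrow T \cup U < s$, i.e.\ the random walk started at $s$ reaches the set $T \cup U$ before it returns to $s$. First I would collapse $T \cup U$ into a single vertex, call it $z$, by identifying all its vertices; by the Markov property and the remark following \Lemma{switch}, this identification does not change the probability of the event, and the effective conductance $\cond_{s, T \cup U}$ is by definition the conductance between $s$ and $z$ in the contracted graph. So it suffices to prove the claim in the two-terminal setting: $\Pr[\text{walk from } s \text{ hits } z \text{ before returning to } s] = \cond_{s,z}/d(s)$.

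Next I would condition on the first step of the walk. From $s$, the walk moves to a neighbour $v$ with probability $u_{sv}/d(s)$ (writing $u_{sv}$ for the total weight of edges between $s$ and $v$; if $v = z$ this step already ends the walk successfully). Let $p_v = \Pr[\text{walk from } v \text{ hits } z \text{ before } s]$, with $p_s = 0$ and $p_z = 1$. Then
$$
\Pr[\cE'] ~=~ \sum_{v} \frac{u_{sv}}{d(s)} \, p_v ~=~ \frac{1}{d(s)} \sum_v u_{sv} p_v.
$$
The function $v \mapsto p_v$ is the harmonic function on $V \setminus \{s, z\}$ with boundary values $0$ at $s$ and $1$ at $z$; equivalently, it is the voltage when a unit potential difference is imposed between $z$ (at potential $1$) and $s$ (at potential $0$), viewing each edge $e$ as a conductor of conductance $u_e$ (this is the standard electrical interpretation, the same one underlying the definition of effective conductance in the paper). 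The sum $\sum_v u_{sv} p_v = \sum_v u_{sv}(p_v - p_s)$ is precisely the net current flowing out of $s$ under this imposed voltage, which by definition of effective conductance equals $\cond_{s,z}$ times the potential difference, i.e.\ $\cond_{s,z}$. Substituting gives $\Pr[\cE'] = \cond_{s,z}/d(s) = \cond_{s, T\cup U}/d(s)$, as required.

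The main obstacle, such as it is, is purely one of exposition: one must be careful that the event is the \emph{first-return-avoiding} escape probability (the walk must reach $T \cup U$ strictly before coming back to $s$), so that $p_s = 0$ is the correct boundary condition and the conditioning on the first step is valid — this is exactly the subtlety flagged by the convention that ``hits'' means ``hits after at least one step'' when $s \in T \cup U$. Everything else is the textbook correspondence between random walks and electrical networks (see, e.g., Doyle and Snell~\cite{DoyleSnell} or Lyons and Peres~\cite{LyonsPeres}), so the proof will be short.
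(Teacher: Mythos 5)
Your argument is correct: the paper itself proves this claim only by citing Doyle and Snell~\cite[\S 1.3.4]{DoyleSnell}, and what you have written out (collapse $T \cup U$ to a single vertex, first-step conditioning, and the harmonic/voltage interpretation identifying $\sum_v u_{sv}p_v$ with the current $\cond_{s,T\cup U}$) is exactly the standard proof of that cited escape-probability identity. So this is the same approach, just made self-contained; the only point worth flagging is that you correctly read the edge weights $u_e$ as conductances, which is the convention the paper's definition of $c_{st}$ implicitly uses.
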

\begin{proof}
Doyle and Snell~\cite[\S 1.3.4]{DoyleSnell}.
\end{proof}

\vspace{12pt}

\begin{proofof}{\Lemma{switch}}
It is known \cite[Theorem IX.22]{bollobas} that
$$
    d(s) \cdot \Pr[ s \rightarrow t < \{s\} \cup U ]
    ~=~ d(t) \cdot \Pr[ t \rightarrow s < \{t\} \cup U ].
$$
By \Claim{split_xi} and \Claim{indep_xi}, this is equivalent to
$$
    d(s) \cdot \Pr[ s \rightarrow \{t\} \cup U < s ] \cdot \Pr[ s \rightarrow t < U ]
    ~=~ 
    d(t) \cdot \Pr[ t \rightarrow \{s\} \cup U < t ] \cdot \Pr[ t \rightarrow s < U ].
$$
By \Claim{split_xi_2}, this is equivalent to \eqref{eq:recip}.
\end{proofof}

\vspace{12pt}

\begin{proofof}{\Lemma{rw}}
It is more convenient to consider hitting a vertex than a cut,
so we subdivide every edge in $C$ and merge the subdividing vertices into a new vertex $z$.
We consider the random walk in the modified graph induced by the random walk in the
original graph.
The former walk hits $z$ iff the latter walk intersects $C$. 

For the remainder of this proof, $W$ denotes the set of currently remaining black vertices.

\begin{claim}
\ClaimName{uw}
For any $w \in W$,
$$
    \Pr[u_1=w \mid u_1 \neq u_2]
        ~=~ \frac{ \cond_{w,W\setminus \set{w}} }{ \sum_{u \in W} \cond_{u,W\setminus \set{u}} }.
$$
\end{claim}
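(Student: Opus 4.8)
The plan is to evaluate the joint probability $\Pr[u_1 = w \wedge u_1 \neq u_2]$ for each black vertex $w \in W$ and then normalize by $\Pr[u_1 \neq u_2] = \sum_{w \in W} \Pr[u_1 = w \wedge u_1 \neq u_2]$.

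First I would note that, directly from the description of \Algorithm{contractRW}, $\Pr[u_1 = w] = d(w)\big/\sum_{u \in W} d(u)$, since $u_1$ is chosen among the remaining black vertices with probability proportional to its degree. Then I would condition on $u_1 = w$ and identify the stopping vertex $u_2$: by the stopping rule, $u_2$ is the first black vertex the walk reaches after taking at least one step, so $u_2 \neq w$ precisely when the walk visits some vertex of $W \setminus \set{w}$ before it returns to $w$. In the notation of this appendix this is exactly the event $w \rightarrow (W \setminus \set{w}) < w$. Applying \Claim{split_xi_2} with $s = w$, $T = W \setminus \set{w}$ and $U = \emptyset$ then gives $\Pr[u_1 \neq u_2 \mid u_1 = w] = \cond_{w,\, W \setminus \set{w}}\big/d(w)$.

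Multiplying the two probabilities, the factor $d(w)$ cancels, so $\Pr[u_1 = w \wedge u_1 \neq u_2] = \cond_{w,\, W\setminus\set{w}}\big/\sum_{u \in W} d(u)$. Summing over $w \in W$ gives $\Pr[u_1 \neq u_2] = \big(\sum_{w\in W}\cond_{w,\,W\setminus\set{w}}\big)\big/\sum_{u \in W} d(u)$, and dividing the two expressions yields the claimed identity. The only point requiring care — and the main (minor) obstacle — is matching the hitting-event formalism: I would make explicit that ``hits'' means ``hits after taking at least one step'', so that the event $\{u_1 \neq u_2\}$ conditioned on $u_1 = w$ is the nontrivial event $w \rightarrow (W \setminus \set{w}) < w$ rather than something degenerate, and so that \Claim{split_xi_2} applies with $U = \emptyset$ and with $d(w)$ the degree in the current (contracted) graph on which the walk is actually performed. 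Everything else is routine bookkeeping.
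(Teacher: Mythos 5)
Your proof is correct and follows essentially the same route as the paper: both decompose $\Pr[u_1=w \mid u_1\neq u_2]$ via the degree-proportional choice of $u_1$, identify the event $\{u_1\neq u_2\}$ given $u_1=w$ with the hitting event $w \rightarrow W\setminus\set{w} < w$, and apply \Claim{split_xi_2} so that the $d(w)$ factors cancel upon normalizing. The point you flag about ``hits'' meaning ``hits after at least one step'' is indeed the right convention and is exactly how the paper uses it.
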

\begin{subproof}
Let $D = \sum_{v \in W} d(v)$.
Then
\begin{align*}
\Pr[u_1=w \mid u_1 \neq u_2] 
&~=~ \frac{ \Pr[u_1=w ~\wedge~ u_1 \neq u_2] }
    { \sum_{u\in W} \Pr[u_1=u ~\wedge~ u_1 \neq u_2] } \\
&~=~ \frac{ \Pr[u_1=w ]
    \cdot \Pr[ u_1 \neq u_2 \mid u_1=w ] }
    { \sum_{u\in W} \Pr[u_1=u] \cdot \Pr[ u_1 \neq u_2 \mid u_1=u ] } \\
&~=~ \frac{ \big( d(w)/D \big)
    \cdot \Pr[ w \rightarrow W \setminus \{w\} < w ] }
    { \sum_{u\in W} \big( d(u)/D \big) \cdot \Pr[ u \rightarrow W \setminus \{u\} < u ] } \\
&~=~ \frac{ d(w)
    \cdot \big( \cond_{w, W \setminus \{w\}} / d(w) \big) }
    { \sum_{u\in W} d(u) \cdot \big( \cond_{u, W \setminus \{u\}} / d(u) \big) }
    \qquad\qquad\text{(by \Claim{split_xi_2})}
    \\
&~=~ \frac{ \cond_{w,W\setminus \{w\}} }
    { \sum_{u \in W} \cond_{u,W\setminus \{u\}} }.
\end{align*}
This proves the claim.
\end{subproof}

\begin{claim}
\ClaimName{given}
Let $\cA$ and $\cB$ be disjoint events. Let $\cC$ be another event.
$$
\Pr[ \cC \mid \cA ] \geq \Pr[ \cC \mid \cA \vee \cB ]
\quad\implies\quad
\Pr[ \cC \mid \cB ] \leq \Pr[ \cC \mid \cA \vee \cB ].
$$
\end{claim}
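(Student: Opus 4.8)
The plan is to observe that, because $\cA$ and $\cB$ are disjoint, the quantity $\Pr[\cC \mid \cA \vee \cB]$ is a convex combination of $\Pr[\cC \mid \cA]$ and $\Pr[\cC \mid \cB]$, hence lies weakly between the two; so if one of these two numbers is at least $\Pr[\cC \mid \cA \vee \cB]$, the other must be at most $\Pr[\cC \mid \cA \vee \cB]$. Throughout I would assume $\Pr[\cA] > 0$ and $\Pr[\cB] > 0$, since otherwise one of the conditional probabilities appearing in the statement is undefined and there is nothing to prove; this is the one degenerate point worth flagging at the outset.

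Concretely, I would set $p = \Pr[\cA]$, $q = \Pr[\cB]$, $a = \Pr[\cC \wedge \cA]$, $b = \Pr[\cC \wedge \cB]$. Disjointness of $\cA$ and $\cB$ gives $\Pr[\cA \vee \cB] = p + q$ and $\Pr[\cC \wedge (\cA \vee \cB)] = a + b$, whence
$$
\Pr[\cC \mid \cA \vee \cB] ~=~ \frac{a+b}{p+q} ~=~ \frac{p}{p+q} \cdot \frac{a}{p} \;+\; \frac{q}{p+q} \cdot \frac{b}{q},
$$
which exhibits $\Pr[\cC \mid \cA \vee \cB]$ as a weighted average of $\Pr[\cC \mid \cA] = a/p$ and $\Pr[\cC \mid \cB] = b/q$ with positive weights summing to $1$. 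Since a weighted average of two reals cannot strictly exceed both of them, the hypothesis $\Pr[\cC \mid \cA] \ge \Pr[\cC \mid \cA \vee \cB]$ forces $\Pr[\cC \mid \cB] \le \Pr[\cC \mid \cA \vee \cB]$, which is exactly the claim.

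If one prefers a bare-hands computation to the averaging argument, the same fact falls out directly: $\frac{a}{p} \ge \frac{a+b}{p+q}$ is equivalent (clearing the positive denominators) to $aq \ge bp$, and adding $bq$ to both sides and then dividing by $q(p+q) > 0$ yields $\frac{b}{q} \le \frac{a+b}{p+q}$. Either route is a one-line manipulation, so I do not expect a real obstacle here — the only thing requiring a sentence of care is the degenerate case $\Pr[\cA] = 0$ or $\Pr[\cB] = 0$ noted above.
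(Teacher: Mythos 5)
Your argument is correct: the paper states \Claim{given} without proof, treating it as evident, and your convex-combination observation (with the equivalent bare-hands computation $aq \geq bp$) is exactly the intended one-line justification. Your flagging of the degenerate case $\Pr[\cA]=0$ or $\Pr[\cB]=0$ is a reasonable extra care that the paper simply ignores; in the paper's application both $\cA$ and $\cB$ have positive probability, so nothing further is needed.
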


\begin{claim}
\ClaimName{mess}
For any $w \in W$,
$$
    \Pr[ w \rightarrow z < W \setminus \set{w} ~\mid~ w \rightarrow W \setminus \set{w} < w ]
    ~~\leq~~
    \Pr[ w \rightarrow z < W \setminus \set{w} ].
$$
\end{claim}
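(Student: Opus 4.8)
The plan is to derive \Claim{mess} from the excursion machinery already developed for \Lemma{switch}, by invoking Claims~\ref{clm:split_xi} and~\ref{clm:indep_xi} with the ``target'' set taken to be the set that the walk should \emph{avoid}, and then passing to complements at the end.

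Concretely, I would apply Claims~\ref{clm:split_xi} and~\ref{clm:indep_xi} with $s = w$, $T = W \setminus \set{w}$ and $U = \set{z}$; this is legitimate because $z \notin W$, so $T$ and $U$ are disjoint and $w \notin T \cup U$, and both are non-empty (recall $\abs{W} \geq 2$, and $C$ --- the cut being subdivided --- is non-empty). This produces the events
\begin{align*}
\cE   &~=~ w \rightarrow (W \setminus \set{w}) < \set{w} \cup \set{z}, \\
\cE'  &~=~ w \rightarrow (W \setminus \set{w}) \cup \set{z} < w, \\
\cE^* &~=~ w \rightarrow (W \setminus \set{w}) < z,
\end{align*}
satisfying $\cE = \cE^* \cap \cE'$, with $\cE^*$ and $\cE'$ independent. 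Write $\cF := (\,w \rightarrow (W \setminus \set{w}) < w\,)$ for the conditioning event appearing in \Claim{mess}.

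The first step is a purely set-theoretic observation about hitting times of the walk from $w$: it hits $W \setminus \set{w}$ before both $w$ and $z$ exactly when it hits $W \setminus \set{w}$ before $z$ and hits $W \setminus \set{w}$ before $w$; that is, $\cE = \cE^* \cap \cF$, and moreover $\cE \subseteq \cF \subseteq \cE'$. The second step is the short computation
$$
\Pr[\,\cE^* \mid \cF\,]
  ~=~ \frac{\Pr[\cE^* \cap \cF]}{\Pr[\cF]}
  ~=~ \frac{\Pr[\cE]}{\Pr[\cF]}
  ~=~ \frac{\Pr[\cE^*]\cdot\Pr[\cE']}{\Pr[\cF]}
  ~\geq~ \Pr[\cE^*],
$$
where the third equality is independence of $\cE^*$ and $\cE'$, and the inequality uses $\Pr[\cE'] \geq \Pr[\cF]$ (which holds since $\cF \subseteq \cE'$). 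Finally, since a random walk from $w$ in a finite connected graph almost surely reaches $\set{z} \cup (W \setminus \set{w})$, the events $\cE^*$ and $(\,w \rightarrow z < W \setminus \set{w}\,)$ partition the sample space, and they continue to do so after conditioning on $\cF$; hence
$$
\Pr[\,w \rightarrow z < W \setminus \set{w} \;\mid\; \cF\,]
  ~=~ 1 - \Pr[\,\cE^* \mid \cF\,]
  ~\leq~ 1 - \Pr[\cE^*]
  ~=~ \Pr[\,w \rightarrow z < W \setminus \set{w}\,],
$$
which is exactly \Claim{mess}.

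I expect the only genuinely delicate point to be the choice of substitution into Claims~\ref{clm:split_xi} and~\ref{clm:indep_xi}. The ``obvious'' choice $T = \set{z}$, $U = W \setminus \set{w}$ makes $\cE^*$ literally equal to the event we want to bound, but then the conditioning event $\cF$ is only a proper subevent of the corresponding $\cE'$ and is \emph{not} sandwiched below the corresponding $\cE$; one ends up with an upper bound on $\Pr[\,\cE^* \mid \cF\,]$ of the form $\Pr[\cE^*]\cdot\Pr[\cE']/\Pr[\cF]$, which need not lie below $\Pr[\cE^*]$. Taking instead $T = W \setminus \set{w}$ and complementing at the end makes $\cF$ nest cleanly as $\cE \subseteq \cF \subseteq \cE'$, and the ``slack'' $\Pr[\cE'] \geq \Pr[\cF]$ is exactly what forces the inequality. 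The remaining ingredients --- that the events above are well-defined and that the walk reaches its target almost surely --- are routine on a finite connected graph.
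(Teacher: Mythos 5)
Your proof is correct, and while it runs on the same engine as the paper's argument --- the excursion independence of \Claim{indep_xi} --- the bookkeeping is genuinely different. The paper works directly with the target event $\cC = (w\rightarrow z< W\setminus\set{w})$: it takes $\cB$ equal to your $\cF$, sets $\cA = (w\rightarrow z<w)\setminus\cB$, observes $\Pr[\cC\mid\cA]=1$, and invokes the auxiliary \Claim{given} to conclude $\Pr[\cC\mid\cB]\le\Pr[\cC\mid\cA\vee\cB]$, where $\cA\vee\cB$ is precisely the event $w\rightarrow (W\setminus\set{w})\cup\set{z}<w$; applying \Claim{indep_xi} with $T=\set{z}$, $U=W\setminus\set{w}$ then identifies that conditional probability with $\Pr[\cC]$. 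You instead apply \Claim{split_xi} and \Claim{indep_xi} with the two sets in swapped roles ($T=W\setminus\set{w}$, $U=\set{z}$), use the nesting $\cE\subseteq\cF\subseteq\cE'$ together with $\cE=\cE^*\cap\cF$ to get $\Pr[\cE^*\mid\cF]=\Pr[\cE^*]\Pr[\cE']/\Pr[\cF]\ge\Pr[\cE^*]$, and pass to complements. What your route buys is self-containedness: you bypass \Claim{given} entirely (which the paper states without proof), at the modest cost of the final complementation step, which needs the walk to reach $\set{z}\cup(W\setminus\set{w})$ almost surely and $\Pr[\cF]>0$ --- both implicit in the paper's setting as well, since the contracted graph remains connected and the conditioning event comes from $u_1\neq u_2$. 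Your closing remark about why the ``obvious'' substitution $T=\set{z}$, $U=W\setminus\set{w}$ does not yield the inequality by direct computation is also accurate; that is exactly the difficulty the paper's \Claim{given} is designed to absorb.
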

\begin{subproof}
Define
\begin{align*}
\cB &~=~ w \rightarrow W \setminus \set{w} < w \\
\cA &~=~ (w \rightarrow z < w) \setminus \cB \\
\cC &~=~ w \rightarrow z < W \setminus \set{w}.
\end{align*}
Since $\Pr[ \cC \mid \cA ] = 1$, the hypotheses of Claim~\ref{clm:given} are satisfied,
and therefore 
$$
\Pr[ w \rightarrow z < W \setminus \set{w} \mid w \rightarrow W \setminus \set{w} < w ]
~\leq~ 
\Pr[ w \rightarrow z < W \setminus \set{w} \mid w \rightarrow (W \setminus \set{w}) \union \set{z} < w ].
$$
By \Claim{indep_xi}, the latter quantity equals
$ \Pr[ w \rightarrow z < W \setminus \set{w} ]$.
\end{subproof}

\vspace{12pt}

Now, we analyze the probability that the random walk hits the cut.
We condition on the event $u_1 \neq u_2$, since that is the only case when
the algorithm contracts edges.
\begin{eqnarray*}
&&\Pr[\text{random walk hits } z \mid u_1 \neq u_2] \\
&=& 
\sum_{w\in W} \Pr[u_1=w \mid u_1 \neq u_2]
    \cdot \Pr[ \text{ random walk hits } z ~\mid~ u_1=w ~\wedge~ u_1 \neq u_2] \\
&=& 
\sum_{w\in W} \Pr[u_1=w \mid u_1 \neq u_2]
    \cdot \Pr[\: w \rightarrow z < W \setminus \{w\} ~\mid~ w \rightarrow W \setminus \set{w} < w \:] \\
&=& 
\sum_{w\in W} \frac{ \cond_{w,W \setminus \set{w}} }
                   { \sum_{u \in W} \cond_{u,W \setminus \set{u}} }
    \cdot \Pr[\: w \rightarrow z < W \setminus \set{w} ~\mid~ w \rightarrow W \setminus \set{w} < w \:]
    \qquad\text{(by \Claim{uw})} \\
&\leq& 
\sum_{w\in W} \frac{ \cond_{w,W\setminus \set{w}} }
                   { \sum_{u \in W} \cond_{u,W \setminus \set{u}} }
    \cdot \Pr[ w \rightarrow z < W \setminus \set{w} ]
    \qquad\text{(by \Claim{mess})} \\
&\leq& 
\sum_{w\in W} \frac{ \cond_{w,\set{z} \union (W \setminus \set{w})} }
                   { \sum_{u \in W} \cond_{u,W\setminus \set{u}} }
    \cdot \Pr[ w \rightarrow z < W \setminus \set{w} ] \\
&=& \sum_{w\in W} 
\frac{ \cond_{z,W} }
{ \sum_{u\in W} \cond_{u,W\setminus \set{u}} }
\Pr[ z \rightarrow w < W \setminus \set{w} ]
    \qquad\text{(by Lemma~\ref{lem:switch})}\\
&=&  
\frac{ \cond_{z,W} }
{ \sum_{u\in W} \cond_{u,W\setminus \set{u}} }
\underbrace{\sum_{w\in W} \Pr[ z \rightarrow w < W \setminus \set{w} ]}_{
    \text{($=1$, because the walk is ergodic)} } \\
&=&  
\frac{ \cond_{z,W} }
{ \sum_{u\in W} \cond_{u,W\setminus \set{u}} } \\
&\le& \frac{2|C|}{K |W|}.
\end{eqnarray*}
The last inequality is because the node $z$ has degree $2 |C|$,
and every node $u \in W$ is connected to some node $v \in W \setminus \set{u}$
by a black edge, and $\cond_{u,W \setminus \set{u}} \geq \cond_{u,v} \geq K$.
\end{proofof}

\section{Algorithms for constructing sparsifiers}
\AppendixName{SketchAlgorithms}

In this section,
we sketch the algorithms as stated in Theorems \ref{thm:alg1}, \ref{thm:alg2} and \ref{thm:alg3}.
They are simple modifications of the algorithms of Bencz\'ur and Karger \cite{BK}.
The main difference is that Bencz\'ur and Karger's algorithms compute approximate edge strengths
whereas our modifications compute approximate edge connectivities.
The sparsifiers we obtain have slightly larger size but the algorithms are simpler and more efficient
because approximating the edge connectivities is quite a bit easier
than approximate the edge strengths.
Our algorithms can be easily implemented, and furthermore they can be used as a preprocessing step
for computing smaller sparsifiers.
The proofs of correctness of these algorithms are almost exactly the
same as the proofs in \cite{BK}.

\subsection{Finding $O(n\log^3 n)$-size sparsifiers for graphs with polynomial weights}

We now present an algorithm that computes a sparsifier of size $O(n\log^3 n)$.
It runs in $O(m)$ time for unweighted graphs and 
$O(m+n\log^5n)$ time for graphs with polynomial weights.
Recall that, in \Theorem{mainthm},
it is sufficient to find $\kappa_e$ 
that is a lower bound of the edge connectivity $k_e$.

The main tool that we use is the \emph{$k$-certificate}
introduced by Nagamochi and Ibaraki \cite{NIForest1}.
Given a multigraph $G=(V,E)$, they partition $E$ into a set of forests
$F_1,F_2,\ldots$ in the follwing way.
Let $F_1$ be a maximal forest of $G$
and for $i>1$, let $F_i$ be a maximal forest of the subgraph $G-\cup_{j<i} \, F_j$.
Each $F_i$ is called a \emph{NI-forest}.
Nagamochi and Ibaraki showed that for any integer $k\ge 0$,
$H_k=\cup_{j\le k} F_j$,
the union of the first $k$ NI-forests, preserves all cuts of $G$
that have size at most $k$.
Thus $H_k$ contains all $(k+1)$-light edges of $G$
(an edge $e$ is $k$-heavy if $k_e\ge k$ and $k$-light otherwise).
$H_k$ is called a $k$-certificate.
Clearly, it contains at most $k(n-1)$ edges.

Nagamochi and Ibaraki \cite{NIForest1} presented an algorithm for labeling 
every edge $e$ with a label $r_e$,
such that if an edge $e$ has multiplicity $u_e>1$, the $u_e$ copies of $e$
are contained in the $u_e$ NI-forests $F_{r_e-u_e+1},\ldots,F_{r_e}$.
For simple graphs, it runs in $O(m)$ time.
For multigraphs, a slightly modified version \cite{NIForest2} of this algorithm runs in
$O(m+n\log n)$ time.

Let $e=st$ be an edge.
Note that if $e$ appears in $F_i$, 
then $s$ and $t$ must be connected in $F_j$ for every $j<i$,
for otherwise $e$ can be added to $F_j$,
which contradicts the maximality of $F_j$.
Therefore, $r_e$ is a lower bound of $k_e$
and we can set $\kappa_e=r_e$.
Suppose we sample with probability $p_e=1/\kappa_e$ as described in \Theorem{mainthm}.
Then with high probability, this will produce a sparsifier that
preserves every cut to within a $1\pm \epsilon$ factor.
Since we assume the weights are polynomially bounded,
the expected number of edges per round is 
$$
    \sum_{e\in E}u_e/r_e
    \le \sum_{j=1}^{n^2 \poly(n)}\sum_{e\in F_j}1/j
    \le (n-1)\sum_{j=1}^{n^2 \poly(n)}1/j
    = O(n\log n).
$$
Therefore with high probability,
the sparsifier contains $O(n\log^3n/\epsilon^2)$ edges.

Using the Nagamochi-Ibaraki Certificate algorithm,
all $\kappa_e$ can be found in $O(m+n\log n)$ time.
For each edge $e$, we can decide in expected constant time 
whether at least one copy of $e$ is included in the sparsifier.
For an edge that has at least one copy in the sparsifier,
we can find the sum of weights of all copies of it 
by sampling from the distribution Binomial($u_e \rho$, $p_e$)
instead of sampling $u_e$ Bernoulli random variables per round.
This sampling is easy to do in
$O(u_e \rho p_e)=O(u_e\log^2 n/k_e\epsilon^2)=O(\log^2 n/\epsilon^2)$ time,
for each edge included in the sparsifier.
We suspect that this can be improved using the technique of Knuth and Yao
\cite{KnuthYao} \cite[Ch.~15]{Devroye},
but leave the details to future work.
Therefore, the total running time is $O(m+n\log^5n/\epsilon^4)$ for graphs with polynomial weights.

For unweighted  graphs, we can reduce this to $O(m)$ time.
Note that for unweighted graphs, $u_e=1$ for all $e\in E$.
If an edge $e$ has $\rho/k_e>1$,
instead of performing $\rho$ rounds of sampling,
we can include $e$ with probability $1$ and assign it a weight of $1$.
Thus we can sample the weight of an edge in expected constant time.
This change can only decrease the expected size of the sparsifier
and the sampled weight of every cut can only be more concentrated around its mean.

We remark that recent work of Hariharan and Panigrahi \cite{HP}
analyzes the same algorithm and shows that 
actually setting $\rho = O(\log(n)/\epsilon^2)$ is sufficient,
whereas we set $\rho = O(\log^2(n)/\epsilon^2)$.
Therefore the size of their sparsfier is only $O(n\log^2(n)/\epsilon^2)$.

\subsection{Finding $O(n\log^2 n)$-size sparsifiers for graphs with polynomial weights}

In this section, we describe another algorithm for computing $\kappa_e$
which has the advantage that the computed $\kappa_e$'s
satisfy $\sum_{e\in E}u_e\kappa_e=O(n)$. 
By the last statement of \Theorem{mainthm},
the size of the sparsifier would be $O(n\log^2 n)$.
The algorithm, given in \Algorithm{connest},
is a slight variation of the Estimation algorithm in \cite{BK}.

\begin{algorithm}
\begin{alg}
	\item	\textbf{procedure} ConnectivityEstimation($H$,$k$)
	\item	\textbf{input:} subgraph $H$ of $G$
	\item	$E'\leftarrow$ Partition($H$,$2k$)
	\item	for each $e\in E'$
        \begin{alg}
        \item	$\kappa_e\leftarrow k$
        \end{alg}
	\item	for each nontrivial connected component $H' \subset H-E'$
        \begin{alg}
    	\item	ConnectivityEstimation($H'$,$2k$)
        \end{alg}
\end{alg}
\caption{Algorithm for estimating edge connectivities.}
\AlgorithmName{connest}
\end{algorithm}

The algorithm is based on finding $k$-partitions.
A \emph{$k$-partition} of a graph $G=(V,E)$
is a set $E'\subseteq E$ of edges that includes all $(k+1)$-light edges
such that $|E'|\le 2 k(r-1)$ if $G-E'$ has $r$ components.
A $k$-partition is a ``sparser'' version of $k$-certificate
as a $k$-certificate can have $k(n-1)$ edges.

\begin{lemma}[Bencz\'ur and Karger \cite{BK}]
\label{alg:partition}
There is an algorithm Partition that outputs 
a $k$-partition in $O(m)$ time for unweighted graphs
and $O(m\log n)$ time for graphs with arbitrary weights.
\end{lemma}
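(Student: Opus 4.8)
The plan is to exhibit a $k$-partition explicitly as the set of edges crossing between the maximal highly-connected pieces of $G$, and then to invoke Nagamochi--Ibaraki sparse certificates to compute it within the stated time. Let $W_1,\dots,W_r$ be the maximal vertex sets inducing $(2k+1)$-edge-connected subgraphs of $G$; by the usual convention singletons qualify, so the $W_i$ partition $V$. Let $E'$ consist of all edges of $G$ whose endpoints lie in two distinct $W_i$. Two of the three defining properties are immediate. First, every $(k+1)$-light edge $e=st$ lies in $E'$: since edge connectivity inside an induced subgraph never exceeds edge connectivity in $G$, if $s$ and $t$ shared a $W_i$ we would get $k_{st}\ge 2k+1$, contradicting $k_{st}\le k$; hence $s$ and $t$ lie in different pieces. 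Second, $G-E'$ has exactly $r$ components, namely $G[W_1],\dots,G[W_r]$, each of which is connected.

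It remains to bound $\card{E'}\le 2k(r-1)$. Contract each $W_i$ to a single vertex to obtain a multigraph $\tilde G$ with $r$ vertices and exactly $\card{E'}$ edges. The key observation is that $\tilde G$ has no $(2k+1)$-edge-connected induced subgraph on two or more vertices: expanding a vertex of a $(2k+1)$-edge-connected graph into a $(2k+1)$-edge-connected blob (reattaching the former incident edges arbitrarily inside the blob) again yields a $(2k+1)$-edge-connected graph, so any such subgraph of $\tilde G$ would uncontract to a $(2k+1)$-edge-connected vertex set of $G$ meeting several $W_i$, contradicting maximality. Given this, I would induct on $r$. For $r\ge 2$ the graph $\tilde G$ itself is not $(2k+1)$-edge-connected, so it has a cut of size at most $2k$ splitting its vertices into sets of sizes $r_1$ and $r_2=r-r_1$; both induced sides inherit the no-dense-subgraph property, so by induction $\card{E'}\le 2k(r_1-1)+2k(r_2-1)+2k=2k(r-1)$.

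For the algorithm, computing the $(2k+1)$-edge-connected components of $G$ from scratch is too slow, so I would first replace $G$ by a $(2k+1)$-connectivity certificate $H$ --- the union of the first $2k+1$ Nagamochi--Ibaraki forests --- which a single scan-first-search computes in $O(m)$ time for simple graphs and $O(m\log n)$ time for weighted graphs, and which has at most $(2k+1)(n-1)$ edges. Because $H$ preserves the value of every cut of size at most $2k$ and retains at least $2k+1$ edges across every larger cut, it has exactly the same family $W_1,\dots,W_r$ of $(2k+1)$-edge-connected components as $G$. I would then recover that family from the sparse graph $H$ using the Nagamochi--Ibaraki machinery: repeatedly locate a cut of value at most $2k$ in the current piece via a maximum-adjacency ordering and recurse on the two sides, stopping at pieces that are $(2k+1)$-edge-connected, and finally output all edges of $G$ running between distinct pieces. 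The running time is dominated by constructing $H$, giving the claimed $O(m)$ and $O(m\log n)$ bounds.

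The step I expect to be the main obstacle is this last decomposition: one needs the component family of the sparse certificate $H$ (or, equally good, any valid $k$-partition of $H$) in time near-linear in $\card{E(H)}$ overall, rather than paying separately for a sequence of explicit minimum-cut computations. This is precisely where the forest and adjacency-ordering structure of Nagamochi--Ibaraki does the work, and it is also what forces the extra logarithmic factor in the weighted (multigraph) case, where the scan must be driven by a priority queue over edge capacities instead of a plain queue.
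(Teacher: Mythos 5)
The paper itself offers no proof of this lemma---it is quoted from Bencz\'ur--Karger---so the benchmark is their \emph{Partition} procedure. Your combinatorial half is correct and nicely self-contained: the maximal $(2k+1)$-edge-connected induced subgraphs $W_1,\ldots,W_r$ do partition $V$, every $(k+1)$-light edge joins two distinct pieces, and your contract-the-pieces-and-induct argument does give $|E'|\le 2k(r-1)$, so such a $k$-partition exists. The lemma, however, is a running-time statement, and the algorithmic half has a genuine gap. The assertion that the Nagamochi--Ibaraki certificate $H$ (union of the first $2k+1$ forests) ``has exactly the same family of $(2k+1)$-edge-connected components as $G$'' does not follow from the properties you invoke: the certificate guarantees concern \emph{global} cuts of $G$ (every cut of value at most $2k+1$ is retained in full, and pairwise connectivities are preserved up to the threshold), whereas membership of two vertices in a common $(2k+1)$-edge-connected induced subgraph is a statement about cuts of $G[W]$, which are not global cuts; the forests may meet their connectivity obligations for vertices of $W$ by paths that leave $W$, and nothing you cite prevents $H[W]$ from losing internal connectivity. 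This preservation property is not a standard fact about certificates and would need a real proof; and if the component family of $H$ is strictly finer than that of $G$, your size bound collapses, since the $2k(r-1)$ count relied on maximality of the pieces in $G$, not in $H$.

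Second, even granting that claim, you give no running-time analysis for extracting the components of the sparse graph $H$. A single maximum-adjacency ordering does not ``locate a cut of value at most $2k$''; it certifies a minimum cut only for one pendant pair, so testing whether a piece is $(2k+1)$-edge-connected and exhibiting a violating cut is a global minimum-cut computation, and the proposed ``split along the cut and recurse on both sides'' can have recursion depth $\Theta(n)$ with nearly all of $H$ present at every level---far from $O(m)$ or $O(m\log n)$ in total. You flag this yourself as the main obstacle, and it is precisely where the cited Bencz\'ur--Karger algorithm takes a different route: it never computes maximal $(2k+1)$-edge-connected subgraphs at all. Roughly, it interleaves certificate computations with contractions, contracting the edges that fall outside the first $2k$ forests of the \emph{current} graph; correctness rests on the observation that every cut of value at most $2k$ lies entirely inside each certificate and therefore survives every contraction, so no $(k+1)$-light edge is ever contracted, and once all surviving edges fit in the first $2k$ forests they are output, giving at most $2k(r-1)$ edges where $r$ is the number of supervertices, i.e.\ of components of $G-E'$. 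To turn your proposal into a proof you would have to either establish the preservation claim rigorously and supply a genuinely near-linear decomposition routine, or switch to a contract-and-recertify argument of this type together with an explicit accounting of the total work.
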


We compute the $\kappa_e$'s by using the ConnectivityEstimation procedure below.
It is almost the same as the Estimation procedure in \cite{BK},
which is for finding lower bounds of edge strengths.
The only difference is that they call a WeakEdges procedure
to find the $k$-weak edges
instead of calling Partition to find $k$-light edges.

\begin{lemma}
After a call to ConnectivityEstimation($G$,$1$),
all the labels $\kappa_e$ satisfy $\kappa_e\le k_e$.
\end{lemma}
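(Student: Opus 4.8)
The plan is to argue by induction on the depth of the recursion, maintaining the invariant that the parameter $k$ passed to any call is a lower bound, in $G$, on the edge connectivity of every edge present in that call's graph. Two ingredients do all the work: the monotonicity of edge connectivity under deletions, and the defining property of a $k$-partition, namely that it contains every $(k+1)$-light edge.

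First I would record that every subgraph $H$ on which \texttt{ConnectivityEstimation} is invoked is obtained from $G$ by deleting edges (the sets $E'$ removed along the way) and deleting vertices (passing to a single connected component of $H - E'$). Writing $k^H_e$ for the edge connectivity of an edge $e = st$ measured inside $H$, it follows that $k^H_e \le k_e$: any $s$--$t$ flow supported on $H$ is also an $s$--$t$ flow in $G$, so the maximum $s$--$t$ flow cannot increase when we pass from $G$ to $H$, and by the max-flow min-cut theorem neither can the minimum $s$--$t$ cut.

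Next comes the main invariant: whenever \texttt{ConnectivityEstimation}$(H,k)$ is called --- starting from the root call $(G,1)$ --- every edge of $H$ satisfies $k_e \ge k$. The root case holds because, for $e = st$, the cut $\delta(\{s\})$ contains all $u_e \ge 1$ copies of $e$, so $k_e \ge 1$. For the inductive step, let $E'$ be the set returned by \texttt{Partition}$(H, 2k)$; since $E'$ is a $2k$-partition of $H$, it contains every $(2k+1)$-light edge of $H$, so every edge $e$ that survives into a nontrivial component $H'$ of $H - E'$ is $(2k+1)$-heavy in $H$, giving $k_e \ge k^H_e \ge 2k+1 \ge 2k$ --- and $2k$ is exactly the parameter of the recursive call on $H'$. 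A label $\kappa_e$ is assigned only when $e$ lies in the set $E'$ of some call $(H,k)$, in which case $\kappa_e = k$ and $e \in E(H)$, so the invariant gives $\kappa_e = k \le k_e$. (Every edge eventually receives a label, and the recursion terminates, because $k$ doubles from level to level and \texttt{Partition} returns all of $H$'s edges once $2k$ exceeds the largest edge connectivity.)

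The step I expect to require the most care is the deceptively simple inductive step. One might worry that \texttt{Partition}$(H,2k)$ places an edge of very large connectivity into $E'$, so that the label $\kappa_e = k$ badly underestimates $k_e$; this does no harm, because the invariant --- and therefore the conclusion $\kappa_e \le k_e$ --- is asserted for \emph{all} edges of $H$, not merely those surviving the partition. What must be done carefully is to combine the $k$-partition definition (which only guarantees that $E'$ \emph{contains}, rather than equals, the light edges) with the monotonicity fact so as to pin down the right inequality $k^H_e \ge 2k+1$ for the edges that survive into $H'$.
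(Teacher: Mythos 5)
Your proof is correct, and it is essentially the argument the paper intends: the paper simply defers to Bencz\'ur and Karger's Corollary~4.8, whose proof is exactly this induction on the recursion with the invariant that every edge passed to a call with parameter $k$ has connectivity at least $k$, using monotonicity of $k_e$ under taking subgraphs and the fact that a $2k$-partition contains all $(2k+1)$-light edges. Your spelled-out version (adapted from edge strengths to edge connectivities) matches that argument, so there is nothing to add.
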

\begin{proof}
The proof is the same as the proof of Corollary 4.8 in \cite{BK}.
\end{proof}

\begin{lemma}
\label{lem:sumKappa}
The values $\kappa_e$ output by ConnectivityEstimation satisfy $\sum 1/\kappa_e = O(n)$.
\end{lemma}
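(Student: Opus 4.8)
The plan is to analyze the tree of recursive calls made by \Algorithm{connest} and to charge the cost $\abs{E'}/k$ incurred by each call to the number of connected components the call creates. Let $\cT$ be the recursion tree of the top-level call ConnectivityEstimation($G$, $1$): a node $v \in \cT$ corresponds to a call ConnectivityEstimation($H_v$, $k_v$), which sets $E'_v$ to be the output of Partition($H_v$, $2k_v$), assigns $\kappa_e = k_v$ to every $e \in E'_v$, and then recurses on each nontrivial connected component of $H_v - E'_v$; these components are the children of $v$ in $\cT$.

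First I would record two structural facts. (a) Every (copy of an) edge of $G$ receives its label $\kappa_e$ at exactly one node of $\cT$: an edge stays in the recursion only as long as both its endpoints lie in a common component, and it leaves precisely when it enters some $E'_v$, which must eventually happen (at a leaf $v$ we have $E'_v = E(H_v)$, since a leaf has no nontrivial component left). Hence $\sum 1/\kappa_e = \sum_{v \in \cT} \abs{E'_v}/k_v$. (b) Since $E'_v$ is a $2k_v$-partition of $H_v$, the definition of a $j$-partition (with $j = 2k_v$) gives $\abs{E'_v} \le 2(2k_v)(c_v - 1) = 4 k_v(c_v - 1)$, where $c_v$ denotes the number of connected components of $H_v - E'_v$. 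Therefore $\abs{E'_v}/k_v \le 4(c_v - 1)$, and it remains only to show $\sum_{v \in \cT}(c_v - 1) \le n - 1$.

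The heart of the argument is the claim that for every $v \in \cT$ the subtree rooted at $v$ satisfies $\sum_{u}(c_u - 1) \le \card{V(H_v)} - 1$, which I would prove by structural induction on $\cT$. In the inductive step, the components of $H_v - E'_v$ partition $V(H_v)$; if the nontrivial ones (the children of $v$) have vertex-counts $s_1, \dots, s_p$ and there are $c_v - p$ singleton components, then $\sum_{j} s_j = \card{V(H_v)} - (c_v - p)$. Applying the inductive hypothesis to each child,
$$
\sum_{u}(c_u - 1) ~\le~ (c_v - 1) + \sum_{j=1}^{p}(s_j - 1)
~=~ (c_v - 1) + \bigl(\card{V(H_v)} - c_v + p\bigr) - p ~=~ \card{V(H_v)} - 1.
$$
Taking $v$ to be the root yields $\sum_{v \in \cT}(c_v - 1) \le n - 1$, and hence $\sum 1/\kappa_e \le 4(n-1) = O(n)$.

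I expect the only care needed to be in routine bookkeeping rather than in any substantive step. For the structural induction to be well-founded I should first note that the recursion terminates: $k_v$ doubles along every root-to-leaf path while all $k_e = \poly(n)$, so after $O(\log n)$ levels every surviving edge of $H_v$ becomes $(2k_v+1)$-light, forcing $E'_v = E(H_v)$ and making $v$ a leaf. I should also check that the degenerate ``stuttering'' calls --- where $E'_v = \emptyset$ and $H_v$ is a single connected graph passed unchanged to one child with parameter $2k_v$ --- cause no trouble: there $c_v - 1 = 0$ and $\card{V(H_v)}$ is unchanged, so both the cost bound in (b) and the induction go through verbatim.
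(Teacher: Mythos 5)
Your proof is correct, and it is essentially the argument the paper relies on: the paper simply defers to Lemma 4.9 of Bencz\'ur and Karger, whose proof is the same charging scheme you give --- each call contributes $\abs{E'_v}/k_v \le 4(c_v-1)$ by the definition of a $2k_v$-partition, and the component counts telescope over the recursion tree to $O(n)$. Your write-up just makes that cited argument self-contained, including the termination and ``stuttering-call'' bookkeeping.
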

\begin{proof}
The proof is the same as the proof of Lemma 4.9 in \cite{BK}.
\end{proof} 

\begin{lemma}
ConnectivityEstimation runs in $O(m\log^2n)$ time on a graph with polynomial weights.
\end{lemma}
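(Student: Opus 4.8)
The plan is to bound the recursion depth and the work done at each level of the recursion separately, and then multiply.

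\textbf{Recursion depth.} First I would show the recursion tree has depth $O(\log n)$. The parameter passed to ConnectivityEstimation doubles with each level: the top-level call is ConnectivityEstimation($G$,$1$), and a call with parameter $k$ spawns recursive calls with parameter $2k$. A call with parameter $k$ invokes Partition($H$,$2k$), which returns a $2k$-partition $E'$ of $H$; by definition $E'$ contains every $(2k+1)$-light edge of $H$, i.e.\ every edge $e$ with $k_e \leq 2k$. Since the weights are polynomially bounded, every edge connectivity in $G$ satisfies $k_e \leq u(\delta(\set{s})) = \poly(n)$, and deleting edges and vertices cannot increase any local edge connectivity, so the same bound holds in every subgraph arising in the recursion. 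Hence once $2k$ exceeds $\max_e k_e = \poly(n)$, the set $E'$ equals $E(H)$, the graph $H-E'$ has no nontrivial component, and the recursion terminates. As $k$ starts at $1$ and doubles each level, this happens after $O(\log \poly(n)) = O(\log n)$ levels.

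\textbf{Work per level.} Next I would observe that the subgraphs processed at any fixed level of the recursion are pairwise vertex-disjoint, hence edge-disjoint. This follows by induction: the children of a call on $H$ are the nontrivial connected components of $H-E'$, which are vertex-disjoint and use only vertices and edges of $H$; combined with the inductive hypothesis that the calls at the previous level are on vertex-disjoint subgraphs of $G$, all calls at the next level are on vertex-disjoint subgraphs of $G$. Consequently the total number of edges (and vertices) over all subgraphs at one level is at most $m$ (at most $n$). On a subgraph with $m'$ edges and $n'\leq n$ vertices, the call runs Partition once, which by Lemma~\ref{alg:partition} costs $O(m'\log n')=O(m'\log n)$ for weighted graphs, labels the edges of $E'$ in $O(m')$ time, and finds the connected components of $H-E'$ in $O(m'+n')$ time; since each nontrivial component has at least one edge, there are at most $m$ such subgraphs at a level. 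Summing $O(m'\log n)$ over all subgraphs at a level, using $\sum m' \leq m$, gives $O(m\log n)$ work per level.

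\textbf{Combining.} Multiplying the $O(\log n)$ depth bound by the $O(m\log n)$ per-level bound yields the claimed $O(m\log^2 n)$ running time.

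I expect the recursion-depth argument to be the main obstacle: one must verify that the geometric growth of the parameter $k$ genuinely forces the recursion to bottom out, which relies both on the polynomial-weight hypothesis (to bound $\max_e k_e$) and on the monotonicity of edge connectivity under edge/vertex deletion (so that this bound propagates to every recursive subgraph). The per-level accounting, by contrast, is the same disjointness bookkeeping used in the analysis of the Nagamochi--Ibaraki forests and of Bencz\'ur and Karger's Estimation procedure, so it should go through essentially verbatim.
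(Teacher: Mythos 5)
Your proof is correct and follows essentially the same route as the paper: bound the recursion depth by $O(\log n)$ using the polynomial-weight bound on the maximum connectivity, note that the subgraphs at each level contain at most $m$ edges in total, and charge $O(m\log n)$ per level to Partition via Lemma~\ref{alg:partition}. The paper's proof is just a terser statement of this same argument, so your additional detail (the doubling of $k$, the disjointness induction) simply fills in what the paper leaves implicit.
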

\begin{proof}
For a graph with polynomial weights, 
the maximum connectivity is bounded by some fixed polynomial,
so there are at most $O(\log n)$ levels of recursion.
The total number of edges in all the input graphs to Partition at each level of recursion
is at most $m$.
By Lemma \ref{alg:partition}, each level of computation takes $O(m\log n)$ time. 
\end{proof}

Suppose we sample as described in \Theorem{mainthm}.
By \Theorem{mainthm} and Lemma \ref{lem:sumKappa},
the size of the sparsifier is $O(n\log^2n/\epsilon^2)$.
The time needed for finding $\kappa_e$'s is $O(m\log^2n)$
and the time needed for finding the weight of every edge is $O(n\log^4n/\epsilon^4)$.
Therefore, the total running time is $O(m\log^2n+n\log^4n/\epsilon^4)$.

\subsection{Finding $O(n\log n)$-size sparsifiers for graphs with polynomial weights}

In \cite{BK}, Bencz\'ur and Karger presented an algorithm that finds a sparsifier of size 
$O(n\log n/\epsilon^2)$ for graphs with arbitrary weights in $O(m\log^3 n)$ time. 
This can be combined with the algorithms in the last two sections
to prove Theorem \ref{thm:alg3}.

Suppose we are given a graph $G$.
First we apply Theorem \ref{thm:alg1} to find a sparsifier $G'$
which approximately preserves all cuts of $G$ to within a multiplicative error of $1\pm \epsilon/4$.
$G'$ has size $O(n\log^3n/\epsilon^2)$ and
this takes $O(m+n\log^5n/\epsilon^4)$ time. 
Then we apply Theorem \ref{thm:alg2} to find a sparsifier $G''$
which is a $1\pm\epsilon/4$-approximation of $G'$.
$G''$ has size $O(n\log^2n/\epsilon^2)$
and this takes $O(n\log^3n\cdot \log^2n/\epsilon^2+n\log^4 n/\epsilon^4)
=O(n\log^5n/\epsilon^4)$ time.
Finally, we use Bencz\'ur and Karger's algorithm to 
obtain a sparsifier $G'''$ that is a $1\pm \epsilon/4$-approximation of $G''$.
$G'''$ has size $O(n\log n/\epsilon^2)$ and
this takes $O(n\log^2n/\epsilon^2 \cdot \log^3 n)=O(n\log^5n/\epsilon^2)$ time. 

Note that $G'''$ approximately preserves all cuts of $G$ 
to within a multiplicative error of $1\pm \epsilon$.
The total running time is $O(m+n\log^5n/\epsilon^4)$.

\end{document}